\newtheorem{lemma}{Lemma}
\newtheorem{proposition}{Proposition}
\newtheorem{theorem}{Theorem}
\theoremstyle{definition}
\newtheorem{defn}{Definition}
\newcommand{\eqnref}[1]{Eq.~\eqref{#1}}
\renewcommand{\vec}{\boldsymbol}
\newcommand{\vt}{{\vec{\theta}}}
\newcommand{\vw}{{\vec{\omega}}}
\newcommand{\va}{{\vec{n}}}
\newcommand{\ket}[1]{|#1\rangle}
\newcommand{\bra}[1]{\langle #1|}
\DeclareMathOperator{\Tr}{Tr}
\begin{document}
\title{Long-lived interacting phases of matter protected by multiple time-translation symmetries in quasiperiodically-driven systems}
\author{Dominic V.~Else}
\email{d\_else@mit.edu}
\affiliation{Department  of  Physics,  Massachusetts  Institute  of  Technology,  Cambridge, MA 02139, USA}
\author{Wen Wei Ho}
\email{wenweiho@fas.harvard.edu}
\affiliation{Department of Physics, Harvard University, Cambridge, MA 02138, USA}
\author{Philipp T.~Dumitrescu}
\email{pdumitrescu@flatironinstitute.org}
\affiliation{Center for Computational Quantum Physics, Flatiron Institute, 162 5th Avenue, New York, NY 10010,  USA}
\date{\today} 
\begin{abstract}
We show how a large family of interacting nonequilibrium phases of matter can arise from the presence of multiple time-translation symmetries, which occur by quasiperiodically driving an isolated quantum many-body system with two or more incommensurate frequencies. These phases are fundamentally different from those realizable in time-independent or periodically-driven (Floquet) settings. Focusing on high-frequency drives with smooth time-dependence, we rigorously establish general conditions for which these phases are stable in a parametrically long-lived `preheating' regime. We develop a formalism to analyze the effect of the multiple time-translation symmetries on the dynamics of the system, which we use to classify and construct explicit examples of the emergent phases. In particular, we discuss time quasi-crystals which spontaneously break the time-translation symmetries, as well as time-translation symmetry protected topological phases.
\end{abstract}
\maketitle

\section{Introduction}

Many-body quantum systems give rise to a vast array of interesting phases of matter.
The last decade has seen a dramatic expansion and refinement in our understanding of the landscape of such phases~\cite{Hasan_1002,Qi_1008,Senthil_1405,Wen_1610}.
Recently, it was understood how time-translational symmetry (TTS) itself can give rise to and protect intrinsically out-of-equilibrium phases of matter in isolated quantum systems~\cite{Eckardt_1606,Moessner_1701,Else_1905,Harper_1905}. 
Arguably the simplest example is the discrete time crystal~\cite{Khemani_1508,vonKeyserlingk_1602_b,Else_1603,Else_1607,Yao_1608,Ho_1703,Else_1905}, characterized by the spontaneous breaking of the discrete TTS of a periodic (Floquet) drive.
This is manifested in physical observables exhibiting robust, long-lived oscillations at an integer-multiple of the base driving period. 
Experimental signatures of this behavior have been reported in various platforms~\cite{Zhang_1609,Choi_1610,Rovny_1802}. 
Going beyond discrete time crystals, a large number of other Floquet phases in which TTS plays an essential role, including topological phases beyond the equilibrium classification, have also been discussed.

The richness of Floquet phases naturally raises the questions: are there fundamentally different nonequilibrium interacting phases beyond Floquet, which are not dynamically engineered analogs of static phases? What is the role of TTS in characterizing these phases?

Apart from theoretical interest, these questions are placed upon us by the dramatic experimental advances in controlling and manipulating isolated quantum many-body systems, such as cold atoms~\cite{Bloch_0704}, trapped ions~\cite{Leibfried_2003,Duan_2010,Blatt_2012}, nitrogen-vacancy centers~\cite{Doherty_1302,Schirhagl_2014}, and superconducting qubits~\cite{Kelly_1411,Roushan_1709}. These systems provide a natural platform to realize dynamical protocols, allowing us to systematically study physics in out-of-equilibrium settings, including thermalization and equilibration.  Classifying nonequilibrium phases tells us exactly what long-time, dynamical collective behaviors are possible and the universal features defining them.

Driven interacting systems are, however, generically expected to heat up to a featureless infinite temperature state due to a lack of energy conservation~\cite{Deutsch_1991,Srednicki_9403,DAlessio_1402,Lazarides_1403,Ponte_1403}. Thus, to meaningfully define phases of matter in such settings, systems must be protected against heating, at least for some long timescale. For Floquet systems, this challenge can be overcome by applying high-frequency drives leading to exponentially long-lived prethermal regimes~\cite{Abanin_1507,Kuwahara_1508,Mori_1509,Abanin_1509,Abanin_1510,Weidinger_1609} or by applying strong disorder leading to many-body localization (MBL)~\cite{Ponte_1403,Lazarides_1410, Abanin_1412}. Generalizing these ideas to more generic driving scenarios remains an  important open question. %

In this paper, we consider interacting quantum many-body systems subject to a quasiperiodic drive that consists of several incommensurate frequencies and is smooth in time.
We rigorously show that under such driving scenarios, the system is protected at high driving frequencies from heating for a parametrically long time, giving rise to a so-called `preheating' regime. The heating time scales as a stretched exponential of the ratio of the drive frequency to local coupling strengths. We demonstrate through a recursive construction that there is an effective static Hamiltonian governing time-evolution in the preheating regime, which generalizes the Floquet analysis of~\cite{Abanin_1510,Else_1607} to a large class of new dynamical systems.

The presence of a preheating regime  in quasiperiodically-driven systems opens up an avenue to realize novel long-lived, nonequilibrium phases of matter. 
We provide a set of general driving conditions to realize these phases and discuss how they are distinguished by a notion of \emph{multiple} time-translation symmetries (TTSes) of the drives; thus, they are fundamentally different from those in static or Floquet settings. In particular, we classify two exemplars of such phases: the discrete time quasi-crystal (DTQC), which spontaneously breaks the multiple TTSes, as well as quasiperiodic symmetry-protected and symmetry-enriched topological phases, which are protected by them.
Our results showcase the richness of the landscape of quasiperiodically-driven phases, and excitingly opens up new directions in the rapidly developing field of nonequilibrium quantum matter.

Before we proceed, let us note that the study of the dynamics of quantum systems under quasiperiodic driving has a venerable history, encompassing diverse applications from experiments in chemistry and physics, to the basic structure of first-order differential equations~\cite{Ho_1983,Luck_1988,Jauslin_1991,Jauslin_1992,Blekher_1992,Feudel_1995,Casati_1989,Jorba_1992,Bambusi_2001,Gentile_2002,Chu_2004,Gommers_0610,Chabe_0709,Lemarie_1005,Cubero_1309,Verdeny_2016,Nandy_1701,Cubero_1806,Nandy_1810,Ray_1907}. 
Interesting dynamical behavior related to topology in few-body or non-interacting scenarios have also been reported~\cite{Martin_1612,Peng_1805,Crowley_1808,Nathan_1811,Crowley_1908}.
Many-body quasiperiodically-driven quantum systems with interactions have received increasing attention comparatively recently~\cite{Dumitrescu_1708,Giergiel_1807,Zhao_1906}. Our approach is distinguished from previous studies in that we explicitly establish the stability of the nonequilibrium phases we discuss, by rigorously providing a bound on their lifetimes.
We additionally demonstrate the robustness of their universal properties against small changes in the driving protocol, which justifies their characterization as `phases of matter'.%

The outline of the rest of this paper is as follows. In Section \ref{sec:Overview}, we summarize our main results on establishing a long preheating regime in quasiperiodically-driven systems, in which one can discuss phases of matter. 
In Section \ref{sec:emergentsymmetries}, we introduce the notion of a ``frame-twisted high-frequency limit'', which will allow us to find new phases of matter with no static or periodically driven analogs. 
We will show how TTSes act in this regime and make this precise by defining ``twisted time-translation symmetries''. 
In Section \ref{sec:tqc}, we discuss spontaneous symmetry breaking for the multiple TTSes, which lead to discrete time quasi-crystal phases.
In Section \ref{sec:TopologicalPhase}, we define and classify topological phases protected by the multiple TTSes.
In Section \ref{sec:heating_estimate}, we return to the stability of the prethermal regime and show how the scaling of the heating time with frequency can be intuitively understood in terms of simple linear response arguments.
In Section \ref{sec:proof}, we state and sketch the proof of our rigorous results on the heating bounds and description of the dynamics; technical details are relegated to Appendix \ref{appendix:proof}. Finally, in Section \ref{sec:extensions} we discuss various extensions and future directions, and we conclude in Section \ref{sec:conclusion}.

\emph{Remark on notation:}~the term `time quasi-crystal' (TQC) has been used for systems that show a quasiperiodic response, arising from either a quasiperiodic~\cite{Dumitrescu_1708} or a periodic drive~\cite{Autti_1712, Pizzi_1907}. In this manuscript, we will restrict use of the term time quasi-crystal to the first sense, where a quasiperiodic drive gives rise to a  response with a different quasiperiodic pattern (see Sec.~\ref{sec:tqc}).

\tableofcontents

\section{Overview: Main ideas and key results}\label{sec:Overview}

\subsection{Multiple time-translational symmetries in quasiperiodically-driven systems}
\label{sec:intro_qp_tts}

In this paper, we show the existence of long-lived nonequilibrium phases of matter protected by multiple time-translational symmetries (TTSes) in quasiperiodically-driven systems, which are defined as follows. 
Consider  an at-least piecewise continuous Hamiltonian $H(\vec\theta)$ parametrized by the $m$-dimensional ``standard'' torus $\mathbb{T}^m \ni \vec\theta =(\theta_1,\cdots,\theta_m)$, which is $2\pi$-periodic in each angle $\theta_i$. 
Additionally, let us pick a vector of rationally independent frequencies $\vec\omega = (\omega_1, \cdots, \omega_m)$, so that  $\vec{n}\cdot\vec{\omega} \neq 0$ for any non-zero integer vector $\vec{n} \in \mathbb{Z}^m$.  
When $m=2$, it suffices to choose the ratio $\omega_2/\omega_1$ to be an irrational number, such as the golden ratio $\omega_2/\omega_1 = \varphi = ({1+\sqrt{5}})/{2}$.
The dynamics of the system under the time-dependent Hamiltonian
\begin{align}\label{eq:hambasic}
H(t) = H(\vec\omega t + \vec\theta_0)
\end{align}
is time-quasiperiodic for $m \geq 2$, and constitutes a quasiperiodic drive. Here, $t$ is the physical `single' time and $\vec\theta_0 \in \mathbb{T}^m$ are some fixed initial angles (see Fig.~\ref{fig:TorusCutProj}). The class of drives of \eqnref{eq:hambasic} encompasses Floquet driving as the case $m=1$, which will enable us to directly compare our analysis to previous work.  Note that throughout the paper, we will use the same symbol -- e.g.~$H$ above -- to refer to the same physical quantity viewed either as a function on the torus, $H(\vec{\theta})$, or as a function of single time, $H(t)$.  
 As in \eqnref{eq:hambasic}, the single time function is obtained by evaluating the torus function along the particular trajectory $\vec\theta(t) = \vec\omega t + \vec\theta_0$.
While these  are technically different mathematical functions, they can be distinguished by their scalar $t$ or vector $\vec{\theta}$ arguments.

At first glance, it is puzzling how one could obtain phases protected by TTSes in quasiperiodically-driven systems. After all, the incommensurate nature of the driving frequencies implies that the Hamiltonian \eqnref{eq:hambasic} has not even a single time-translational symmetry (there is no period $T \neq 0$ for which $H(t) = H(t+T)$ for all $t$), let alone multiple TTSes. 
However, $H(t)$ derives from an underlying $H(\vec\theta)$ on $\mathbb{T}^m$ through \eqnref{eq:hambasic}, which is symmetric under translations $\vec\theta \to \vec\theta + \vec\tau$. Here the translation vectors belong to a lattice $\vec\tau \in \mathcal{L} = 2\pi \mathbb{Z}^m$ generated by $m$ independent symmetries. It is thus conceivable that these symmetries  have meaningful and nontrivial implications for the single-time system.
Since for $m = 1$ this is nothing but the time-translation symmetry of a Floquet system, we will still refer to these symmetries as ``time-translation symmetries'' for quasiperiodically-driven systems ($m \geq 2$).

\begin{figure}[t]
\includegraphics[width=\columnwidth]{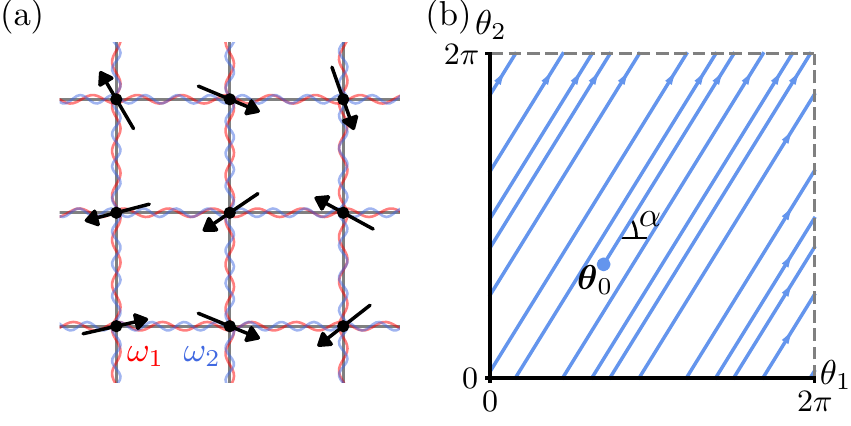}
\caption{(a) Cartoon of an ensemble of spins that is quasiperiodically-driven. (b) A time-quasiperiodic Hamiltonian $H(t)$ arises from evaluating a Hamiltonian  $H(\vec\theta)$ defined on the $m$-dimensional torus $\mathbb{T}^m \ni \vec\theta$  on the trajectory $\vec\theta(t) = \vec\omega t + \vec\theta_0$ (blue arrows). As the frequency vector $\vec\omega$ is a set of rationally independent frequencies, the trajectory never returns to itself, but covers the torus uniformly.  Shown here is the case $m = 2$, with $\tan\alpha = \omega_2/\omega_1$ irrational.
}
\label{fig:TorusCutProj} 
\end{figure}

In what follows, we will make the above statements precise by interpreting the consequences of the multiple TTSes 
in a certain class of physical systems. Specifically, we consider  quantum many-body systems   defined on lattices in arbitrary spatial dimensions with bounded local Hilbert spaces (i.e.~spins or fermions), and which respect a sense of locality -- the interaction strength decays sufficiently fast with distance. In particular, we allow for a Hamiltonian having interactions with amplitude that are at least exponentially-decaying with distance (termed `quasilocal'), see Sec.~\ref{sec:proof}.
We will show that:
\begin{enumerate}
\item For such strongly interacting many-body systems under some non-fine-tuned quasiperiodic driving conditions, the multiple time-translation symmetries of $H(\vec\theta)$ give rise to an actual symmetry of the effective time-independent Hamiltonian that describes the dynamics in a long-lived preheating regime. This enables
 the existence of novel nonequilibrium phases of matter protected by these symmetries.
\item The classification of quasiperiodically-driven many-body phases of matter
-- both spontaneous symmetry breaking and topological phases -- is the same as the classification of equilibrium phases with a symmetry group extended by $\mathbb{Z}^{\times m}$. This is a direct generalization of  the $m=1$ Floquet results of Ref.~\cite{Else_1602}.
\end{enumerate}

\subsection{Long lifetimes in quasiperiodically-driven systems}
\label{subsec:longlifetimes}

Owing to the lack of energy conservation, a generic ergodic interacting driven system is expected to heat to a featureless infinite-temperature state, where symmetries act trivially and a discussion of phases of matter is moot. 
Therefore, before we can even discuss new phases realizable with multiple TTSes, we must 
establish that there exist suitable quasiperiodic driving conditions where such deleterious heating is controlled, at least for some parametrically long time.

In Floquet systems ($m = 1$), the conditions needed to achieve this are relatively mild.
Energy absorption or emission  between the system and the drive can only take place in integer multiples of the driving frequency, i.e.~$\Delta E = n \omega$ for integer $n$.
 By suppressing resonances between energy eigenstates connected by such discrete energy levels, heating can be slow.
Both Floquet-MBL and Floquet prethermalization involve suppressing heating in a such a way, though through different physical mechanisms. The former uses strong disorder to directly curtail the probability of local resonances~\cite{Ponte_1403, Lazarides_1410,Abanin_1412}.  The latter entails driving at such high frequencies compared to local energy scales $J \ll \omega$ that the system can only absorb the large drive quanta $n\omega$ by performing a multiple-spin rearrangement. This rate is heavily suppressed giving rise to a long heating timescale $t_* \sim e^{\text{const.}\omega/J}$~\cite{Abanin_1507}.
Other scenarios where heating is slow that result not from disorder or high-frequency driving have also been considered, see~\cite{Lindner_1603,Haldar_1803,Haldar_1909}.

In   quasiperiodically-driven systems ($m \geq 2$), by contrast, energy absorption or emission occurs in units of $\Delta E_{\vec{n}} = \vec{\omega} \cdot \vec{n}$ for any integer vector $\vec{n} \in \mathbb{Z}^m$. 
As the frequencies $\vec \omega = (\omega_1,\cdots,\omega_m)$ are rationally independent, the set of all possible such quanta $\Delta E_{\vec{n}}$ is dense on the real line. Superficially, it seems impossible to avoid immediate heating.

A more careful consideration shows however that this is not necessarily an insurmountable problem.
Consider, as an example, a static Hamiltonian $H_0$ weakly driven by a quasiperiodic perturbation $V(t) = V(\vec{\omega} t + \vec{\theta}_0)$. Expanding $V(\vec{\theta})$ in a Fourier series
\begin{equation}
V(\vec{\theta}) = \sum_{\vec{n} \in \mathbb{Z}^m }  V_{\vec{n}} e^{i \vec{n} \cdot \vec{\theta}},
\end{equation}
one sees that $V_{\vec{n}}$ induces transitions between energy levels of $H_0$ separated by energies $\Delta E_{\vec{n}}$ in linear response theory.
The set of all $\Delta E_{\vec{n}}$ with $|\vec{n}|$ below some cutoff becomes ever more closely spaced as the cutoff increases.
Nonetheless, as long as the amplitude $\| V_{\vec{n}} \|$ decays fast enough with $|\vec n|$ as compared to this spacing, resonances will not proliferate at large $|\vec n|$.
This observation suggests a restriction of $V(\vec\theta)$ to be  `sufficiently' smooth on the torus, which in turn translates to smooth drives in time. Resonances arising from small $|\vec{n}|$ processes also need to be suppressed -- but this should be achievable with the same mechanisms as in the Floquet case (strong disorder or high-frequency driving).

One of the key aspects of our present work is to make this plausible stability statement concrete.
Specifically, we will consider the case corresponding to a quasiperiodically-driven many-body Hamiltonian $H(\vec{\omega} t + \vec\theta_0)$ under two conditions. First, that it is smooth in time, in the sense that 
the amplitude of the Fourier coefficients $H_{\vec{n}}$ of the driving Hamiltonian decay as $\| H_{\vec{n}} \| \lesssim e^{ - \text{const.} |\vec n|}$ at large $\vert\vec{n}\vert$.
Second, that the driving frequencies $\omega_i$ are large compared to any local energy scales $J$ of the Hamiltonian.
Under these conditions, 
we will show that the heating time $t_*$ is rigorously bounded for any $\epsilon > 0$, and for all except a set of measure zero choices of frequency vectors $\vec\omega$ by 
\begin{align}
t_* \geq \frac{C'}{J} \exp\left[C \left(\frac{\omega}{J}\right)^{1/(m + \epsilon )}\right].
\label{eqn:heating_t}
\end{align}
Here  $\omega = |\vec\omega|$ is the norm of the driving frequency, $C, C'$ are  dimensionless numbers depending on the number-theoretic properties of the irrational ratios $\omega_i/\omega_j$, and $m$ is the number of incommensurate frequencies.
Thus, keeping the ratios $\omega_i/\omega_j$ fixed, the bound on heating time $t_*$ follows a simple stretched exponential in frequency dependence.
While  
this
 can be intuitively understood within linear response theory (Sec.~\ref{sec:heating_estimate}), we will prove the bound \eqnref{eqn:heating_t} using a recursive construction beyond linear response (Sec.~\ref{sec:proof}). Note that for non-smooth drives, where $V_{\vec{n}}$ decays slower than exponentially at large $\vert\vec{n}\vert$, the heating time $t_*$ scales with a different functional form. For a power law decay of $V_{\vec{n}}$, such as for step-drives, we expect that $t_*$ scales as a power law in $\omega/J$ (see Sec.~\ref{subsec:nonsmooth}).  While Eq.~\eqref{eqn:heating_t} is analogous to the high-frequency heating bound known for Floquet systems, the heating time in those systems is always a simple exponential in $\omega / J$ without restrictions on the drive smoothness. 

\subsection{Description of preheating dynamics}
\label{sec:preheating_dynamics}

The existence of a long timescale \eqnref{eqn:heating_t} implies a long-lived preheating regime and opens up the possibility to define phases of matter in this time interval. How can we concretely describe the dynamics, and eventually characterize phases, in the preheating regime?

Let us briefly recall the Floquet scenario ($m=1$), where generally the dynamics in a preheating regime is approximately governed by an effective, static, quasilocal Hamiltonian $D$.
More precisely, the time-evolution operator $U(t) = \mathcal{T} \exp[-i \int_{0}^{t} H(t') dt']$ can be written as 
\begin{equation}
\label{eqn:Uapprox}
U(t) \approx P(t) e^{- i D t} P^{\dagger}(0),
\end{equation}
where $P(t) = P(t+T)$ is a unitary change of frame which is periodic in time, and the approximate equality reflects an omission of small local terms in the Hamiltonian that do not affect the dynamics up to the heating time $t_*$ \cite{Kuwahara_1508,Mori_1509, Abanin_1509, Abanin_1510}. 
A quantum state's dynamical evolution can therefore be understood as comprised of two parts: time-evolution generated by the static Hamiltonian $D$, and an additional `micromotion' governed by $P(t)$.
If we only consider the state of the system at stroboscopic times, i.e.~integer multiples of the driving period, then the dynamics is generated by the Floquet operator $U_F := U(T)$. This operator can be written [if we choose $P(0) = P(T) = \mathbb{I}$] as
$U_F \approx e^{-i D T}$, in which case the study of the dynamical evolution of the system up to time $t_*$ entails studying the eigenstates of the   Hamiltonian $D$.

We strongly emphasize here the importance of $D$ being quasilocal.  In fact, the 
Floquet-Bloch theorem asserts that the decomposition \eqnref{eqn:Uapprox} exists with an exact equality if $D$ is replaced by the Floquet Hamiltonian $H_F$. However, $H_F$ will be highly non-local in a generic ergodic many-body system -- it must after all describe the eventual heating to an infinite-temperature state --  and therefore is not very insightful to use when studying the preheating regime, as compared to $D$.

One scenario where a long heating time emerges and the approximate decomposition \eqref{eqn:Uapprox} holds is in the limit of high-frequency driving, in which case $t_* \gtrsim e^{\text{const.}\omega/J}$.
The small ratio of the local energy scale to the driving frequency $J/\omega$ naturally enables schemes for an order-by-order expansion of  $D$ and $P$. For example, the commonly-used ``Floquet-Magnus'' expansion \cite{Bukov_1407, Eckardt_1606} with $P(0) = P(T) = \mathbb{I}$ gives at lowest orders
\begin{align} 
D^{(0)} &= \frac{1}{T} \int_0^T\!\! dt'\, H(t') , \nonumber  \\
D^{(1)} &= \frac{1}{2T}\int_0^T \!\!dt' \int_0^{t'} \!\!dt''\, [H(t'),H(t'')],
\label{eqn:MagnusD_original}
\end{align}
and is generally an asymptotic series.
Refs.~\cite{Kuwahara_1508,Mori_1509} showed that if truncated at some optimal order, \eqnref{eqn:Uapprox} is satisfied with an error $\sim 1/t_* = O(e^{-\text{const.}\omega/J})$. Ref.~\cite{Abanin_1509} also constructed an effective static Hamiltonian that provably approximates the dynamics until the same $t_*$, although it was not directly expressed in terms of the Floquet-Magnus expansion.

We now return to quasiperiodically-driven systems ($m \geq 2$). For the smooth high-frequency drives that we consider, we will prove that a similar decomposition of the unitary time evolution operator $U(t)$ as \eqnref{eqn:Uapprox} exists,
\begin{align}
U(t) \approx  P(\vec{\omega} t + \vec{\theta_0}) e^{-i D t} P^\dagger(\vec\theta_0).
\label{eqn:Uapprox2}
\end{align}
A quantum state's dynamics is again effectively comprised of time-evolution by some static Hamiltonian $D$ and a micromotion given by some unitary time-quasiperiodic change of  frame $P(t) := P(\vec{\omega} t + \vec{\theta_0})$ with underlying $P(\vec{\theta})$ smooth on the torus.
In Sec.~\ref{sec:proof}, we will demonstrate how to construct the effective Hamiltonian $D$ and unitary $P$  through an iterative renormalization procedure of the driving Hamiltonian $H(t)$
that can be understood as a generalization of  the methods of Ref.~\cite{Abanin_1509}, as well as bound the optimal order to which the procedure should be carried out. 
This gives rise to an optimal $D$ such that the description \eqnref{eqn:Uapprox2} is valid at least for times $t \lesssim  t_*$, with $t_*$ satisfying \eqnref{eqn:heating_t}.

While it seems natural to assume that the decomposition \eqnref{eqn:Uapprox} 
carries over from the Floquet case to   quasiperiodically-driven systems,
this is far from obvious. It is known rigorously that the decomposition \eqnref{eqn:Uapprox2} with an exact equality (i.e.~Floquet-Bloch theorem) is not guaranteed in general quasiperiodic systems, there being obstructions to defining a generalized Floquet Hamiltonian~\cite{Jauslin_1991}.
To understand why one does expect \eqref{eqn:Uapprox2} to hold in the quasiperiodically-driven case with conditions given in Sec.~\ref{subsec:longlifetimes}, observe that the high-frequency assumption suggests that an  expansion analogous to the Floquet-Magnus expansion \eqnref{eqn:MagnusD_original} can be  written down. 
Representing $H(\vec{\theta})$ as a Fourier series   $H(\vec{\theta}) = \sum_{\vec{n}} H_{\vec{n}} e^{i\vec{n} \cdot \vec{\theta}}$ and assuming the form \eqnref{eqn:Uapprox2} with $P(t)$ quasiperiodic, one can perform a formal expansion in powers of the inverse norm of driving frequencies $1/\omega$ of the effective Hamiltonian $D = D^{(0)} + D^{(1)} + \cdots$ as well as the unitary $P(\vec\theta) = \exp(\Omega^{(1)}(\vec\theta) + \Omega^{(2)}(\vec\theta) + \cdots)$, whose leading order terms read (see Appendix \ref{appendix:FloquetMagnus})
\begin{align}
D^{(0)} &= H_{\vec{0}}  = \int_{\mathbb{T}^m} \frac{d^m\vec\theta}{(2\pi)^m} H(\vec\theta), \nonumber  \\
D^{(1)} &= \frac{1}{2} \sum_{\vec{n}} \frac{1}{\vec{\omega} \cdot \vec{n}} [H_{\vec{n}}, H_{-\vec{n}}]
\label{eqn:MagnusD}.
\end{align}
Here $\Omega^{(1)}(\vec\theta) = - \sum_{\vec n \neq \vec 0} {H_{\vec n}} e^{i \vec{n} \cdot \vec{\theta}} / (\vec{n} \cdot \vec{\omega})$.
However, while relatively simple to construct, even the low order terms in \eqnref{eqn:MagnusD} already signal a difficulty not present in the Floquet case: The denominator $\vec{n}\cdot \vec{\omega}$ can be arbitrarily small, leading to possible divergences and bringing into question the validity of the expansion. This is precisely the manifestation of the denseness of resonances discussed in Sec.~\ref{subsec:longlifetimes}. As before, we observe that this issue can potentially be circumvented if the size of the Fourier coefficients $H_{\vec n}$ decay sufficiently rapidly with $|\vec{n}|$, such as in the case of smooth driving, so that the small denominators are suppressed.
Note also that while similar to \eqnref{eqn:MagnusD_original}, the expansion of \eqnref{eqn:MagnusD} does not reduce to it upon setting $m=1$, as we explain in Appendix \ref{appendix:FloquetMagnus}. This point   will be important in the discussion of emergent symmetries.

A central contribution of our paper is to show how imposing the smoothness conditions on the drive indeed leads to a meaningful high-frequency expansion which can be used to construct an effective static Hamiltonian.
The expansion we develop, which is different from that of  \eqnref{eqn:MagnusD}, is given in Sec.~\ref{sec:proof}.

\subsection{Equilibration and steady states in the preheating regime}
\label{subsec:long_time_preheating}
 
Let us now describe the kind of dynamics and `steady states' one can expect in the preheating regime, given an effective, static, quasilocal Hamiltonian $D$ via \eqnref{eqn:Uapprox2}.

If $D$ is generic and non-integrable, one expects dynamics from a simple initial state $|\psi_0\rangle$ to lead to thermalization with respect to $D$, when viewed in the time-dependent frame defined by $P(t)$. That is, the system locally approaches an equilibrium distribution $\rho_\beta \propto e^{-\beta D}$, where the inverse temperature $\beta$ depends on the energy of the initial state as measured by $D$. Precisely, $\beta$ is obtained from the relation $\frac{1}{Z}\Tr(D e^{-\beta D}) = \langle \psi_0| P(\vec\theta_0) D P^\dagger(\vec\theta_0) |\psi_0\rangle$ where $Z = \Tr(e^{-\beta D})$.
This happens provided local relaxation timescales $t_r$ are much less than the heating timescale $t_*$, i.e.~$t_r \sim J^{-1} \ll t_*$, which always occurs for high-frequency drives.
As the system is expected to eventually thermalize to an infinite-temperature state, $\rho \propto \mathbb{I}$ for $t > t_*$, due to the corrections in \eqnref{eqn:Uapprox2} that cannot eventually be neglected, one refers to equilibration to $\rho_\beta$ in the preheating regime as \emph{prethermalization}. In particular, we can then talk about `prethermal quasiperiodically-driven phases of matter' in this steady state. Of course, one must remember to include the effects of the time-quasiperiodic unitary $P(t)$ upon moving back to the laboratory frame. However, as $P(t)$ when constructed in the high-frequency limit is perturbatively close to the identity, this simply endows the steady state $\rho_\beta$ with additional micromotion of small amplitude $\sim J/\omega$; see Fig.~\ref{fig:prethermalCartoon}.

\begin{figure}[t]
\includegraphics[width=\columnwidth]{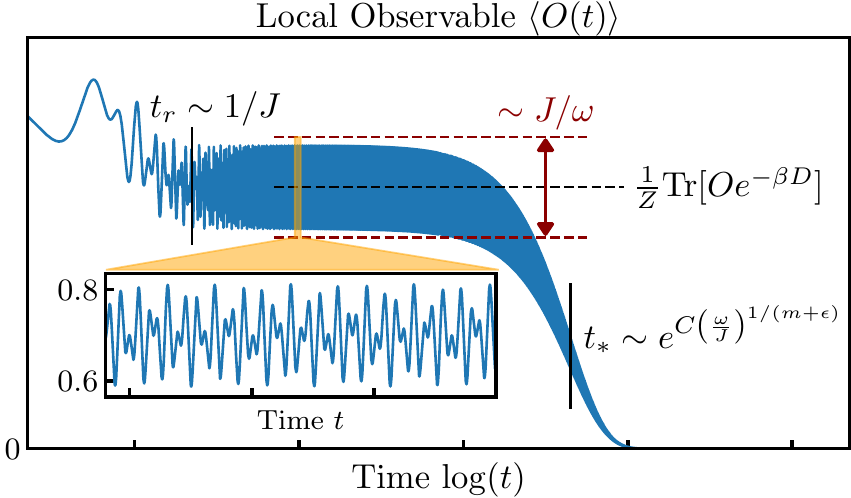}
\caption{\label{fig:prethermalCartoon} 
Prethermalization in quasiperiodically-driven systems at high frequencies. Shown is a cartoon of the dynamics of a generic traceless local observable $\langle O(t) \rangle$ under time-evolution by \eqnref{eqn:Uapprox2}. There are three regimes: First, a brief transient regime, where the local observable relaxes on a short timescale $t_r \sim 1/J$, where $J$ is the local energy scale of the system.  Second, a prethermalization regime, where the system has locally equilibrated to a thermal ensemble of an effective static Hamiltonian $D$, when viewed in the rotating frame defined by $P(t)$. The evolution $\langle O(t) \rangle$ in the laboratory frame shows a plateau around the prethermal value (black dashed line), with small time-quasiperiodic oscillations of amplitude $\sim J/\omega$ (red dashed lines); here $\omega = \vert \vec{\omega}\vert$ is the norm of driving frequencies. This lasts up to the long heating time $t_* \sim \exp[{C (\omega/J)^{1/(m+\epsilon)} }]$. Third, a final  featureless infinite-temperature state, reached after the system has fully heated. (Inset) A zoom-in on the orange shaded region in the prethermal plateau.}
\end{figure} 

We can also consider the case where $D$ is non-ergodic, such as when it is  highly disordered leading to many-body localization (MBL). In this case, there is a complete set of quasilocal integrals of motion $\tau_i^z$ (`$l$-bits') satisfying $[\tau_i^z,D]=0$. The system evolving under $D$, when viewed in the rotating frame $P(t)$, will not thermalize, but instead exhibit MBL phenomenology. This includes logarithmic entanglement growth, initial state memory and localization-protected quantum order~\cite{Nandkishore_1404,Alet_1711,Parameswaran_1801,Abanin_1804}.

In the laboratory frame, the dynamics is rather more interesting. Owing to the rotating change of frame, the $l$-bits are not constants of motion, but rather always evolve in time.
Despite this motion, there is a sense in which the system is still localized: Consider the dressed $l$-bit operator $\widetilde{\tau}^z_i := P(0) \tau^z_i P(0)^\dagger$, which is localized near site $i$. 
Under ``reverse evolution'' defined as
\begin{align}
\widetilde{\tau}_i^z(t)_R = U(t) \widetilde{\tau}_i^z U(t)^{\dagger},
\end{align}
we obtain using \eqnref{eqn:Uapprox2} and $[\tau_i^z,D]=0$  that
\begin{align}
\widetilde{\tau}^z_i(t)_R \approx P(t) e^{-i t D}  \tau_i^z   e^{i t D} P(t)^\dagger = P(t)\tau^z_i P(t)^\dagger.
\end{align}
This means that motion of the $\widetilde{\tau}_i^z(t)_R$ is time-quasiperiodic, which in turn implies that there is an infinite sequence in time whereby the operator returns arbitrarily close (but never exactly) to the initial operator $\widetilde{\tau}_i^z$. 
Why should the ``reverse evolution'' of the dressed $l$-bit $\tilde{\tau}_i^z$ be a  useful concept?
Consider the \emph{forward} Heisenberg time evolution of an operator $O_i$ localized near site $i$ via $O_i(t) := U(t)^\dagger O_i U(t)$ and ask how this operator spreads over time in space. In an ergodic system, we expect that any local operator spreads generically ballistically, or diffusively at the slowest. 
In our present case, computing the overlap of $O_i(t)$ with the localized dressed $l$-bit $\widetilde{\tau}_i^z$ 
\begin{align}
&  \Tr( O_i(t) \widetilde{\tau}_i^z) = \Tr( O_i \widetilde{\tau}_i^z(t)_R ) 
\end{align}
reveals that this overlap varies quasiperiodically in time rather than decaying to zero. We can interpret this as the statement that some fraction of the operator $O_i(t)$ remains localized near its origin, rather than being transported away.

What we have described is thus a new kind of dynamical localization that can be dubbed ``quasiperiodically-driven MBL'', although we have only shown that it is stable until the timescale $t_*$ bounded by \eqnref{eqn:heating_t}. Proving whether the quasiperiodically-driven MBL is stable to beyond this time, perhaps even forever, remains an interesting direction for future work.

In this paper, we will consider quasiperiodically-driven phases of matter realizable in one or the other of the scenarios described above: prethermalization or (stretched-exponentially long-lived) quasiperiodically-driven MBL.

\section{Emergent symmetries protected by multiple time translation symmetries} 
\label{sec:emergentsymmetries}

Having motivated quasiperiodically-driven systems and outlined their dynamics in suitable regimes, we now analyze what kind of new phases of matter can arise in these systems. As a first step, let us consider the scenario where a direct high-frequency drive is applied to a system.
This procedure is often referred to as `high-frequency Floquet engineering', as the drive is used to modify and control interactions of an underlying Hamiltonian. Indeed, the ground states of the effective static Hamiltonian $D$ that is generated in a high-frequency expansion can be different from those of the original undriven Hamiltonian~\cite{Kitagawa_1104, Grushin_1309, Meinert_1602, Eckardt_1606, Cooper_1803, Lee_1805}. 

However, from a phases of matter point of view, a direct high-frequency drive will not yield fundamentally new long-time collective behavior that is not already reproducible in some -- possibly complicated -- static system at equilibrium.
This is because a quantum state's evolution is effectively governed entirely by $D$ and never has any significant nontrivial micromotion during its time evolution. Precisely, this stems from the fact that the unitary frame transformation $P(t)$ in the description of the time evolution operator \eqnref{eqn:Uapprox2} is perturbatively close to the identity.
To uncover novel phases, especially those that are inherently out-of-equilibrium, we need to go beyond.

In order to do this, we generalize the idea of a ``frame-twisted high-frequency limit'', introduced in Ref.~\cite{Else_1607} for Floquet systems and reviewed in Sec.~\ref{sec:twisted_high_freq_Floquet}, to the quasiperiodically-driven scenario. This will be the context in which fundamentally new long-lived phases of matter can emerge. 
In order to analyze the manifestation of TTSes in this regime, we will introduce the notion of ``twisted time-translation symmetries'' (Sec.~\ref{sec:twistedTTSes}). This will allow us to analyze the quasiperiodic case 
but also gives a simpler perspective on the results in the Floquet case compared to the original constructions of Ref.~\cite{Else_1607}. Finally, in Sec.~\ref{sec:twisted_high_freq_QP}, we explain how to realize these twisted time-translation symmetries in a frame-twisted high-frequency limit in quasiperiodically driven systems.

\subsection{Review: Frame-twisted high-frequency limit in Floquet systems}
\label{sec:twisted_high_freq_Floquet}

For Floquet systems ($m=1$), Ref.~\cite{Else_1607} provided general periodic driving conditions which do give rise to fundamentally new non-equilibrium phases. We briefly review these here.

The main idea is to consider periodically-driven Hamiltonians that approach the high-frequency limit, but only when viewed in a certain rotating frame, a so-called `frame-twisted high-frequency' limit.
Consider a  many-body driven system with Hamiltonian of the form $H(t) = H_0(t) + V(t)$. Here $H_0(t) = H_0(t+T)$ is a sum of quasilocal terms, with associated time-evolution operator $U_0(t) = \mathcal{T} \exp[-i \int_{0}^{t} H_0(t') dt']$. Evolution under $H_0$ is taken to have the special property   $U_0(NT) = U_0(T)^N = X^N = \mathbb{I}$, for some positive integer $N > 1$ and an operator $X$, which is not itself the identity. The term $V(t)=V(t+T)$ describes interactions assumed to have  local energy scale $J \ll \omega/N$. 
Since $X$ is not perturbatively accessible from the identity, a strong drive is required to realize this evolution -- the local energy scale of $H_0$ is $\sim \omega/N$, and thus increases with $\omega$. One therefore cannot naively apply the high-frequency expansion \eqnref{eqn:MagnusD_original}.

In the rotating frame defined by the interaction picture of $H_0(t)$,  time-evolution is governed by the interaction Hamiltonian 
\begin{align}
H_\text{int}(t) = U_0^\dagger(t) V(t) U_0(t).
\label{eqn:Hint}
\end{align}
This is a quasilocal Hamiltonian which is still time-periodic, albeit with period $NT$. 
Since $J \ll \omega/N$, a high-frequency expansion can be meaningfully applied to it, and one can see how there is a long-lived preheating regime in this frame of reference.

However, the frame-twisted high frequency limit imposes a stronger condition than just long-lived preheating.  Specifically, as shown in Ref.~\cite{Else_1607}, in the laboratory frame the Floquet unitary $U_F := U(T)$ takes on the special structure
\begin{align}
U_F \approx \mathcal{V} (X e^{-i D T} ) \mathcal{V}^\dagger,
\label{eqn:prethermal_result}
\end{align}
where $D$ is a quasilocal Hamiltonian which additionally satisfies $[D,X] = 0$ identically. Here $\mathcal{V}$ is a time-independent quasilocal unitary that is perturbatively close to the identity $\mathbb{I}$.
The approximate equality reflects an omission of small time-dependent local terms whose effects only become relevant after times $t_* \sim O(e^{\text{const.} \omega/J})$.

The physical statement is that a system periodically driven under these conditions always has an emergent $\mathbb{Z}_N$ symmetry generated by $X$.
One can add small, potentially time-dependent perturbations to $H(t)$ as long as they respect the time-periodic nature of the drive, and the structure of \eqnref{eqn:prethermal_result} will be unchanged. The emergent symmetry is therefore robust and underpins the stability of inherently nonequilibrium Floquet phases of matter in many-body systems that can now emerge. 
From \eqnref{eqn:prethermal_result}, one sees that when observed at times $t$ that are integer multiples of $NT$ (i.e.~at times which are stroboscopic with respect to the longer period), the system, in the time-independent frame described by $\mathcal{V}$, settles into an equilibrium state of $D$ distinguished by the emergent $\mathbb{Z}_N$ symmetry.
However, when viewed after every time interval $T$ (the original period of the driving Hamiltonian), due to the  action of the symmetry operator $X$, the state of the system can transform nontrivially. This happens, for example, if the system spontaneously breaks the symmetry. 
Note that both the concepts of equilibration and spontaneous symmetry breaking are only sharply defined in a thermodynamically large system.
This additional periodic action is precisely what makes the long-time collective behavior of this system inherently out-of-equilibrium, and  the robustness of the phenomenology  justifies the terminology of them being called fundamentally nonequilibrium phases of matter.
We reiterate that this remarkable result is a consequence of the discrete TTS of the Floquet drive and guaranteed to happen with no additional symmetry requirements.

\subsection{Twisted time-translation symmetries and emergent symmetries}
\label{sec:twistedTTSes}

In quasiperiodically-driven $(m \geq 2)$ systems, it is natural to look for an expression of the form \eqnref{eqn:prethermal_result}. Since there is no single time-translation symmetry, however, there is no analog of the Floquet operator $U_F$ and the construction of Ref.~\cite{Else_1607} does not carry over.

Our key observation is that one can rederive the results of Ref.~\cite{Else_1607} for Floquet systems in a considerably simpler way that  does accord an extension to quasiperiodically-driven systems. This relies on realizing that the interaction Hamiltonian  $H_\text{int}(t)$ described in the previous section possesses a symmetry as a consequence of the TTS of the original laboratory frame Hamiltonian. We will refer to this symmetry of $H_{\mathrm{int}}(t)$ as a `twisted time-translation symmetry'.

Precisely, in the Floquet setting, we say that a time-periodic operator $O(t) = O(t+T)$ has a twisted TTS, if there is an integer $N > 1$ and a unitary operator ${g}$ satisfying ${g}^N = \mathbb{I}$, such that for all $t$
\begin{align}
{O}(t+\tilde{T}) = {g} {O}(t) {g}^\dagger,
\end{align}
where $\tilde{T} = T/N$.
In terms of the Fourier modes ${O}_n$ in $O(t) = \sum_n O_n e^{i n \omega t}$, the twisted-TTS states that ${g} {O}_n {g}^{\dagger} = e^{2\pi i n/N} {O}_n$.  
Note that $H_\text{int}(t)$ in \eqnref{eqn:Hint} has a twisted-TTS with unitary ${g} = X^\dagger$, provided we rescale time $t \mapsto t/N$ so that the periodicity of $H_\text{int}(t)$, originally $NT$, becomes $T$.

To gain some intuition as to why this concept is useful, suppose that we had a Hamiltonian $H(t)$ with a twisted-TTS and we were to construct the effective Hamiltonian $D$ from the high frequency expansion given by \eqnref{eqn:MagnusD} with $m = 1$.
One immediately sees from the action of twisted-TTS in Fourier space that $D$ constructed this way commutes with ${g}$ to all orders.
Additionally, it can be shown that the change of frame $P(t)$ will also inherit the same twisted-TTS as $H(t)$, i.e.~$P(t+\tilde{T}) = g P(t) g^\dagger$.
Note that the usual Floquet-Magnus expansion \eqnref{eqn:MagnusD_original} will not give this result, see Appendix \ref{appendix:FloquetMagnus}.
The physical conclusion is that  dynamics of a driven Hamiltonian with a twisted TTS can always be viewed in some frame as effectively governed by a time-independent Hamiltonian $D$ with an emergent $\mathbb{Z}_N$ internal symmetry, generated by ${g}$.

Applying these considerations to the rotating frame Hamiltonian $H_\text{int}(t)$ in \eqnref{eqn:Hint} to construct the effective Hamiltonian $D$ using \eqnref{eqn:MagnusD}, ones recovers the statement \eqnref{eqn:prethermal_result} of Ref.~\cite{Else_1607} in a transparent fashion: Since $U_F   = U_0(T) U_\text{int}(T) = X \mathcal{T}e^{- i \int_0^T H_\text{int}(t') dt'}$, we can write it as
\begin{align*}
U_F \approx X P(T) e^{-i D T} P(0)^\dagger = P(0) \left( X e^{-i D T} \right) P(0)^\dagger,
\end{align*}
with $[D,X] = 0$, and where we have used $X P(T) = P(0) X$.

The twisted-TTS concept immediately generalizes to quasiperiodically-driven systems ($m \geq 2$). Recall that a time-quasiperiodic operator ${O}(t) = {O}(\vec{\omega} t + \vec{\theta}_0)$ is derived from an operator ${O}(\vec{\theta})$ that is parameterized by a variable $\vec{\theta}$ living in a higher-dimensional space, where ${O}(\vec{\theta} + \vec{\tau}) = {O}(\vec{\theta})$ for any $\vec{\tau} \in \mathcal{L}= 2\pi \mathbb{Z}^m$.   Suppose, there is additionally some finite translation vector $\tilde{\vec{\tau}}$ and a unitary operator ${g}_{\tilde{\vec{\tau}}}$ satisfying ${g}_{\tilde{\vec{\tau}}}^N = \mathbb{I}$ for some integer $N>1$, such that 
\begin{align}
\label{eqn:twistedTTS}
{O}(\vec{\theta} + \tilde{\vec{\tau}}) = {g}_{\tilde{\vec{\tau}}} {O}(\vec{\theta}) {g}_{\tilde{\vec{\tau}}}^{\dagger}.
\end{align}
We then say that ${O}(\vec{\theta})$ has a ${g}_{\tilde{\vec{\tau}}}$-twisted time-translation symmetry. 
Note that as ${O}(\vec\theta + N \tilde{\vec{\tau}}) = {O}(\vec\theta)$,   $ N \tilde{\vec{\tau}} \in \mathcal{L}$.
In terms of Fourier modes $O_{\vec{n}}$, the twisted-TTS acts as ${g}_{\tilde{\vec{\tau}}} {O}_{\vec n} {g}_{\tilde{\vec{\tau}}}^{\dagger} = e^{i \vec{n} \cdot \vec{\tilde\tau}} {O}_{\vec n}$.
Furthermore, since ${O}(\vec\theta)$ is defined  on a torus with dimension $m\geq 2$, it can have \emph{multiple} independent twisted-TTSes corresponding to different translation vectors $\tilde{\vec{\tau}}$ and unitary operators ${g}_{\tilde{\vec{\tau}}}$. 

As in the Floquet case, for a Hamiltonian with   twisted-TTSes, the effective Hamiltonian $D$ constructed through \eqnref{eqn:MagnusD} manifestly commutes with $g_{\tilde{\vec{\tau}}}$. 
Also, $P(\vec\theta)$ will   inherit the same twisted-TTSes $P(\vec\theta + \vec{\tilde\tau}) = g_{\vec{\tilde\tau}} P(\vec\theta) g^\dagger_{\vec{\tilde\tau}}$. 
These properties will hold for other high-frequency expansions as well, like the one we develop in Sec.~\ref{sec:proof}.

\subsection{Frame-twisted high-frequency limit in quasiperiodically-driven systems}
\label{sec:twisted_high_freq_QP}

Although \eqnref{eqn:twistedTTS} seems like an obscure condition, analogous to the Floquet case, twisted TTSes can arise naturally in a frame-twisted high-frequency limit of a drive that does not need to satisfy additional symmetry constraints beyond time-quasiperiodicity itself. 
Indeed, we will see how $m$ twisted-TTSes can manifest from $m$ `untwisted' time-translations of the original driving Hamiltonian $H(\vec\theta)$, when viewed in a suitable frame of reference.

As discussed in Sec.~\ref{sec:intro_qp_tts}, the many-body Hamiltonian of the system $H(t) = H(\vec\omega t + \vec\theta_0)$, derives from a Hamiltonian $H(\vec\theta)$ on the standard torus $\mathbb{T}^m$. Assume now this has the form
\begin{equation}
H(\vec{\theta}) = H_0(\vec{\theta}) + V(\vec{\theta}),
\label{eqn:QP_H}
\end{equation}
with $H_0(\vec{\theta} + \vec\tau) = H_0(\vec{\theta})$ and $V(\vec{\theta} + \vec\tau) = V(\vec{\theta})$, for any   $\vec\tau \in \mathcal{L} = 2\pi \mathbb{Z}^m$.  Let $\Gamma_{1},\ldots,\Gamma_{r}$ be a set of $r$ mutually commuting operators, each of which is a sum of quasilocal terms and has integer eigenvalues. We take
\begin{equation}
\label{eq:time_ev_h0}
H_0(\vec{\theta}) = f_i(\vec{\theta}) \Gamma_i.
\end{equation}
Unless otherwise stated, the index summation convention is implied.
Here $f_i(\vec{\theta})$ are real-valued functions satisfying $\overline{f}_i := \int_{\mathbb{T}^m} \frac{d^m\vec\theta}{(2\pi)^m} f_i(\vec\theta) =  Q_{ij}\omega_j$, where $Q_{ij}$ is a dimensionless $r \times m$ matrix with rational entries. The interaction term $V(\vec{\theta})$ is a quasilocal Hamiltonian with local energy scale $J$. 

Under these conditions, we can solve for the evolution operator $U_0(t) = \mathcal{T} \exp[{ -i\int_0^t H_0(\vec{\theta_0} + \vec{\omega} t') dt'  }]$. Since it is made from commuting terms, the time-ordering can be neglected. It is itself quasiperiodic $U_0(t) = U_0( \vec{\omega} t + \vec{\theta}_0)$, and can be expressed as 
\begin{equation}
\label{eq:gi_periodicity_new}
U_0(\vec{\theta}) = e^{ -ih_i(\vec{\theta})\Gamma_i }, 
\end{equation}
for some functions $h_i(\vec\theta)$ satisfying $h_i(\vec{\theta} + \vec{\tau}) = h_i(\vec{\theta}) + Q_{ij} \tau_j$
for all  $\vec{\tau} \in \mathcal{L}$ (see Appendix \ref{appendix:gi_details}). 

Notice that $U_0(\vec{\theta})$ is defined on a larger torus than $H_0(\vec\theta)$. Specifically:
\begin{equation}
\label{eq:time_ev_periodicity_new}
U_0(\vec{\theta} + \vec{\tau}') = U_0(\vec{\theta}), \quad \mbox{for all $\vec{\tau}' \in \mathcal{L}'$}.
\end{equation}
Here $\mathcal{L}'$ is a sublattice of the original lattice $\mathcal{L}$ defined by $\vec{\tau}' \in \mathcal{L'}$ if and only if $\vec{\tau}' \in \mathcal{L}$ and $e^{iQ_{ij} {\tau}_j'} = 1$.
A simple example is where $m=r$ and $Q = ({1}/{N}) \mathbb{I}_m$ for some integer $N > 1$; in that case $\mathcal{L}' = 2\pi N \mathbb{Z}^m = N \mathcal{L}$, so that the basic original cell is enlarged by $N$ in each direction.
We emphasize that \eqnref{eq:time_ev_periodicity_new} holds only due to our special form of $H_0(t)$ from \eqnref{eq:time_ev_h0}. In general, one does not expect $U_0(t)$ to be quasiperiodic, even if $H_0(t)$ is.

\begin{figure}[t]
\includegraphics[width=\columnwidth]{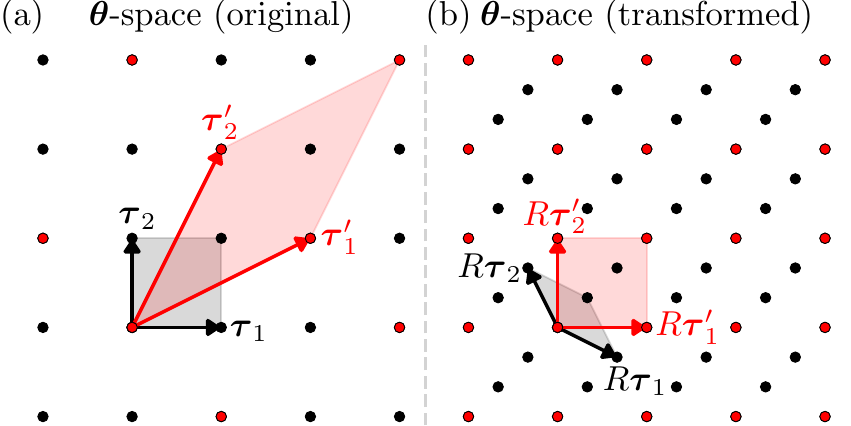}
\caption{\label{fig:freqlattices} 
Twisted-TTS and action of linear transformation $R$ on $\vec\theta$-space for a specific $m=2$ example.
(a) Lattice $\mathcal{L}$ of points (black) is defined by the translation vectors $\vec\tau_1 = 2\pi(1,0)$ and $\vec\tau_2 = 2\pi(0,1)$. The original driving Hamiltonian $H(\vec\theta)$, \eqnref{eqn:QP_H}, has periodicity on the standard torus (grey shaded area).
The interaction Hamiltonian $H_\text{int}(\vec\theta)$  generally has different periodicity. Shown is an example where its periodicity is on a sheared torus (red shaded area), given by the translation vectors $\vec{\tau}'_1 = 2\pi(2,1)$ and $\vec{\tau}'_2 = 2\pi(1,2)$ which defines a lattice $\mathcal{L}'$ (red points).
(b) There exists an invertible linear transformation $R$ mapping $\vec\theta \mapsto R\vec\theta$ so that $(\vec{\tau}'_1,\vec{\tau}'_2 ) \mapsto (R\vec{\tau}'_1, R\vec{\tau}'_2) = (\vec\tau_1,\vec\tau_2)$. The Hamiltonian $H_\text{int}'(\vec\theta) := H_\text{int}(R^{-1} \vec\theta)$  has periodicity on the standard torus (red shaded area), and additionally has $g_{{\vec{\tau}}}$-twisted TTSes $H_\text{int}'(\vec\theta + R \vec\tau) = g_{\vec\tau}H'_\text{int}(\vec\theta) g_{\vec\tau}^\dagger$
where $\vec\tau \in 2\pi \mathbb{Z}^2$.}
\end{figure} 

We are now in a position to see how twisted TTSes emerge. By transforming into the interaction picture of $H_0(t)$, the interaction Hamiltonian is
\begin{align}
H_{\mathrm{int}}(\vec{\omega} t + \vec{\theta_0})= U_0^{\dagger}(t) (H(t) - i \partial_t) U_0(t),
\label{eqn:interactingH}
\end{align}
and has local energy scale $J$.
Furthermore, it derives from a Hamiltonian
$H_{\mathrm{int}}(\vec{\theta}) = U_0^{\dagger}(\vec{\theta}) V(\vec{\theta}) U_0(\vec{\theta})$, which has 
periodicity in the larger unit cell $H_{\mathrm{int}}(\vec{\theta} + \vec{\tau}') = H_{\mathrm{int}} (\vec{\theta})$ with  $\vec{\tau}' \in \mathcal{L}'$. Choosing $Q$ to have rational entries ensures that the new unit cell is still finite (see Appendix \ref{appendix:rational}).

Since the $\Gamma_i$ commute, \eqnref{eq:gi_periodicity_new} implies $U_0(\vec{\theta} + \vec{\tau})^\dagger = g_{\vec{\tau}} U_0(\vec{\theta})^\dagger = U_0(\vec{\theta})^\dagger g_{\vec{\tau}}$ where
\begin{align}
g_{\vec{\tau}} = \exp\left( i Q_{ij}  \Gamma_i \tau_j  \right)
\end{align}
for $\vec{\tau} \in \mathcal{L}$. Together with $V(\vec\theta) = V(\vec\theta + \vec\tau)$, this yields  
\begin{equation}
\label{eq:rotatingframetwistedtts}
H_{\mathrm{int}}(\vec{\theta} + \vec{\tau}) = g_{\vec{\tau}} H_{\mathrm{int}} (\vec{\theta}) g_{\vec{\tau}}^\dagger.
\end{equation}
This is almost the twisted-TTS condition \eqnref{eqn:twistedTTS}, except the periodicity of $H_{\mathrm{int}}(\vec\theta)$ is not on the standard torus $\mathbb{T}^m$. However, there is always an invertible linear transformation $R$ on $\vec\theta$-space (Fig.~\ref{fig:freqlattices}) so that the Hamiltonian $H'_\text{int}(\vec\theta) := H_\text{int}(R^{-1} \vec\theta)$ is periodic on the standard torus and \eqnref{eqn:twistedTTS} then holds exactly for $H'_\text{int}(\vec\theta)$.

We will  generally not need to invoke the transformation $R$, except when we discuss estimates of heating times and our construction of effective Hamiltonians in the preheating regime, which take as input high-frequency Hamiltonians that are periodic on the standard torus.
In those cases, we should remember that
the coordinate transformation means the frequency vector is also rescaled $\vec\omega \mapsto R \vec \omega$ when we consider the single-time evolution, since $H_\text{int}(\vec\omega t + \vec\theta_0) = H'_\text{int}(R\vec\omega t + R \vec\theta_0)$.
This implies that, the dynamics governed by the effective Hamiltonian $D$ constructed from $H'_\text{int}(\vec\theta)$ in a high-frequency expansion  assuming $J \ll R_{ij} \omega_j$ will last for a long time $t_* \sim O(e^{\mathrm{const.} \times (|R \vec\omega| /J)^{1/(m+\epsilon)}})$; see Sec.~\ref{sec:proof}. Then, for times less than $t_*$, the evolution operator in the interaction frame can be written as
\begin{align}
U_{\mathrm{int}}(t) \approx  P(\vec{\omega} t + \vec{\theta}_0) e^{-i D t} P^\dagger(\vec\theta_0),
\end{align}
where $P(\vec{\theta})$ is periodic with respect to translations   $\vec\tau' \in \mathcal{L}'$.

According to the discussion of Sec.~\ref{sec:twistedTTSes}, the effective Hamiltonian $D$ in the preheating regime will have emergent symmetries arising from the twisted TTSes,  
\begin{equation}
\label{eq:Dsym_new}
[D, g_{\vec{\tau}}] = 0, \quad \mbox{for all }\vec{\tau} \in \mathcal{L}.
\end{equation}
Moreover, $P(\vec\theta + \vec\tau) = g_{\vec\tau} P(\vec\theta) g_{\vec\tau}^\dagger$ for $\vec\tau \in \mathcal{L}$.
Analogous to the Floquet case, these symmetry properties of $D$ are robust to  small, potentially time-dependent perturbations to the driving protocol, as long as they respect the time-quasiperiodicity of the system. 

Although \eqnref{eq:Dsym_new} holds for each translation vector $\vec{\tau} \in \mathcal{L}$, not every such $\vec{\tau}$ corresponds to a different operator $g_{\vec{\tau}}$. In fact, $g_{\vec{\tau}} = \mathbb{I}$ if and only if $\vec{\tau} \in \mathcal{L}'$. 
The unitary operators $g_{\vec{\tau}}$ therefore belong to a finite Abelian group of emergent symmetries, $\mathcal{G} \cong \mathcal{L}/\mathcal{L}'$. 
In the simple case where $Q = ({1}/{N}) \mathbb{I}_{m }$, we find that $m = r$, $\mathcal{G} = \mathbb{Z}_N^{\times m}$.  
As a slightly less trivial example, consider $r = m = 2$ and $Q = \frac{1}{3} \begin{pmatrix} -1 & 2 \\ 2 & 1  \end{pmatrix}$. This results in a lattice $\mathcal{L}'$ as seen in Fig.~\ref{fig:freqlattices}, and $\mathcal{G} = \mathbb{Z}_3$.
We refer the reader to Appendix \ref{appendix:rational} where  we show, given some rational matrix $Q$, how to compute $\mathcal{G}$ from the Smith decomposition of $Q$. Since $\mathcal{G}$ is a finite Abelian group, it is always of the form $\mathcal{G} = \mathbb{Z}_{q_1} \times \mathbb{Z}_{q_2} \times ...$ \cite{ClarkBook}.

Collecting all the ingredients discussed above, we are now in the position to realize novel, inherently out-of-equilibrium phases of matter. Time evolution in the laboratory frame for $t < t_*$, under the driving scenarios outlined in this section, is governed by the evolution operator $U(t) = U_0(t) U_\text{int}(t)$, which can be written as
\begin{align}
U(t) & \approx U_0(\vec\omega t + \vec\theta_0) P(\vec\omega t + \vec\theta_0) e^{-i D t} P^\dagger(\vec\theta_0) \nonumber \\
& = \mathcal{V}(t) \Big[ U_0(t) e^{-i D t}  \Big] \mathcal{V}^\dagger(0).
\label{eqn:prethermal_result_QP}
\end{align}
Here $\mathcal{V}(t)  =  U_0(\vec \omega t + \vec\theta_0) P(\vec\omega t + \vec\theta_0) U_0^\dagger(\vec \omega t  + \vec\theta_0)$ is time-quasiperiodic  with underlying $\mathcal{V}(\vec\theta) = U_0(\vec\theta) P(\vec\theta) U_0^\dagger(\vec\theta)$ that has periodicity on the standard torus $\mathbb{T}^m$, that is, for translations $\vec\tau \in \mathcal{L}$. {Another way to state \eqnref{eqn:prethermal_result_QP} is that in the rotating frame defined by $\mathcal{V}(t) U_0(t)$, time evolution of the system is simply governed effectively by the static Hamiltonian $D$. However, if one goes back to the laboratory frame, this frame transformation} -- being not perturbative close to the identity -- can endow the state of the system (in particular, the steady state of $D$) with large, structured time-quasiperiodic micromotion, giving rise to a panoply of different long-time dynamical collective behaviors whose dynamical signatures are robust and universal.

In the next two sections, we will illustrate the physical implications of our results with two examples of such phases: time quasi-crystals and dynamic quasiperiodic topological phases.
We will return to the important task of formalizing the preceding discussions on dynamics as well as explicitly constructing $D$, in the later sections.

\section{Discrete time quasi-crystals}
\label{sec:tqc}

A discrete time quasi-crystal (DTQC) is a phase which spontaneously breaks some or all of the time-translation symmetries of a quasiperiodic drive~\cite{Dumitrescu_1708}.
It is characterized by a dynamical response of physical observables, which display stable long-time oscillations with a time-quasiperiodicity that is different from the time-quasiperiodicity of the original driving Hamiltonian $H(t)$. 
This can be diagnosed by computing the power spectra of local observables, which will exhibit robust peaks at frequencies which are shifted from the base frequencies by a fractional amount.
Since there are several ($m \geq 2$) independent time-translation symmetries, there  are a multitude of ways that these symmetries can be spontaneously broken, leading to a variety of patterns and associated DTQC phases.
The DTQC generalizes the discrete time crystal, a phase which spontaneously breaks the single time-translation symmetry of Floquet systems~\cite{Khemani_1508,Else_1603}.

{Note that the concept of a DTQC  as well as some aspects of its phenomenology have been proposed in \cite{Dumitrescu_1708}, which numerically observed a DTQC-like signal in a quasiperiodic step-drive with disorder; albeit with a slow logarithmic decay of the envelope in time. Our present work explains more generally the precise role of time-translational symmetries of quasiperiodic drives in delineating such a phase, and also shows how the logarithmic decay can be avoided (even without disorder) through smooth driving, hence rigorously proving the stability of the phase up to the long heating time $t_*$.
In addition, we provide drives that generalize to a large class of   symmetry breaking patterns.
This large class includes, among many others, the DTQC pattern introduced in [64], as well as that of [66], and we address the stability of such patterns when using smooth driving.
 
To understand DTQC phases, consider the time evolution of a quantum state $|\psi_0\rangle$ with a quasiperiodic Hamiltonian of the type discussed in Sec.~\ref{sec:twisted_high_freq_QP}. Then one can take a frame-twisted high-frequency limit, so that for times $t < t_*$, the time-evolution operator can be decomposed  as in  \eqnref{eqn:prethermal_result_QP}.  Recall that the effective time-independent Hamiltonian $D$ in the preheating regime possesses multiple unitary symmetries ${g}_{\vec\tau}$ where $\vec\tau \in \mathcal{L} = 2\pi\mathbb{Z}^m$, which belong to some finite Abelian group $\mathcal{G}$.

Now let us consider times where the system has prethermalized, 
so that in the rotating frame $\mathcal{V}(t) U_0(t)$ the state is locally described by a thermal state $\rho_\beta$; see the discussion in Sec.~\ref{subsec:long_time_preheating}. 
 In the laboratory frame, the state of the system when probed by local observables is $\rho(t) = \rho(\vec{\omega} t + \vec{\theta}_0)$, where
\begin{equation}
\label{eqn:rho_rotating}
\rho(\vec{\theta}) = \mathcal{V}(\vec{\theta}) U_0(\vec{\theta}) \rho_\beta U_0^{\dagger}(\vec{\theta}) \mathcal{V}^{\dagger}(\vec{\theta}).
\end{equation}
We see that $\rho(t)$ is time-quasiperiodic, since $\rho(\vec{\theta})$ \emph{at least} satisfies $\rho(\vec{\theta} + \vec{\tau}') = \rho(\vec{\theta})$ for $\vec{\tau}' \in \mathcal{L}'$. But does $\rho(\vec{\theta})$ have the periodicity of the original drive $H(\vec\theta)$ of \eqnref{eqn:QP_H}, characterized by the lattice $\mathcal{L} = 2\pi \mathbb{Z}^m$? This turns out to depend on whether or not $\rho_\beta$ is symmetric under the emergent symmetries $g_{\vec\tau}$.

To see this explicitly, we can write
\begin{equation}
\label{eqn:rhotheta}
\rho(\vec{\theta} + \vec{\tau}) = \mathcal{V}(\vec{\theta}) U_0(\vec{\theta}) g_{\vec{\tau}}^{\dagger} \rho_\beta g_{\vec{\tau}} U_0^{\dagger}(\vec{\theta}) \mathcal{V}^{\dagger}(\vec{\theta}),
\end{equation}
where we use that $\mathcal{V}(\vec{\theta} + \vec{\tau}) = \mathcal{V}(\vec{\theta})$ and $U_0(\vec{\theta} + \vec{\tau}) = U_0(\vec{\theta}) g_{\vec{\tau}}^{\dagger}$. Therefore $\rho(\vec{\theta} + \vec{\tau}) = \rho(\vec{\theta})$ for $\vec\tau \in \mathcal{L}$ if and only if $g_{\vec{\tau}} \rho_\beta g_{\vec{\tau}}^{\dagger} = \rho_\beta$.
In other words, if $\rho_\beta$ is a state that preserves all the symmetries $g_{\vec \tau}$ of the effective Hamiltonian $D$, then $\rho(\vec\theta)$ preserves all $m$ multiple time translation symmetries of the driving Hamiltonian $H(\vec\theta)$.

If on the other hand $\rho_{\beta}$ is not invariant under  $g_{\vec \tau}$, $g_{\vec \tau} \rho_\beta g_{\vec \tau}^{\dagger} \neq  \rho_\beta$, then the emergent symmetry $g_{\vec \tau}$ is said to be spontaneously broken in the thermal state of $D$. This can, of course, happen for multiple $g_{\vec \tau}$ at the same time.
From \eqnref{eqn:rhotheta}, $\rho(\vec\theta)$ will then have a periodicity different from the original Hamiltonian $H(\vec\theta)$, and consequently $\rho(t) = \rho(\vec\omega t + \vec\theta_0)$ will have a different time-quasiperiodicity than the driving Hamiltonian $H(t) = H(\vec\omega t + \vec\theta_0)$ -- the hallmark of a DTQC phase. The precise connection between the spontaneous breaking of $g_{\vec \tau}$ and of TTS reflects the fact that $g_{\vec{\tau}}$ was a manifestation of the TTS in the first place.

\subsection{Observable consequences}
\label{subsec:tqc_obs}

The spontaneously broken TTSes in quasiperiodically-driven systems manifest themselves most clearly through periodicity changes in the $m$-dimensional $\vec{\theta}$-space. The interplay of multiple time translation symmetries gives a large variation in the number of different symmetry breaking patterns and associated DTQC phases. 
However, there will also be measurable signatures of these patterns in terms of the dependence of the system on physical time $t$,  for example,  in the Fourier spectrum (or power spectrum) of local observables (see also \cite{Dumitrescu_1708}). These are analogous to probing quasicrystalline structures in space through their diffraction patterns~\cite{Levine_1986, Socolar_1986}. 

Consider the regime described above, where the state of the system is described by \eqnref{eqn:rhotheta}. Then, the expectation of a local observable $o(t) := \langle \hat{o}(t) \rangle$ can be written as $o(t) = o(\vec{\omega} t + \vec{\theta}_0)$, where $o(\vec{\theta}) := \Tr(\hat{o} \rho(\vec{\theta}))$ has periodicity that depends on the periodicity of $\rho(\vec{\theta})$. Let $\mathcal{L}_{SSB}$ be the sublattice of $\mathcal{L}$ comprising those $\vec{\tau} \in \mathcal{L}$ such that $\rho(\vec{\theta} + \vec{\tau}) = \rho(\vec{\theta})$ for all $\vec{\theta}$; $\mathcal{L}_{SSB}$ describes the symmetry-breaking pattern in $\vec{\theta}$ space. Then we can expand $o(\vec{\theta})$ as a Fourier series
\begin{equation}
o(\vec{\theta}) = \sum_{\vec{\alpha}\in \mathcal{L}_{SSB}^*} e^{i \vec{\alpha} \cdot \vec{\theta}} o_{\vec{\alpha}},
\end{equation}
where the sum is over the reciprocal lattice vectors $\vec{\alpha} \in \mathcal{L}_{SSB}^*$, which are the vectors $\vec\alpha$ satisfying $e^{i \vec{\alpha} \cdot \vec{\tau}} = 1$ for all $\vec{\tau} \in \mathcal{L}_{SSB}$. 
Consequently, the power spectrum of $o(t)$  
\begin{equation}
\label{eq:powerspectrum}
\mathcal{P}_o(\Omega) = \sum_{\vec{\alpha} \in \mathcal{L}_{SSB}^*} |o_{\vec{\alpha}}|^2 \delta(\Omega - \vec{\alpha} \cdot \vec{\omega}),
\end{equation}
has peaks at frequencies $\Omega_{\vec \alpha} = \vec{\alpha} \cdot \vec{\omega}$. 
Note that for smooth driving, the Fourier coefficient $o_{\vec{\alpha}}$ will decay exponentially with $|\vec{\alpha}|$; furthermore, not every peak will necessarily appear for any choice of observable $\hat{o}$, since it is possible that  $o_{\vec{\alpha}} = 0$ for some $\vec{\alpha}$'s. 

Now, in a DTQC, since the symmetry lattice $\mathcal{L}$ spontaneously breaks to the proper sublattice $\mathcal{L}_{SSB}$, the reciprocal lattice $\mathcal{L}_{SSB}^*$ is also a proper superlattice of $\mathcal{L}^* = \mathbb{Z}^m$. 
This implies that some of the frequencies $\Omega_{\vec\alpha} = \vec\alpha \cdot \vec\omega$ for $\vec{\alpha} \in \mathcal{L}_{SSB}^*$ are not derivable from integer linear combinations of the base driving frequencies $(\omega_1, \cdots, \omega_m)$, i.e.~they do not correspond to the base harmonics $\Omega_{\vec n} = \vec{n} \cdot \vec\omega$ for $\vec{n} \in \mathbb{Z}^m$.  This is the dynamical signature of the spontaneous breaking of the time translation symmetries. 
Of course, the frequencies associated with original drive harmonics $\Omega_{\vec n} = \vec{n} \cdot \vec{\omega}$  are dense on the real line, so they can   lie arbitrarily close to those  frequencies $\Omega_{\vec \alpha}$ reflecting the symmetry-breaking. 
However,  the peaks in the power spectrum at frequencies $\Omega_{\vec \alpha}$ will nevertheless be well resolved from those at $\Omega_{\vec n}$. This is because the weights of peaks at frequencies $\Omega_{\vec n}$ approaching $\Omega_{\vec \alpha}$ become ever more strongly suppressed  as a consequence of the smoothness of the drive, as they involve very large $|\vec{n}|$. Therefore, the presence of well-defined peaks at frequencies $\Omega_{\vec{\alpha}}$ constitutes a sharp dynamical signature of the DTQC phase. 
We discuss this point in greater depth in Section \ref{subsec:how_distinct}, where we also discuss the effect of finite observation time (fundamentally limited by the heating time $t_*$) in resolving these peaks in practice.

Additionally, an important signature of the DTQC phase is that the location of these peaks $\Omega_{\vec\alpha}$ in the power spectrum reflecting the spontaneous symmetry-breaking, is robust against small perturbations to the driving protocol, such as in changing $f(\vec\theta) \mapsto f(\vec\theta) + \epsilon(\vec\theta)$ for small, smooth $\epsilon(\vec\theta)$, or  by adding small time-quasiperiodic terms to the Hamiltonian $V$.

In the following subsection as well as Appendix \ref{appendix:DTQC}, we provide  examples of Hamiltonians that exhibit   DTQC phases.

\subsection{Example Hamiltonian: $\mathbb{Z}_2$ DTQC}
\label{subsec:Z2DTQC}

\begin{figure*}[t]
\includegraphics[width=\textwidth]{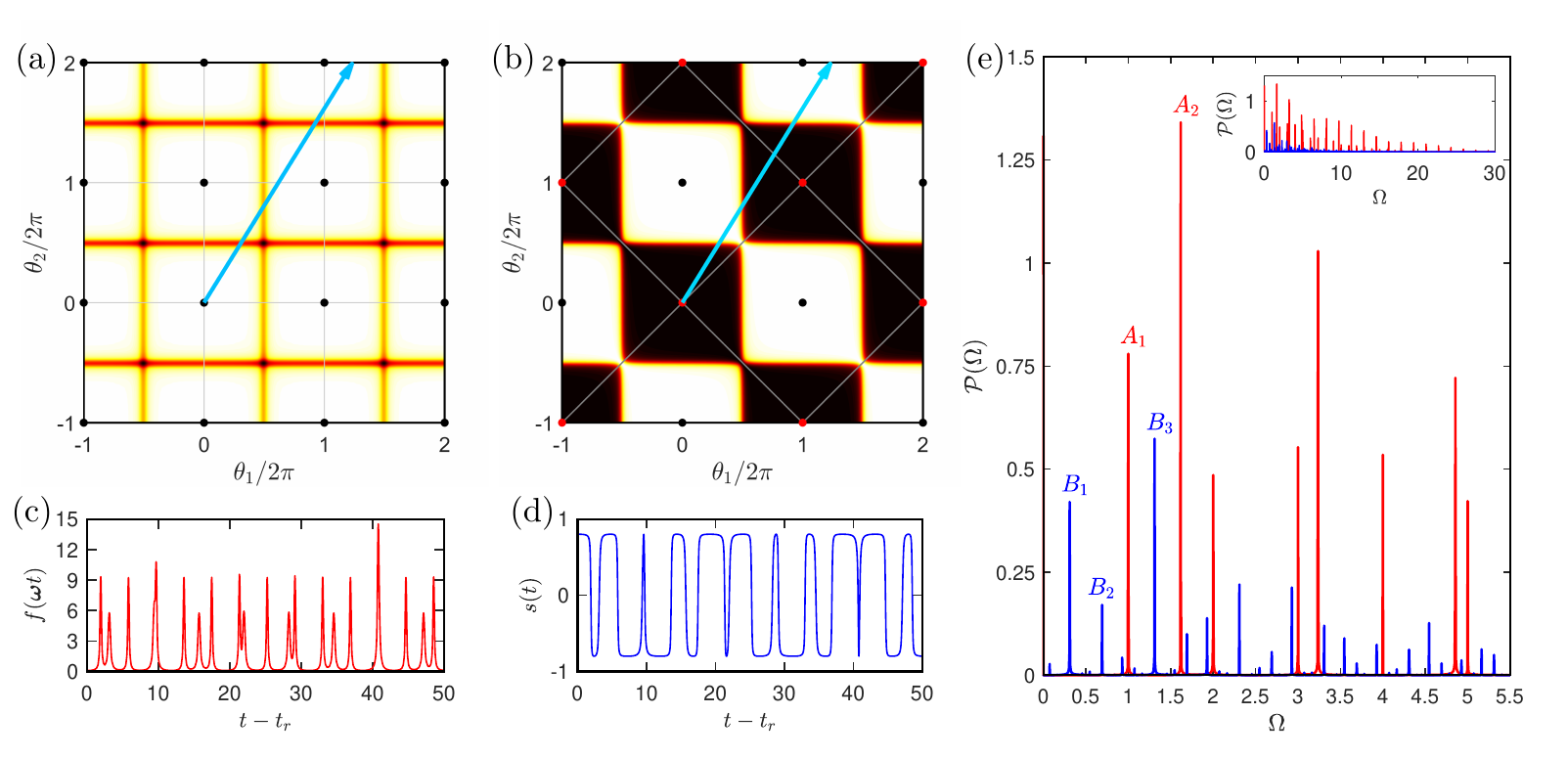}
\caption{\label{fig:tqc_cartoon} 
$\mathbb{Z}_2$ Discrete Time Quasicrystal (DTQC) of Sec.~\ref{subsec:Z2DTQC}: driving function (a,c) and time-quasicrystal response (b,d).  The  spontaneous symmetry breaking of time-translation symmetries as seen in the power spectrum (e) of a local observable.
(a) Driving function $f(\vec\theta)$ in   extended $\vec\theta$-space, with $\Delta(\theta)$ in \eqnref{eqn:driving_profile} replaced by $\Delta_N(\theta)$ in \eqnref{eqn:Fejer} with $N = 20$. The function is symmetric under translations   $\vec\tau \in \mathcal{L} = 2\pi \mathbb{Z}^2$ (black points). Grey boxes represent the unit cell which is the standard torus.  The blue arrow is the single time trajectory $\vec\theta = \vec\omega t$. 
(b) In contrast, observable $s(\vec\theta) = \Tr[\sigma^z_i \rho(\vec\theta)]$ is symmetric under translations by $\vec\tau' = 2\pi(1,\pm 1) \mathbb{Z}$, which defines a symmetry-breaking sublattice $\mathcal{L}_{SSB} = \mathcal{L}'$ (red points). The grey boxes represent the unit cell, which is enlarged and rotated by 45$^{\text{o}}$ with respect to the driving Hamiltonian's unit cell. 
(c) Driving profile $f(t) = f(\vec\omega t)$ is a smooth function in the time domain. Times are measured from the relaxation time $t_r \sim J^{-1}$.
(d) Discrete time quasi-crystal response $s(t) = s(\vec\omega t)$ in the time domain.
(e) Power spectra $\mathcal{P}(\Omega)$ of both the driving function $f(t)$ (red) and the observable $s(t)$ (blue).
The power spectrum of $f(t)$ exhibits peaks at the base frequencies $\Omega = \vec{n}\cdot\vec\omega$, with dominant peaks at $\Omega = \omega_1$ ($A_1$) and  $\Omega = \omega_2$ ($A_2$).
The power  spectrum of $s(t)$ has peaks instead at frequencies shifted from the base frequencies by fractional values, reflecting the spontaneous breaking of the time-translation symmetries of the driving Hamiltonian. 
In particular, the dominant peaks are at $\Omega = (\omega_2 - \omega_1)/2$ ($B_1$), $\Omega = \omega_2 - \omega_1$ ($B_2$), and  $\Omega = (\omega_1 + \omega_2)/2$ ($B_3$).
The inset shows the power spectra over a wider range of frequencies.
}
\end{figure*} 

Consider a system of spin-1/2 degrees of freedom on a lattice, evolving with the $m=2$ time-quasiperiodic Hamiltonian 
\begin{align}
\label{eqn:exampleH_Z2_pre}
H(t)  &= H_0(\vec\omega t) +  \sum_{i, j} J_{ij} \sigma^z_i \sigma^z_j + \sum_i h \sigma^z_i, \\
H_0(\vec\theta) & = \sum_i f(\vec{\theta})  \frac{1}{2} (\sigma^x_i + 1)
\label{eqn:exampleH_Z2}
\end{align}
Here $\sigma^x, \sigma^y, \sigma^z$ are the standard Pauli matrices, and we will choose the couplings $J_{ij}$ to be ferromagnetic so that the Ising Hamiltonian $\sum_{i, j} J_{ij} \sigma_i^z \sigma_j^z$ has an ordered phase at finite temperature. We will also assume that the couplings decay sufficiently fast enough with spatial distance. For example, we can consider short-range nearest-neighbor couplings in two or greater dimensions, or  power-law decaying interactions in one dimension with exponent between one to two; see Sec.~\ref{subsec:longrange} for a discussion of the dynamical consequence of power-law interactions. Note that we could have replaced $\sigma_i^x + 1$ with $\sigma_i^x$ in \eqnref{eqn:exampleH_Z2} without affecting the dynamics, but we chose the former to ensure that $\frac{1}{2}(\sigma_i^x + 1)$ has integer eigenvalues.

The Hamiltonian \eqnref{eqn:exampleH_Z2_pre} falls into the class of Hamiltonians described in Sec.~\ref{sec:twisted_high_freq_QP}.
It comprises of two terms: first, $V$  describes pairwise Ising interactions between spins with amplitude $J_{ij}$, as well as a longitudinal field in the $z$-direction with strength $h$. The couplings are assumed to satisfy $J_{ij}, h \ll |\omega_1 -\omega_2|/2$.
Second, $H_0(\vec\omega t)$  describes a quasiperiodic drive on the system in the $x$-direction,  with frequency vector $\vec\omega = (\omega_1, \omega_2)$.
In experimental platforms where such interactions can be realized, such as with trapped ions or ensembles of nitrogen-vacancy centers in diamond (see Sec.~\ref{subsec:expt_realization}), this can be implemented, for example, by external pulses using lasers or microwaves.

Let us now consider how to choose the driving profile $f(\vec{\theta})$. A natural generalization of models of the DTC previously considered in the Floquet case \cite{Khemani_1508,Else_1603, vonKeyserlingk_1605,
Potirniche_1610} would be to take
\begin{align}
\label{eqn:driving_profile}
f(\theta_1, \theta_2) &= \pi [\omega_1 \Delta(\theta_1) +  \omega_2 \Delta(\theta_2)], \\
\label{eqn:sum_of_deltas}
\Delta(\theta) &= \sum_{n = -\infty}^{\infty} \delta(\theta - \pi + 2\pi n).
\end{align}
If we substitute into \eqnref{eqn:exampleH_Z2_pre} and \eqnref{eqn:exampleH_Z2}, we see that this corresponds to instantaneously applying $\sigma^x$ to all the spins at certain times $t$, namely those for which either $\omega_1 t - \pi$ or $\omega_2 t - \pi$ is an integer multiple of $2\pi$. 
A somewhat similar driving sequence was considered in disordered spin models in Ref.~\cite{Dumitrescu_1708}. However, since the drives considered in that reference as well as in Ref.~\cite{Zhao_1906} are not smooth, slow heating and a stretched-exponentially long-lived prethermal plateau will not be guaranteed by our results.
To circumvent this problem, we will replace the sharp peaks in \eqnref{eqn:sum_of_deltas} with smoothed-out approximations, as we discuss in more detail later.

Let us now observe that $f(\vec{\theta})$ satisfies
$\overline{f} = Q_j \omega_j$, where $Q = \begin{pmatrix} 1/2 & 1/2 \end{pmatrix}$.
Therefore, we can apply the discussion of Sec.~\ref{sec:twisted_high_freq_QP}.
and pass to a frame-twisted high-frequency limit.
Going into the interaction frame of $H_0(\vec\theta)$, we can compute the time-quasiperiodic interaction Hamiltonian $H_\text{int}(\vec\theta) = U_0^\dagger(\vec\theta) V U_0(\vec\theta)$.
The periodicity of $H_\text{int}(\vec\theta)$ is on the lattice $\mathcal{L}'$, generated by the translation vectors $\vec{\tau}_1 = 2\pi(1,1)$ and $\vec{\tau}_2 = 2\pi(-1,1)$. Here $\mathcal{L}'$ is a sublattice of the original lattice $\mathcal{L} = 2\pi \mathbb{Z}^m$ of symmetry vectors of $H(\vec\theta)$.
One also sees that $H_\text{int}(\vec\theta)$  possesses a single nontrivial twisted-TTS corresponding to a translation by $\vec{\tilde{\tau}} = 2\pi(1,0)$ or $\vec{\tilde \tau} = 2\pi(0,1)$:
$ H_\text{int}(\vec\theta + \vec{\tilde{\tau}} ) =g H_\text{int}(\vec\theta) g^\dagger $ with $g =  \prod_i \sigma^x_i$. Indeed, the operator $g$ generates the finite group $\mathcal{G} \cong \mathcal{L}/\mathcal{L}'  = \mathbb{Z}_2$.

\begin{figure}[t]
\includegraphics[width=\columnwidth]{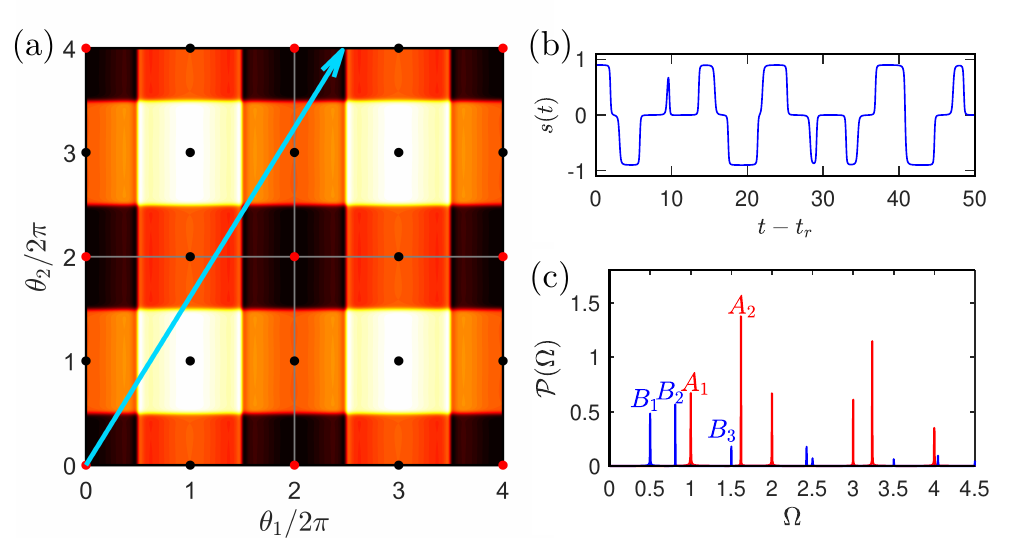}
\caption{\label{fig:prethermalCartoonZ2Z2} 
$\mathbb{Z}_2 \times \mathbb{Z}_2$ DTQC with model and observable detailed in Appendix.~\ref{subsec:Z2Z2DTQC}. (a) Observable $s(\vec\theta)$ in $\vec\theta$-space at leading order in inverse frequency,  displaying the symmetry-breaking pattern lattice $\mathcal{L}_{SSB}$ (red points) as well as the original lattice of translations $\mathcal{L}$ (black points). Grey boxes denote the symmetry-breaking unit cell.
(b) Discrete time quasi-crystal response $s(t) = s(\vec\omega t)$. 
(c) Power spectra $\mathcal{P}(\Omega)$ of both the driving function (red) and the observable $s(t)$ (blue).
Dominant peaks in the driving function correspond to:   $\Omega = \omega_1$ ($A_1$) and  $\Omega = \omega_2$ ($A_2$).
Dominant peaks of the observable $s(t)$ correspond to: $\Omega = \omega_1/2$ ($B_1$), $\Omega = \omega_2/2$ ($B_2$), $\Omega = 3\omega_1/2$ ($B_3$).
Going beyond leading order in inverse frequency, an observable's power spectrum can exhibit peaks at the frequencies $\Omega = \frac{1}{2} \vec{n} \cdot \vec{\omega}$ for $\vec{n} \in \mathbb{Z}^2$.
 }
\end{figure} 

We next construct the effective time-independent Hamiltonian $D$ from the $H_\text{int}$ in a high-frequency expansion, using our approach in Sec.~\ref{sec:proof}.  
However, for our purpose here of understanding its steady states, it suffices to understand the leading order Hamiltonian $D^{(0)}$ in the high-frequency expansion: Since $D$ and $D^{(0)}$ are perturbatively close by construction, the steady states of $D$ and $D^{(0)}$ are in the same universality class.
For our expansion in Sec.~\ref{sec:proof}, the leading order term $D^{(0)}$ is the average of the interaction Hamiltonian
\begin{align}
D^{(0)} = \frac{1}{ 8\pi^2} \int_{0}^{2\pi}\!\! d\theta_2 \int_0^{4\pi} \!\! d\theta_1 \, H_\text{int} (\vec\theta).
\end{align}
We note in particular the bounds on integration, corresponding to a unit cell of $\mathcal{L}'$. We find that, if the driving profile is given by Eqs.~(\ref{eqn:driving_profile}) and (\ref{eqn:sum_of_deltas}), then
\begin{align}
D^{(0)} = \sum_{i < j} J_{ij} \sigma^z_i \sigma^z_j.
\label{eqn:exampleD}
\end{align}
More generally, in order to achieve slow heating, we need to smooth out the driving profile, for example by replacing \eqnref{eqn:sum_of_deltas} with
\begin{align}
\Delta_N(\theta)  = \frac{1}{2\pi} \sum_{\vert n \vert < N} \left(1 - \frac{|n|}{N} \right) e^{-2 {|n|}/{N}} e^{i n (\theta - \pi)}.
\label{eqn:Fejer}
\end{align}
Here $\Delta_N(\theta)$ is a smooth function that approximates the delta function comb $\Delta(\theta)$ increasingly well as $N \to \infty$ and is related to the so-called ``Fejer kernel''. With this replacement, we find
\begin{align}
D^{(0)} = \sum_{i < j} J_{ij} (a(N) \sigma^z_i \sigma^z_j + b(N) \sigma^y_i \sigma^y_j),
\label{eqn:exampleD}
\end{align}
where $a(N), b(N)$ are numerical constants depending on the smoothness parameter $N$.  Furthermore, for large $N$, we note that the ratio $a(N)/b(N)$ will be large. For example, if $N = 20$, then $a(20) = 0.866$ and  $b(20) = 0.134$.  

In accordance with Sec.~\ref{sec:twistedTTSes}, observe that \eqnref{eqn:exampleD} is Ising-symmetric, that is, $[D^{(0)},\prod_i \sigma^x_i] =0$. Furthermore, $D^{(0)}$ is dominated by Ising interactions $\sigma^z_i \sigma^z_j$ along the $z$-direction. This implies that
the steady states of $D$ in the preheating regime are as follows.
Provided the initial state has energy density (measured with respect to $D^{(0)}$) below some critical energy density, the system will prethermalize to a Gibbs state $\rho_\beta$ that spontaneously breaks the Ising symmetry $g = \prod_i \sigma^x_i$. 
This means that, the expectation value of an operator odd under the symmetry, such as the local magnetization along the $z$-direction $\sigma^z_i$, is generically not zero, i.e.~$\Tr[\sigma^z_i \rho_\beta] \neq 0$.

From the discussion in Sec.~\ref{sec:tqc} and Sec.~\ref{subsec:tqc_obs}, we   see that the state $\rho(\vec\theta)$, \eqnref{eqn:rho_rotating}, has periodicity on the lattice $\mathcal{L}_{SSB} = \mathcal{L}'$, and the corresponding reciprocal lattice $\mathcal{L}^*_{SSB}$ is generated by the vectors $\vec\alpha_1 = (1/2,1/2)$ and $\vec\alpha_2 = (1/2,-1/2)$.
Thus, the power spectrum of a   local observable will generically have peaks at the frequencies
\begin{align}
\label{eqn:fractionalOmega}
\Omega = \frac{1}{2}\left[n_1(\omega_1 - \omega_2) + n_2(\omega_1 + \omega_2) \right],
\end{align}
where $n_1, n_2 \in \mathbb{Z}$.

\begin{figure}[t]
\includegraphics[width=\columnwidth]{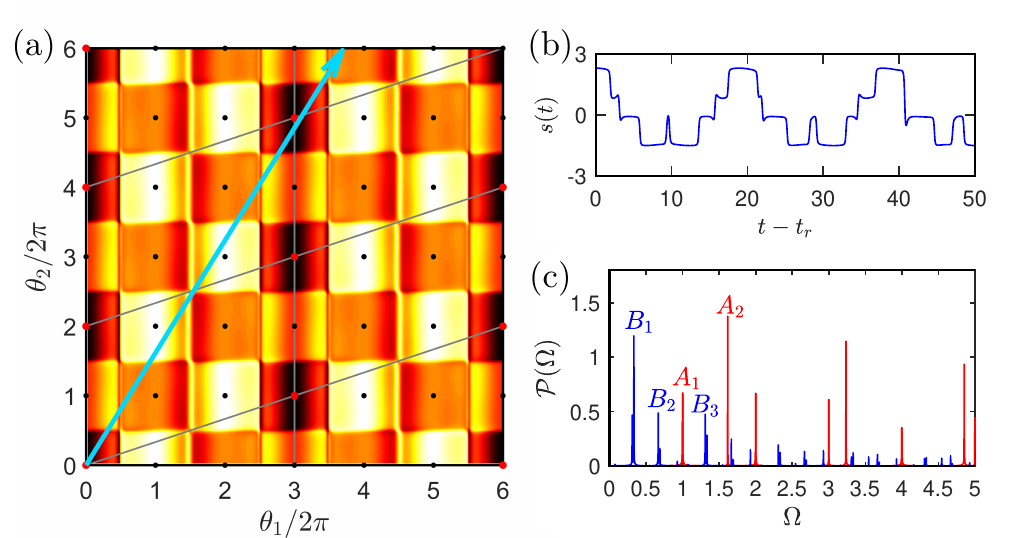}
\caption{\label{fig:prethermalCartoonZ3Z2} 
$\mathbb{Z}_3 \times \mathbb{Z}_2$ DTQC with model and observable detailed in Appendix.~\ref{subsec:Z3Z2DTQC}. (a) Observable $s(\vec\theta)$ in $\vec\theta$-space at leading order in inverse frequency,  displaying the symmetry-breaking pattern lattice $\mathcal{L}_{SSB}$ (red points) as well as the original lattice of translations $\mathcal{L}$ (black points). Grey boxes denote the symmetry-breaking unit cell.
(b) Discrete time quasi-crystal response $s(t) = s(\vec\omega t)$. 
(c) Power spectra $\mathcal{P}(\Omega)$ of both the driving function (red) and the observable $s(t)$ (blue).
Dominant peaks in the driving function correspond to:   $\Omega = \omega_1$ ($A_1$) and  $\Omega = \omega_2$ ($A_2$).
Dominant peaks of the observable $s(t)$ correspond to: $\Omega = \omega_1/3$ ($B_1$), $\Omega = 2 \omega_1/3$ ($B_2$), $\Omega = (\omega_1 + \omega_2)/2$ ($B_3$).
Going beyond leading order in inverse frequency, an observable's power spectrum can exhibit peaks at frequencies $\Omega = n_1 \frac{\omega_1}{3} + n_2 \left( - \frac{\omega_1}{6} + \frac{\omega_2}{2} \right)$ for $n_1, n_2 \in \mathbb{Z}$.
}
\end{figure}

Which peaks dominate, however, depends on the operator measured and its symmetry properties under $g$. 
As a concrete example, suppose we were to measure the local observable $\sigma^z_i$  (which is odd under $g$), whose expectation value in time can be written as 
\begin{align}
s(t) \approx \Tr[ & \sigma^z_i   \mathcal{V}(t) U_0(t) \rho_\beta U_0(t)^\dagger \mathcal{V}(t)^\dagger  ].
\end{align}
Because of the periodicity of $\rho(\vec\theta)$, it can be expressed as $s(t) = s(\vec\omega t + \vec\theta_0)$ for an underlying $s(\vec\theta) = \Tr[\sigma^z_i \rho(\vec\theta)]$ that has periodicity in $\mathcal{L}_{SSB}$ too.
To leading order in inverse frequency (i.e.~treating $\mathcal{V}(t) \approx \mathbb{I}$), the signal can be analytically derived, and is given by
\begin{align}
s(t) \approx & \Tr[ \sigma^z_i \rho_\beta ] \cos g(\vec\omega t) +  \Tr[ \sigma^y_i \rho_\beta ] \sin g(\vec\omega t)
%
\label{eqn:finite_s}
\end{align}
where
\begin{align}
g(\vec{\theta}) &= \pi [\Theta_N(\theta_1) + \Theta_N(\theta_2)], \\
\Theta_N(\theta) &:= \int_0^{\theta} \Delta_N(\theta') d\theta'. \label{eqn:defn_of_Theta}
\end{align}
Note that in the case of delta function driving ($N \to \infty$), this reduces to
\begin{align}
s(t) &\approx \Tr[ \sigma_i^z \rho_\beta] \sigma(\omega_1 t) \sigma(\omega_2 t), \\
\sigma(\theta) &:= (-1)^{\lceil(\theta - \pi)/(2\pi)\rceil}.
\end{align}

In Fig.~\ref{fig:tqc_cartoon}(b,d), we plot both $s(t)$ [\eqnref{eqn:finite_s}] and the underlying function $s(\vec\theta)$ from which it is derived from, assuming $\Tr[\sigma^z_i \rho_\beta] = 0.8$ and $\Tr[\sigma^y_i \rho_\beta] = 0$, and taking $\omega_1 = 1$, $\omega_2 = (1+\sqrt{5})/2$ (the golden ratio).
Fig.~\ref{fig:tqc_cartoon}(e)  illustrates the corresponding  power spectrum $\mathcal{P}(\Omega)$ of the signal $s(t)$.
We see that the spectrum contains peaks at the particular frequencies 
$
\Omega = (n_1+ 1/2 )\omega_1 + (n_2 + 1/2 ) \omega_2$
where $n_1, n_2 \in \mathbb{Z}$; furthermore, those frequencies given by small values of $n_1, n_2$   contribute the most. These frequencies are a subset of the ones described by \eqnref{eqn:fractionalOmega} (in particular, they do not include the harmonics of the original driving frequencies, i.e.\ $\Omega = n_1 \omega_1 + n_2 \omega_2$), but at higher orders in the inverse frequency we expect peaks to occur at all values \eqnref{eqn:fractionalOmega}.
 
While the $\mathbb{Z}_2$ DTQC discussed here is a particularly simple case, we can construct more complicated DTQCs such as ones characterized by the spontaneously broken emergent symmetry groups $\mathcal{G} = \mathbb{Z}_2 \times \mathbb{Z}_2$ (Fig.~\ref{fig:prethermalCartoonZ2Z2}) or $\mathbb{Z}_3 \times \mathbb{Z}_2$ (Fig.~\ref{fig:prethermalCartoonZ3Z2}) in an analogous fashion. We give the explicit form of the systems and drives to realize these in Appendix~\ref{appendix:DTQC}. 

\subsection{The MBL-DTQC}
\label{subsec:mbl_dtqc}

We are not restricted to realizing DTQC phases when the Hamiltonian $D$ is thermalizing. Instead, we can consider the case where $D$ exhibits MBL phenomenology and spontaneously breaks the emergent symmetries $g_{\vec{\tau}}$.
Then, the local integrals of motion $\tau^z_i$ are themselves not invariant under $g_{\vec\tau}$, which is the usual sense of spontaneous symmetry breaking in MBL; see \cite{Huse_1304,Vosk_1307,Pekker_1307,Kjall_1403,Nandkishore_1404}.
For the model we considered in \eqnref{eqn:exampleH_Z2}, we can pick strongly disordered interactions $J_{ij}$, which leads to an ``MBL spin glass'' effective Hamiltonian with Ising symmetry \cite{Huse_1304,Kjall_1403}.  
In the laboratory frame, this gives rise to an MBL DTQC, whose properties we now describe.

As stated in Section \ref{subsec:long_time_preheating}, the defining property of MBL in a quasiperiodically-driven system is that there is a complete set of commuting operators $\widetilde{\tau}^z_i$ (the ``l-bits''), which evolve quasiperiodically under reverse Heisenberg evolution $\widetilde{\tau}_i^z(t)_R = U(t) \widetilde{\tau}_i^z U(t)^{\dagger}$. We can define TTS to be spontaneously broken in a quasiperiodically-driven MBL system if the quasiperiodicity of $\widetilde{\tau}_i^z(t)_R$ is not the one of the drive but a different one.  That is, writing $\widetilde\tau_i^z(t)_R = \widetilde{\tau}_i^z(\vec{\omega} t + \vec{\theta}_0)$, a TTS corresponding to $\vec{\tau} \in \mathcal{L}$ is spontaneously broken if $\widetilde\tau_i^z(\vec{\theta} + \vec{\tau}) \neq \widetilde\tau_i^z(\vec{\theta})$.

For discrete time crystals in periodically driven systems, a key feature was the spectral pairing between eigenstates of the Floquet operator $U_F$.
For example, in the case corresponding to a discrete time crystal with period-doubling, these eigenstates always come in cat state pairs separated in quasienergy by $\pi$. However, generalizing this concept to DTQC is subtle because of a lack of a single time evolution operator like $U_F$.
 
Finally, let us mention what behavior is expected for measurable observables for an MBL DTQC. We know that for MBL systems, local observables relax to a steady state, generically as a power law in time \cite{Serbyn_1408}. Once this steady state has been achieved, the power spectrum of the time dependence of local observables will display the same behavior as discussed in the prethermal case, by the same arguments. At shorter times before the steady state is achieved, one expects analogous behavior to what is seen for the discrete time crystal in Floquet systems: in addition to ``universal'' peaks in the power spectra of observable that persist to infinite times, one also sees other non-universal peaks that are dependent on the precise disorder realization of the system.

\subsection{How sharply distinct is the DTQC phase?}
\label{subsec:how_distinct}
In this subsection, we wish to elaborate on a point we made earlier regarding the DTQC phase. The DTQC clearly looks like a spontaneous symmetry breaking phase in terms of the extended space picture, as seen for example in Figs.~\ref{fig:tqc_cartoon}(b), \ref{fig:prethermalCartoonZ2Z2}(a), and \ref{fig:prethermalCartoonZ3Z2}(a). However, the extended space picture is a purely formal construction, and what we actually measure are observables as a function of a single time. Here, we wish to be precise about the sense in which the DTQC is distinct from the trivial phase in terms of the single time. We   will also discuss the impact of the fact we can only observe the system over a finite time window, due both to practical experimental limitations and the finiteness of the heating time $t_*$.

The main point, as we have seen, is that the sharp order parameter for the DTQC is the existence of ``subharmonic'' peaks, i.e.~the nonzero   amplitude of a peak in the power spectrum at frequency $\Omega_{\vec{\alpha}} = \vec{\alpha} \cdot \vec{\omega}$ where $\vec\alpha \in \mathcal{L}^*_{SSB}$, which is \emph{not} simply a harmonic of the applied frequencies; that is is, $\Omega_{\vec{\alpha}} \neq \vec{n} \cdot \vec{\omega}$ for any integer vector $\vec{n}$. This is a meaningful statement even though for a quasiperiodic drive the harmonics $\Omega_{\vec{n}}$ are dense. To see this, consider for example the DTQC described in Section \ref{subsec:Z2DTQC}, which has a peak at frequency $\Omega_{\vec{\alpha}}$ for $\vec{\alpha} = (1/2,1/2)$. If $\Omega_{\vec{\alpha}} = \Omega_{\vec{n}}$ for some $\vec{n}$, then this would imply that $(\vec{n} - \vec{\alpha}) \cdot \vec{\omega} = 0$, and hence that $\omega_1/\omega_2 = (n_2 - 1/2)/(n_1 - 1/2)$, which is a rational number, contradicting our assumption that $\omega_1/\omega_2$ is irrational.

Of course, the above argument presupposes that we have infinitely good frequency resolution, corresponding to observing the system over an infinitely long time. Let us instead consider the case where we have some finite frequency resolution $\delta$,  corresponding to a finite observation time $\tau \sim 1/\delta$. Suppose that $|\Omega_{\vec{\alpha}} - \Omega_{\vec{n}}| < \delta$ for some $\Omega_{\vec\alpha}$ not expressible as an integer harmonic of the applied frequencies. 
We would like to estimate how large $|\vec{n}|$ has to be in order for this condition to hold. To achieve this, we assume the frequency vector $\vec\omega$ obeys a so-called Diophantine condition which we will introduce in  Sec.~\ref{sec:heating_estimate}, see \eqnref{eqn:Diophantine},  which quantifies precisely how small $\vec{n}\cdot\vec\omega$ can be as a function of $|\vec{n}|$.
Applying this condition with $\gamma = m-1+\epsilon$ [and replacing, for example,  $\vec{n} \to 2\vec{\alpha} - 2\vec{n}$ if $\vec{\alpha} = (1/2,1/2)$], we find that 
\begin{equation}
|\vec{n}| \geq C (\delta/|\vec{\omega}|)^{-1/(m-1+\epsilon)}
\end{equation}
 for some constant $C$ that does not depend on $\delta$ or on the overall frequency scale. Since the amplitude of peaks decays exponentially with $\vec{n}$ due to the smoothness of the drive, we see that, as we increase our frequency resolution, the amplitude of harmonic peaks that would not be resolvable from the subharmonic DTQC peak goes to zero stretched exponentially fast. Another way to say this is that if we fix the highest-order $n_{\mathrm{max}}$ of harmonic peaks that we want to consider, then the resolution $\delta$ required to distinguish the DTQC phase from the trivial phase is determined by imposing
 \begin{equation}
 \label{eq:phases_distinct}
C (\delta/|\vec{\omega}|)^{-1/(m-1+\epsilon)} \geq n_{\mathrm{max}}.
\end{equation}
Equivalently, the observation time $\tau$ has to satisfy
\begin{align}
\tau \geq |\vec\omega|^{-1} (n_\text{max}/C')^{m-1+\epsilon}
\label{eq:obs_time_req}
\end{align}
for some numerical constant $C'$. Formally, the phases are sharply distinct only when $n_{\mathrm{max}} \to \infty$, which requires $\delta \to 0$. However, the phases are more or less distinct ``in practice'' provided \eqnref{eq:phases_distinct} or \eqnref{eq:obs_time_req} are satisfied for some sufficiently large $n_{\mathrm{max}}$. This is similar to the familiar idea that phases of matter are formally sharply distinct only in the thermodynamic limit, but in practice are distinct as long as the system size is much larger than the correlation length.
Note that the scalings of Eqs.~(\ref{eq:phases_distinct}, \ref{eq:obs_time_req}) get worse as $m$ becomes large; that is, as the number of incommensurate frequencies gets large, we need to observe the system over very long times in order to distinguish phases.
 
Let us observe that \eqnref{eq:phases_distinct} has a very appealing physical interpretation \cite{ThankReferee} in terms of the picture of quasiperiodic driving where time sweeps out a path in the phase torus as shown in Figure \ref{fig:TorusCutProj}(b). After time $\tau$ has elasped, the proportion of the phase torus that is within distance $\Delta$ of the orbit scales like $\sim \Delta^{m-1} |\omega| \tau$. Therefore, if we identify the frequency resolution $\delta$ with the reciprocal of the time $\tau$ over which we observe the system, then \eqnref{eq:phases_distinct} (if we neglect $\epsilon$) is the condition for the fraction of the torus within distance $1/n_{\mathrm{max}}$ of the orbit to be $~\sim 1$. This makes sense, because in order for different quasiperiodic phases to be distinct, the system needs to ``know'' that it is quasiperiodic, that is, it needs to explore the phase torus sufficiently densely.

Finally, let us recall that the heating time $t_*$ sets an upper bound on the observation time $\tau$. Thus, the phases are formally sharply distinct only in the infinite frequency limit where $t_* \to \infty$. Nevertheless, since $t_*$ grows stretched exponentially fast with frequency, we can always increase the frequency while keeping all the other parameters of the problem fixed, in order to have $\tau < t_*$ while still satisfying \eqnref{eq:phases_distinct} and \eqnref{eq:obs_time_req} for $n_{\mathrm{max}} \gg 1$, and so that the phases are distinguishable in practice.

\section{Quasiperiodic topological phases}
\label{sec:TopologicalPhase}

\subsection{Eigenstate classification of quasiperiodic topological phases}
\label{subsec:eigenstate_topological}

In addition to spontaneous symmetry-breaking phases like the DTQC, one can also consider topological phases protected by multiple TTSes. In discussing these topological phases, we will focus on situations where the driving Hamiltonian has  sufficiently strong disorder, so that the preheating Hamiltonian $D$ is MBL (see Sec.~\ref{subsec:long_time_preheating}). 
We remark here that it is not completely settled whether MBL for spatial dimensions greater than one can exist in the strict sense (that is, as an infinite time phenomenon), with some arguments suggesting that, at least for uncorrelated disordered local potentials, it will be destabilized by an ``avalanche'' mechanism \cite{DeRoeck_1608, Luitz_1705, DeRoeck_1705, Gopalakrishnan_1901}. 
However, such a mechanism is highly suppressed for sufficiently strong disorder (compared to other local energy scales of the system), such that it is expected that there is a long timescale $t_{\mathrm{avalanche}}$  \cite{Gopalakrishnan_1901} up to which the effect of avalanches can be neglected and the system exhibits MBL phenomenology. Indeed, such localizing behavior has been verified experimentally in disordered systems in two spatial dimensions \cite{Bordia__17,Choi_1604}. 
In our present case, as the heating time $t_*$ sets a fundamental limit on the physics we describe anyway, we only require that MBL is sufficiently `long-lived', in the sense that $t_{\mathrm{avalanche}}$ is comparable to or greater than $t_*$, which can be straightforwardly achieved with appropriate driving and system parameters. 
%

Since MBL eigenstates have properties analogous to the ground state of gapped local Hamiltonians at all energy densities~\cite{Bauer_1306}, the topological classification of such ground states can be applied to each state of $D$. Static MBL phases can therefore be distinguished by the topology of their eigenstates; this is referred to as eigenstate order~\cite{Huse_1304,Bahri_1307,Chandran_1310}.
In driven systems, the concept of eigenstate order is enriched \cite{vonKeyserlingk_1602_a,Else_1602,Potter_1602,Roy_1602,Roy_1610,Else_PhD} as there are additional topological features arising from the time evolution. Consider the Floquet time-evolution operator \eqnref{eqn:Uapprox} in the pre-heating regime, where we temporarily treat the decomposition as exact.
For an eigenstate $\ket{\Psi}$ of $D$ with energy eigenvalue $\epsilon$, define a time evolution
\begin{equation}
\label{eq:Psi_micromotion}
\ket{\Psi(t)} = e^{i \epsilon t} U(t) P(0) \ket{\Psi} = P(t) \ket{\Psi}.
\end{equation}
We refer to this as the micromotion of the eigenstate $\ket{\Psi}$.
Even when the eigenstate $\ket{\Psi}$ itself describes a topologically trivial static phase, its corresponding micromotion $\ket{\Psi(t)}$ could still be nontrivial. More precisely, let $\Omega_d$ denote the space of all possible gapped ground states of quasilocal Hamiltonians in $d$ spatial dimensions (we mod out by global phase factors in the definition of points in $\Omega_d$). In Floquet systems, the micromotion is periodic and defines a loop in $\Omega_d$ -- we say the micromotion is nontrivial if this loop is not contractible to a point.

We can generalize the notion of topological micromotion \eqnref{eq:Psi_micromotion} to the case of quasiperiodically-driven systems, using the decomposition \eqnref{eqn:Uapprox2}. The micromotion $\ket{\Psi(t)} = P(t)\ket{\Psi}$ is now quasi-periodic, and can be expressed as $\ket{\Psi(t)} = \ket{\Psi(\vec{\omega} t + \vec{\theta}_0)}$, where $\ket{\Psi(\vec{\theta})}$ is parametrized by the torus $\mathbb{T}^m$. This evolution now defines a map on $\mathbb{T}^m \to \Omega_d$. If this map cannot be continuously deformed to the constant map, the evolution is nontrivial. 

The question of how to classify maps $\mathbb{T}^m \to \Omega_d$, or more generally maps $\mathcal{X} \to \Omega_d$ for any space $\mathcal{X}$, remains in principle an open problem. There are, however, good reasons \cite{Turaev_9910, Turaev_0005, Kitaev_0506, Lurie_0905, KitaevIPAM, Thorngren_1612, Xiong_1701, Else_1810} to conjecture that the answer to this question is already contained within the frameworks used to classify   stationary topological phases.
See Appendix \ref{appendix:topological} for a more technical discussion.

Here, we will focus on a special class of phases, which are natural generalizations of bosonic symmetry-protected topological (SPT) phases and already display a rich set of behaviors.
Recall that equilibrium bosonic SPT phases with unitary symmetry $G$ in $d$ spatial dimensions are believed  to be partially classified\footnote{Note that this classification is often stated as $\mathcal{H}^{d+1}(G, \mathrm{U}(1))$, which is equivalent for compact groups provided that the cohomology with $\mathrm{U}(1)$ coefficients is defined appropriately. However, here we are dealing with non-compact groups such as translations and must use $\mathcal{H}^{d+2}(G, \mathbb{Z})$.}
 by the group cohomology $\mathcal{H}^{d+2}(G, \mathbb{Z})$ \cite{Dijkgraaf__90,Chen_1008,Schuch_1010,Chen_1106}. 
We can also write group cohomology as the \emph{singular} cohomology of the so-called classifying space  $BG$ of the group $G$: $\mathcal{H}^{d+2}(G, \mathbb{Z}) \cong H^{d+2}(BG, \mathbb{Z})$. The idea is that maps $\mathcal{X} \to \Omega_d$ in the presence of symmetry $G$ should be partially classified by replacing $BG \to \mathcal{X} \times BG$, i.e.~the classification is $H^{d+2}(\mathcal{X} \times BG, \mathbb{Z})$. In fact, under general conditions for both SPT and symmetry-enriched topological (SET) phases of bosons or fermions, the classification can be derived from the classification of equilibrium SPT and SET phases simply by replacing $BG \to \mathcal{X} \times BG$. We give some justification for this in Appendix \ref{appendix:topological}.

The following powerful statement follows, using the fact that $B\mathbb{Z}^m = \mathbb{T}^m$. We find that the classification of maps $\mathbb{T}^m \to \Omega_d$ in the presence of symmetry $G$ is in one-to-one correspondence with the classification of stationary symmetry-protected and symmetry-enriched phases with symmetry $\mathbb{Z}^m \times G$. The interpretation of the additional $\mathbb{Z}^m$ symmetry is that they correspond to the ``multiple time-translation symmetries'' referred to in Sec.~\ref{sec:twistedTTSes}. We call this the \emph{quasi-periodic equivalence principle}. The periodic case, which we could call the ``Floquet equivalence principle'', was discussed in \cite{Else_1602,Else_PhD}; compare also the ``crystalline equivalence principle'' of Ref.~\cite{Thorngren_1612}.

The simplest case of fundamentally nonequilibrium topological phase are ones where eigenstates $\ket{\Psi}$ are themselves in the trivial $G$ SPT phase. In this case, the classification of maps $\mathbb{T}^m \to \Omega_d$ with a $G$ symmetry imposed has the general decomposition
\begin{equation}
\label{eq:torus_decomposition}
\bigoplus_{r=1}^m \bigoplus_{(k_1 \cdots k_r)} \mathcal{C}_{d-r},
\end{equation}
where $\bigoplus_{(k_1 \cdots k_r)}$ indicates a sum over all possible choice of non-repeating numbers $k_1, \cdots ,k_r \in \{ 1, \cdots, m \}$. Here $\mathcal{C}_s$ is the classification of equilibrium SPT phases or invertible topological orders with $G$ symmetry in $s$ spatial dimensions for $s \geq 0$; for $s < 0$ we set $\mathcal{C}_s = \pi_{-s}(\Omega_0)$. For the case of the group cohomology classification $\mathcal{H}^{d+2}(\mathbb{Z}^{\times m} \times G, \mathbb{Z})$, \eqnref{eq:torus_decomposition} can be proven using the K\"unneth formula \cite{Wen_1301},
although the result holds more generally (see Appendix \ref{appendix:topological}).

The formula \eqnref{eq:torus_decomposition} has a simple physical interpretation: the different terms correspond to cases where $r$ time-translations symmetries are `essentially' involved in the definition of the corresponding phases. One term corresponds to those maps $\mathbb{T}^m \to \Omega_d$ that depend on all $m$ incommensurate frequencies, which are classified by $\mathcal{C}_{d-m}$. The remaining terms of \eqnref{eq:torus_decomposition} depend on only $r < m$ frequencies, corresponding to micromotions which vary on $r$-dimensional hyperplanes $\mathbb{T}^r \subseteq \mathbb{T}^m$ and are deformable to constant evolutions in the other directions. 

For periodic drives ($m=1$), \eqnref{eq:torus_decomposition} reduces to just $\mathcal{C}_{d-1}$. 
In this case, we can think of the non-triviality of the micromotion as a pump per Floquet cycle, which nucleates equilibrium $(d-1)$-dimensional SPT phases and transports them onto the boundary of the system~\cite{vonKeyserlingk_1602_a,Else_1602,Potter_1602,Potter_1610}.
For $m \geq 2$, $\mathcal{C}_{d-m}$ terms in \eqnref{eq:torus_decomposition} can be interpreted as a higher-order pump (`pump of pumps'). However, making this notion concrete for observables on the boundary of model systems is beyond the current discussion.

Let us now focus on the micromotions which depend on all $m$ frequencies.
The ideas of Sec.~\ref{sec:emergentsymmetries} allow us to construct non-trivial micromotions in a preheating regime through what we refer to as a ``bootstrap construction''. Indeed, suppose that we choose $Q = ({1}/{N}) \mathbb{I}_{m }$ for integer $N$, so that $\mathcal{G} = \mathbb{Z}_N^{\times m}$. For concreteness we will consider the case $m=2$. Then the effective Hamiltonian $D$ has an enhanced symmetry $G \times \mathcal{G}$, and accordingly it can host bosonic SPT phases protected by $G \times \mathcal{G}$, which are classified by $\mathcal{H}^{d+2}(G \times \mathcal{G}, \mathbb{Z})$. Invoking the K\"unneth formula, we see that this contains a factor
\begin{equation}
\label{finite_Kunneth}
\mathcal{H}^1(\mathbb{Z}_N, \mathcal{H}^1(\mathbb{Z}_N, \mathcal{H}^{d}(G, \mathbb{Z}))).
\end{equation}
Suppose we ensure that the effective Hamiltonian $D$ has eigenstates which are in an SPT phase associated with the emergent $\mathcal{G} \times G$ symmetry corresponding to an element of \eqnref{finite_Kunneth}.
The projection map $\mathbb{Z} \to \mathbb{Z}_N$ induces a map in group cohomology
\begin{align}
& \mathcal{H}^1(\mathbb{Z}_N, \mathcal{H}^1(\mathbb{Z}_N, \mathcal{H}^{d}(G, \mathbb{Z})) \nonumber \\ &\to \mathcal{H}^1(\mathbb{Z}, \mathcal{H}^1(\mathbb{Z}, \mathcal{H}^{d}(G, \mathbb{Z}))) \nonumber \\ &\cong \mathcal{H}^{d}(G,\mathbb{Z}), \label{ZKunneth}
\end{align}
where we have used the fact that $\mathcal{H}^1(\mathbb{Z}, A) \cong A$ for any finite Abelian group $A$. Note that $\mathcal{H}^d(G,\mathbb{Z})$ obtained in \eqnref{ZKunneth} corresponds to the classification of SPT phases in $d-2$ dimensions, i.e.~$\mathcal{C}_{d-2}$ in the above notation. If the image of the element of \eqnref{finite_Kunneth} in \eqnref{ZKunneth} is nontrivial, it means that the micromotions corresponding to the eigenstates of $D$ are nontrivial in a way that relies on both frequencies.

\subsection{A minimal example of a non-trivial quasiperiodic topological phase}
\label{subsec:minimal}
In this subsection, we will specialize the general and somewhat abstract considerations above. We will consider a particular example of a new quasiperiodic topological phase, which cannot occur in either a stationary  or in a periodically driven system. 

The phase occurs in a two-dimensional spin system driven quasiperiodically with $m=2$ independent frequencies and containing a microscopic symmetry $G = \mathbb{Z}_2$. According to the general classification discussed above, there is a $\mathcal{C}_0 = \mathcal{H}^2(\mathbb{Z}_2, \mathbb{Z}) = \mathcal{H}^1(\mathbb{Z}_2, \mathrm{U}(1)) = \mathbb{Z}_2$ classification of quasiperiodic topological phases that rely on both frequencies. Thus, there is a single nontrivial topological phase in this classification. Below, we will give an explicit Hamiltonian construction to realize this phase, but first discuss its universal properties.

Following the general approach of this paper, we can make this phase well defined in the twisted high-frequency limit. There we obtain an effective static Hamiltonian with an emergent $\mathbb{Z}_2 \times \mathbb{Z}_2$ symmetry (coming from the multiple time-translation symmetries) in addition to the microscopic $\mathbb{Z}_2$. For the phase to be topological and stable, the Hamiltonian must realize an MBL phase in the bulk whose eigenstates are SPT states under the overall $\mathbb{Z}_2^{\times 3}$ symmetry. Such SPT states, however, cannot remain MBL at the boundary while preserving all three $\mathbb{Z}_2$ symmetries. Therefore, a dramatic signature of this topological phase, as with SPT phases more generally \cite{Kane_0506,Pollmann_0909,Levin_1202,Khemani_1508}, is the existence of a non-trivial boundary. Depending on the exact nature of the Hamiltonian on the boundary, either there is a boundary DTQC (which occurs when the emergent $\mathbb{Z}_2 \times \mathbb{Z}_2$ is spontaneously broken), or the microscopic $\mathbb{Z}_2$ symmetry will be spontaneously broken on the boundary, or there is topologically induced boundary delocalization.

\begin{figure}
\includegraphics{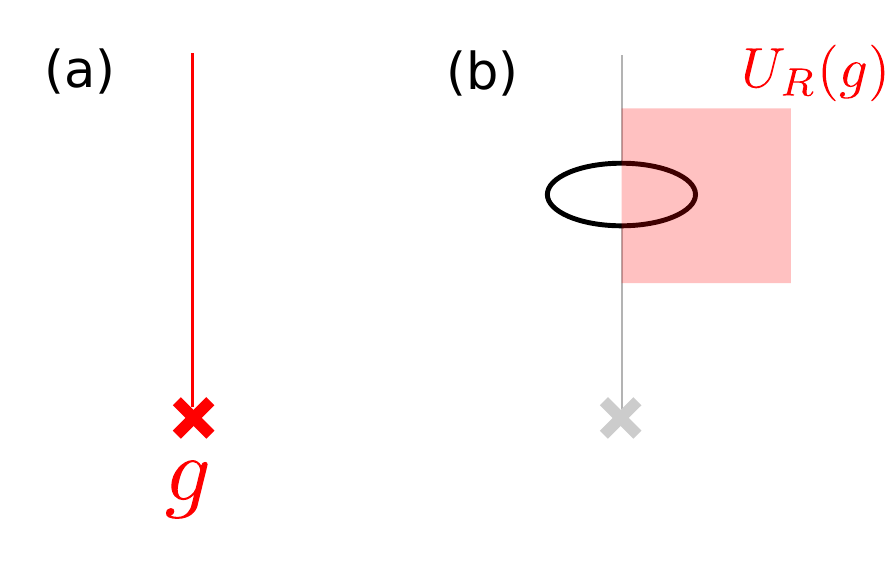}
\caption{\label{symmetryflux}(a) A symmetry flux for a symmetry $g$ consists of a line terminating in a point. (b) Terms of the Hamiltonian (shown as an oval) that cross the line get conjugated by the unitary action, restricted to act only one side of the line [denoted by $U_R(g)$ in the figure].}
\end{figure}

Let us discuss another diagnostic of the nontrivial topology. This diagnostic relates to a ``symmetry twist defect'' in the bulk, which is a standard probe for SPT phases \cite{Barkeshli_1410}. We define such a defect by introducing a line terminating in a point (Figure \ref{symmetryflux}) and conjugating the local terms of the Hamiltonian that straddle the line by the microscopic $\mathbb{Z}_2$ symmetry, restricted to act only on one half of the line.
In the nontrivial SPT-MBL phase we are discussing, the Hilbert space of local states near the termination point carries a projective representation of the emergent $\mathbb{Z}_2 \times \mathbb{Z}_2$ symmetry. From this, combined with the fact that the effective static Hamiltonian $D$ must commute with the emergent $\mathbb{Z}_2 \times \mathbb{Z}_2$, we can deduce the existence of a topologically protected qubit; that is, there exist localized operators $\tau^z$ and $\tau^x$ near the defect such that $\tau_z^2 = \tau_x^2 = 1$, $\tau_z \tau_x =  - \tau_x \tau_z$, and $[\tau^x,D] = [\tau^z,D] = 0$. This is saying there is an effective qubit degree of freedom which does not couple to the rest of the system under time evolution in the rotating frame. In the lab frame, this effective qubit degree of freedom becomes time dependent due to the rotating frame transformation, but it comes back to arbitarily closely even after very long times, due to the recurrences of the quasiperiodic in time rotating frame transformation. Note that this is a stronger condition from just MBL. In MBL there are $l$-bit operators $\tau_i^z$ which preserve the memory of the initial state forever, but normally their conjugate operators $\tau_i^x$ would decohere, unlike what happens here.
We can imagine probing this effect numerically if we assume that $\tau^z$ and $\tau^x$ have some nonzero overlap with some local spin operators $\sigma_i^z$ and $\sigma_i^x$. Then the unequal time correlator $\langle \sigma_i^\alpha(t) \sigma_i^\alpha \rangle$, $\alpha = x,z$, will fail to decay to zero as $t \to \infty$. Finally, we note that we will find similar behavior if, instead of considering a symmetry twist defect, we introduce a boundary for the system, spontaneously break the microscopic $\mathbb{Z}_2$ symmetry on the boundary, and then examine the properties of a domain wall.

It is an important point to consider to what extent the signatures discussed here depend on the presence of an emergent $\mathbb{Z}_2 \times \mathbb{Z}_2$ symmetry. We know that such an emergent symmetry is always present when we stabilize the phase under consideration through the twisted high-frequency limit, which is the main focus of this paper. Nevertheless, one can imagine, as we have alluded to previously, that, in the presence of strong disorder leading to MBL, quasiperiodically driven phases of matter can be stabilized even in a regime not captured by the twisted high-frequency limit. In such a regime, the concept of eigenstate micromotions giving rise to a phase classification, as discussed in generality in the previous subsection, will still apply, but it is not clear if we still expect an emergent $\mathbb{Z}_2 \times \mathbb{Z}_2$ symmetry to be present.
On the other hand, the general discussion of the previous subsection shows that with respect to the classification of phases, we can still treat the system as having a $\mathbb{Z} \times \mathbb{Z}$ symmetry.

One can check that the phase we are discussing, which is originally a $\mathbb{Z}_2 \times \mathbb{Z}_2 \times \mathbb{Z}_2$ SPT (i.e.~protected by the combination of the emergent $\mathbb{Z}_2 \times \mathbb{Z}_2$ and the microscopic $\mathbb{Z}_2$ symmetry) remains nontrivial if we relax the symmetry group to $\mathbb{Z} \times \mathbb{Z} \times \mathbb{Z}_2$. What may be somewhat in doubt, however, is whether the topologically protected qubit associated with the symmetry twist, as discussed above, remains robust, for reasons that we discuss in more detail in Appendix \ref{appendix:relaxing}. However, the property of the boundary not being localizable while preserving the microscopic $\mathbb{Z}_2$ symmetry and the quasiperiodicity of the drive should still hold. Moreover, in Appendix \ref{appendix:relaxing} we also argue that a more robust feature of the symmetry twist defect should be that it hosts a ``topological pump'' of energy between the two incommensurate frequencies (see Ref.~\cite{Martin_1612} and also Section \ref{subsec:relation} below) that is \emph{half-quantized}.

\subsubsection{Example Hamiltonian}
\label{subsec:topological_example}

\begin{figure}[t]
\includegraphics[width=1\columnwidth]{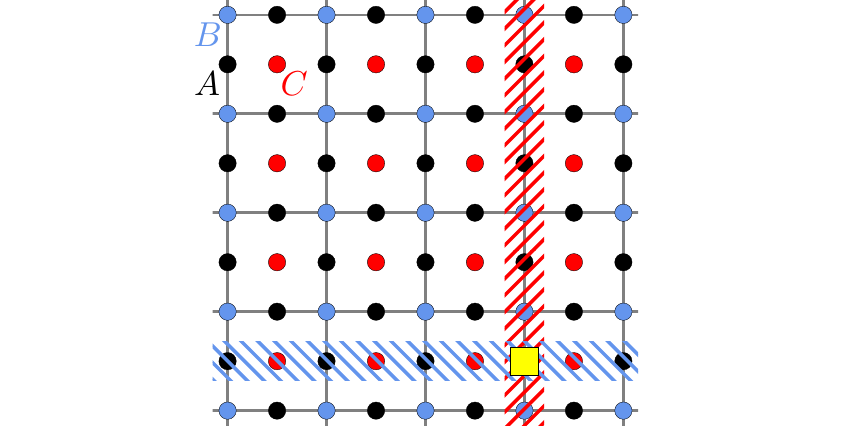}
\caption{\label{fig:squarelattice} Lattice of spins used in the construction of a minimal Hamiltonian displaying the quasiperiodic symmetry protected topological phase introduced in Sec.~\ref{subsec:topological_example}.
Spins are placed on the links ($A$, black),  vertices ($B$, blue), and plaquettes ($C$, red) of a 2d square lattice.  The ground state is a superposition of fluctuating domain walls of Ising orders of the $B$ spins (blue hatched region) and $C$ spins (red hatched region). In the topological phase, the intersection of these domain wall (yellow square) binds an Ising charge of $A$.
} 
\end{figure}

Here we give an explicit construction of a Hamiltonian on the lattice realizing the phase discussed above. Consider a square lattice in two spatial dimensions with spin-1/2 particles on the links ($A$), on the sites ($B$), and on the plaquettes ($C$), as shown in Fig.~\ref{fig:squarelattice}. Assume that the system has a microscopic $\mathbb{Z}_2$ symmetry generated by $X^A := \prod_{l \in A} \sigma_{l}^x$, where the  product is over spins on the links. We then define $S_B = \frac{1}{2}\sum_{v \in B} (\sigma_{v}^x+1)$, where the sum is over all spins on the vertices, and $S_C = \frac{1}{2}\sum_{p\in C} (\sigma_p^x+1)$, where the sum is over spins on the plaquettes. Furthermore, $X^B = \exp(-i\pi S^B) = \prod_{v \in B} \sigma_v^x$ and $X^C = \exp(-i\pi S^C) = \prod_{p \in C} \sigma_p^x$.
We consider the time-quasiperiodic Hamiltonian
\begin{equation}
\label{eq:topexample}
H(t) = f_1(t) S_B + f_2(t) S_C + D_0 + V(t).
\end{equation}
Here $D_0$ is a time-independent Hamiltonian that commutes with $X^A, X^B$ and $X^C$ that we will specify explicitly below and $V(t)$ is some generic perturbation which must commute with the microscopic symmetry $X^A$. All the time-dependent quantities $f_1(t)$, $f_2(t)$ and $V(t)$ are quasiperiodic with frequency vector $\vec{\omega} = (\omega_1, \omega_2)$ assumed to be large compared to the local energy scales of $D_0, V(t)$. 

We now choose the driving functions $f_1(t), f_2(t)$ to be
\begin{align}
f_i(t) = \pi \omega_i \Delta_N(\omega_i t) 
\end{align}
for $i = 1,2$. Here $\Delta_N(\theta)$ is the smooth  approximation to the Dirac Delta comb, as utilized in the DTQC example of Sec.~\ref{subsec:Z2DTQC}, given in \eqnref{eqn:Fejer}. The drive corresponds to one in which $Q = ({1}/{2}) \mathbb{I}_2$ following the discussion of the frame-twisted high frequency limit of Sec.~\ref{sec:twisted_high_freq_QP}.
This choice of driving function strikes a balance between having a smooth drive, necessarily to achieve a long-lived preheating regime, and the property that the interaction Hamiltonian $H_\text{int}(\vec\theta) = U_0(\vec\theta)^\dagger (D + V(\vec\theta)) U_0(\vec\theta)$ (see Sec.~\ref{sec:twisted_high_freq_QP}) has the   term $U_0(\vec\theta)^\dagger D_0 U_0(\vec\theta) $ satisfying, roughly, $U_0(\vec\theta)^\dagger D_0 U_0(\vec\theta) \approx D_0$. 
 Accordingly, upon taking the frame-twisted high-frequency limit, we find that there is an effective Hamiltonian $D = D_0 + \delta D$ in a long-lived preheating regime at high driving frequencies, where $\delta D$ commutes with $X^A$, $X^B$, and $X^C$, and where the local strength of $\delta D$ is on the order of that of $V(t)$.  

Suppose $D_0$ is an MBL Hamiltonian whose eigenstates are in the SPT phase for symmetry $\mathbb{Z}_2^{\times 3}$ corresponding to the non-trivial element of $\mathbb{Z}_2 \cong \mathcal{H}^1(\mathbb{Z}_2, \mathcal{H}^1(\mathbb{Z}_2, \mathcal{H}^2(\mathbb{Z}_2, \mathbb{Z}))) \leq \mathcal{H}^4(\mathbb{Z}_2^{\times 3}, \mathbb{Z})$.
 Indeed, such Hamiltonians can be realized via a decorated domain wall construction \cite{Chen_1303},
 where the ground state is a superposition of fluctuating domain walls. The intersections of $X^B$ domain walls and $X^C$ domain walls (which precisely occur at $A$ spins) carry a $-1$ charge of $X^A$; see Fig.~\ref{fig:squarelattice}. Concretely, we can write $D_0 = \mathcal{U}(\sum_v h_v \sigma_x^v + \sum_p h_p \sigma_x^p + \sum_l h_l \sigma_x^l) \mathcal{U}^{\dagger}$, where the coefficients $h_v$,$h_l$,$h_p$ are chosen from some random distribution, and
\begin{widetext}
\begin{equation}
\mathcal{U} = \sum_{\substack{ \{ \sigma_v \},  \{ \sigma_p  \},\\  \{ \sigma_l  \}  = \pm 1 }} \prod_{   l \in A    } (-1)^{ \frac{1}{8} ( 1 - \sigma_{p_1(l)} \sigma_{p_2(l)} )(1 - \sigma_{v_1(l)} \sigma_{v_2(l)}) (1-\sigma_{l})} \ket{\{ \sigma_v \}, \{ \sigma_p \}, \{ \sigma_l \}}
\bra{\{ \sigma_v \}, \{ \sigma_p \}, \{ \sigma_l \}},
\end{equation}
\end{widetext}
where $p_1(l), p_2(l)$ are the two plaquettes adjacent to the link $l$; $v_1(l), v_2(l)$ are the two vertices connected by the link $l$; and $\ket{ \{ \sigma_v\}, \{ \sigma_p \}, \{ \sigma_l \}}$ is a basis state labelled by the eigenvalues of $\sigma_v^z$, $\sigma_p^z$, $\sigma_l^v$ for all vertices $v$, plaquettes $p$, and links $l$.
Note that if the symmetric term $\delta D$ is sufficiently weak, the effective Hamiltonian $D$ 's eigenstates will also belong to the same SPT phase as the MBL Hamiltonian $D_0$.  

Following the general arguments from before, one finds that the eigenstates of $D$ indeed undergo nontrivial micromotion, classified by the non-trivial element of $\mathcal{H}^1(\mathbb{Z}_2, \mathrm{U}(1)) = \mathbb{Z}_2$, which also classifies SPT phases with $\mathbb{Z}_2$ symmetry in 0 spatial dimensions (i.e.~$\mathbb{Z}_2$ charges).

\subsection{Relation to previous works}
\label{subsec:relation}
Some aspects of topology in quasiperiodically driven systems have previously been studied, primarily in the context  of non-interacting systems. We now briefly mention how these phenomena are related to our classification and construction. 

Ref.~\cite{Martin_1612} discussed ``quantized energy pumping'' between $m=2$ different frequencies in $d=0$ spatial dimensions. Let us interpret the topological invariant that was found. 
In Ref.~\cite{Martin_1612}, the limit of very low frequencies was studied, so that the adiabatic theorem ensured that the system is always in the ground state of the instantaneous Hamiltonian. The latter varies quasiperiodically, $H(t) = H(\vec{\omega} t + \vec{\theta}_0)$ for some continuous function $H(\vec{\theta})$ defined on the 2-torus $\mathbb{T}^2$. Therefore, we can treat the projector onto the ground state of $H(\vec{\theta})$ as a function of $\vec{\theta}$, which defines a micromotion that can be classified according to the general framework discussed above.
We find that
\begin{equation}
H^2(\mathbb{T}^2, \mathbb{Z}) = \mathcal{H}^2(\mathbb{Z} \times \mathbb{Z}, \mathbb{Z}) = \mathbb{Z}.
\end{equation}
This $\mathbb{Z}$ invariant is the Chern number over the torus $\mathbb{T}^2$ of the Berry connection of the ground state.
Note that the adiabatic limit considered is very different from the high-frequency one we considered above -- in particular, there cannot be a decomposition of the time-evolution operator of the form \eqnref{eqn:Uapprox2} here \cite{Jauslin_1991}. However, this does not affect the classification.
The topological invariants discussed in Ref.~\cite{Crowley_1808} can also be understood in a similar fashion.

One might think that the nontrivial invariant of this kind could also be realized in the frame-twisted high-frequency limit through a bootstrap construction as in Sec.~\ref{subsec:topological_example}. However, this is not the case. The reason is that in the frame-twisted high-frequency limit, the emergent symmetry is always a finite group $\mathcal{G} = \mathbb{Z}^{\times 2}/\mathcal{L}'$. One can check that although $H^2(\mathbb{Z} \times \mathbb{Z}, \mathbb{Z}) = \mathbb{Z}$, the image in this group under the map $\mathcal{H}^2(\mathcal{G}, \mathbb{Z}) \to \mathcal{H}^2(\mathbb{Z} \times \mathbb{Z}, \mathbb{Z})$ induced by the projection map $\mathbb{Z} \times \mathbb{Z} \to \mathcal{G}$, is always trivial for any finite quotient $\mathcal{G}$.

In Ref.~\cite{Peng_1805}, ``Majorana multiplexing'' was introduced, where a system in one spatial dimension which is quasiperiodically driven with $m$ incommensurate frequencies, may host $2^m$ different kinds of boundary Majorana zero modes that can occur simultaneously, while being protected from coupling to each other. This is consistent with  our general framework, since the Majorana modes are distinguished by the charge ($\pm 1$) they carry under each of the $m$ generators of the $\mathbb{Z}^{\times m}$ ``time-translation symmetry'' group. One can readily show that any combination of such boundary Majorana zero modes can be realized in a frame-twisted high-frequency limit in a bootstrap construction of the general form described above. In particular, this demonstrates that the ``Majorana multiplexing'' of Ref.~\cite{Peng_1805} can be made stable to interactions, at least up to the parametrically long heating time $t_*$, in the presence of strong disorder and driving at high frequencies. 

\section{Estimating the heating time in quasiperiodically-driven systems}
\label{sec:heating_estimate}

We return to the important question of dynamics and slow heating in quasiperiodically-driven systems which underpin the existence of the nonequilibrium phases of matter discussed in the preceding sections.
We first give schematic arguments within linear response that can be used to understand  the stretched exponential scaling of the heating time $t_*$ of \eqnref{eqn:heating_t}, by extending the discussion of Sec.~\ref{subsec:longlifetimes}. This analysis follows that of Ref.~\cite{Abanin_1507}. The scaling of $t_*$ that we obtain through these arguments will be borne out in the rigorous bounds on heating of Sec.~\ref{sec:proof}. 

\subsection{Linear response arguments}

As discussed in Sec.~\ref{subsec:longlifetimes}, the heating rate in a quasiperiodically driven system is governed by the competition between the decay of Fourier series coefficients of the driving term $V_{\vec{n}}$ with $|\vec{n}|$, and the fact that $| \vec{\omega} \cdot \vec{n}|/|\vec{\omega}|$ could become ever smaller as $|\vec{n}|$ increases. In order to estimate the heating time, we need to quantify exactly how small $|\vec{\omega} \cdot \vec{n}|$ can get as a function of $|\vec{n}|$. The key mathematical tool is the following Diophantine condition
\begin{align}
\frac{|\vec\omega \cdot \vec n|}{|\vec\omega|} \geq \frac{c}{| \vec n|^{\gamma}},
\label{eqn:Diophantine}
\end{align}
for all integer vectors $\vec{n} \neq \vec{0}$, where $c$ is a constant depending on the ratios of the frequencies $\omega_i/\omega_j$, but not on the overall frequency scale $|\omega|$. 
It can be rigorously shown (see Appendix \ref{sec:rarity_resonances}) that all choices of frequency vectors except a set of measure zero obey such a condition for any $\gamma > m-1$ (with a constant $c$ depending on $\gamma$). 
In particular, \eqnref{eqn:Diophantine} holds (again, for any $\gamma > m-1$, and with $c$ depending on $\gamma$) when the ratios $\omega_i/\omega_j$ are all irrational algebraic numbers -- that is, they are each roots of some polynomial equation with integer coefficients, in which case it is known as the  \emph{subspace theorem}~\cite{Schmidt__72,SchmidtBook}.

In our analysis we always assume that the frequency vectors $\vec\omega$ obey \eqnref{eqn:Diophantine}.
This will allow us to derive lower bounds on the scaling of the heating time at high frequencies (recall that this high frequency limit corresponds to taking $\omega := |\vec{\omega}| \to \infty$ while keeping the ratios $\omega_i/\omega_j$ fixed).
For the periodically-driven case, we can use \eqnref{eqn:Diophantine} with $\gamma = 0$.

In linear response, a term $V_{\vec{n}}$ can drive transitions between energy levels of the average Hamiltonian separated by energy $\Delta E_{\vec{n}} = \vec{\omega} \cdot \vec{n}$. However, such processes require a rearrangement of at least $\sim \Delta E_{\vec n}/J$ sites, where $J$ is the local strength of the average Hamiltonian, and hence the amplitudes for such processes are suppressed by a factor  
\begin{equation}
\label{eq:quasiperiodic_factor}
e^{- \kappa (\Delta E_{\vec n} /J)} \leq e^{-\kappa (\omega/J) |\vec{n}|^{-\gamma}}
\end{equation}
for some constant $\kappa > 0$, using the Diophantine condition \eqnref{eqn:Diophantine}.
 Since only $|\vec{n}| \neq 0$ processes contribute to heating, the smallest value of $|\vec{n}|$ is $1$, in which case this term evaluates to $e^{-\kappa (\omega/J)}$, an exponentially small factor at high frequencies. For $\gamma = 0$ (the periodic case), this is the whole story since there is no $\vec{n}$ dependence on the right-hand side of \eqnref{eq:quasiperiodic_factor}, and we recover the usual exponential scaling of heating time with frequency. For quasiperiodically driven systems, on the other hand, $\gamma > 0$ and then the right-hand side of \eqnref{eq:quasiperiodic_factor} goes to $1$ at large $|\vec{n}|$. This would suggest that the heating rate is not suppressed even at high frequencies.
 
To derive stronger results, the key is to use the smoothness of the driving. 
If we assume the local strength of $ V_{\vec{n}} $ is bounded by some function $g(|\vec{n}|)$, then the rate of a heating process governed by a given $V_{\vec{n}}$ is controlled by both \eqnref{eq:quasiperiodic_factor} and $g(|\vec{n}|)$, which gives a suppression factor
 \begin{equation}
 \label{eq:corrected_scaling}
 g(|\vec{n}|)^2 e^{-\kappa (\omega/J) |\vec{n}|^{-\gamma}}.
 \end{equation}
 To obtain the full heating rate, we sum over all such processes. Asymptotically, it is governed by the fastest heating rate:  For a given form of $g(|\vec{n}|)$, there will be a value of $|\vec{n}|$ that maximizes \eqnref{eq:corrected_scaling}, and the heating time then scales like the inverse of the maximum value of \eqnref{eq:corrected_scaling}. 
  If $g(|\vec{n}|)$ decays exponentially in $|\vec{n}|$, which is the case we consider throughout our paper, then we find $t_* \sim \exp[{\text{const.} (\omega/J)^{1/(\gamma+1)}}]$, which is a similar scaling to the bound in \eqnref{eqn:heating_t}. One sees that unlike the Floquet case, for quasiperiodic driving ($\gamma > 0$), it is crucial to use the smoothness of the driving. Indeed, if $g(|\vec{n}|)$ does not decay with $|\vec{n}|$, we see that the heating time does not grow with frequency at all if $\gamma > 0$. Treating other forms of drives, where $g(|\vec{n}|)$ has different asymptotic forms at large $\vert \vec{n} \vert$, is an interesting direction to develop; see Sec.~\ref{subsec:nonsmooth}.

\subsection{Stability to varying the frequencies}

Our results have assumed keeping the ratios $\omega_i/\omega_j$ constant, and only changing the overall frequency scale $\vert \vec\omega \vert$ to reach the high-frequency limit. 
In practice, the physical time evolution of the system should not depend too sensitively on the precise ratios $\omega_i/\omega_j$ and associated constant $c$. We expect there to be a long timescale before the system can resolve the distinction between a drive with driving frequencies $\vec\omega$ and one with nearby driving frequencies $\vec\omega + \delta \vec\omega$ (for example, a close rational approximation, for which our results technically do not apply).

We can estimate this timescale  at the level of the linear response arguments. Consider a vector $\vec{\omega}$ which satisfies the conditions for our results to hold, \eqnref{eqn:Diophantine}, 
and consider perturbing the frequency $\vec{\omega} \to \vec{\omega}' = \vec{\omega} + \delta \vec{\omega}$. By rescaling we can ensure that $|\vec{\omega}'| = |\vec{\omega}|$.
From the triangle inequality,
\begin{align}
\frac{|\vec{\omega}' \cdot \vec{n}| }{|\vec{\omega}|}  \geq c|\vec{n}|^{-\gamma} - \frac{|\delta \vec{\omega}| |\vec{n}| }{|\vec{\omega}|} 
\geq \frac{c}{2} |\vec{n}|^{-\gamma}, 
\end{align}
provided that
\begin{equation}
\label{eq:n_condn}
|\vec{n}| \leq n_{\mathrm{max}} = \left( \frac{c|\vec{\omega}|}{2|\delta \vec{\omega}|}  \right)^{{1}/{(\gamma+1)}}.
\end{equation}
That is, the new frequency vector still satisfies the required approximation condition (with a constant $c$ that is independent of $\delta \vec{\omega}$), but only when $|\vec{n}| \leq n_{\mathrm{max}}$.

In this context, the effect of processes with $|\vec{n}| \geq n_{\mathrm{max}}$ will never be felt if we truncate the Fourier expansion $V_{\vec{n}}$ of the driving so that $V_{\vec{n}} = 0$ for $|\vec{n}| > n_{\mathrm{max}}$. Assuming smooth driving, so that $\| V_{\vec{n}} \|$ decays exponentially with $|\vec{n}|$, then this truncation will not substantially affect the dynamics until a time
\begin{equation} 
t_{\mathrm{perturb}} \sim e^{Cn_{\mathrm{max}}} = \exp\left(C \left[\frac{c|\vec{\omega}|}{2|\delta \vec{\omega}|}\right]^{1/(\gamma+1)}\right),
\end{equation}
for some constant $C$.  After $t_{\mathrm{perturb}}$, the evolution of the system will be governed by a different regime.

Therefore, the worst that could possibly happen when perturbing the frequency vector is that the heating time $t_*$ will grow with frequency until it reaches $t_{\mathrm{perturb}}$ and then stop growing because the system immediately heats at time $t_{\mathrm{perturb}}$. However, this heating scenario is not inevitable, because the dynamics after $t_{\mathrm{perturb}}$ could themselves have a much longer heating time $t_*'$ that grows with frequency. For example, if $\vec{\omega}'$ is a rationally related frequency vector, then the system is really periodic and one can invoke Floquet slow heating results.

 \section{Long-lived, preheating regime  in quasiperiodically-driven systems: rigorous results and sketch of proof} 
 \label{sec:proof}
 
 In this section we finally formalize the preceding discussions on slow heating and emergent symmetries, into  a rigorous theorem on preheating in quasiperiodically-driven systems. We do this by explicitly constructing the effective Hamiltonian $D$ and providing bounds on its validity.
The construction employed also manifestly allows for emergent symmetries to appear in $D$, should the driving Hamiltonian have twisted-TTSes as described in Sec.~\ref{sec:twistedTTSes}.
 The exact formulation of the proof and heating bounds is somewhat involved and we leave the technical details to  Appendix \ref{appendix:proof}. Here we shall provide an accessible statement of the theorem, as well as an outline of the proof.

\subsection{Conditions and setup}
 
 Recall that we are considering quasiperiodically driven systems defined on   lattices in $d$-spatial dimensions with locally bounded Hilbert spaces, i.e.~spins or fermions, with a driving frequency vector $\vec\omega$ that is rationally independent, $\vec\omega \cdot \vec{n} \neq \vec{0}$ for any nonzero integer vector $\vec{n} \in \mathbb{Z}^m$. We furthermore assume that the time-quasiperiodic Hamiltonian is quasilocal, with the drive performed at high-frequencies and is furthermore smooth in time.
What we mean by the high-frequency condition is that each driving frequency $\omega_i$ is large compared to the local energy scales $J$ of the Hamiltonian $H(\vec\theta)$, 
and what we mean by the smoothness of drive condition is the imposition of the condition that the Fourier modes $H_{\vec{n}}$ decay exponentially fast with $|\vec{n}|$, i.e.~$\|H_{\vec{n}}\| = O(e^{- \kappa' |\vec n|})$ for some $\kappa' > 0$.
 Additionally, let us remind the reader that we will assume that the frequency vector $\vec\omega$ obeys the Diophantine condition \eqnref{eqn:Diophantine}, which holds for all choices of rationally-independent frequency vectors except for a set of measure zero.

 More precisely, our construction and theorem makes use of a notion of a local norm $\|O\|_{\kappa}$ parameterized by a constant $\kappa > 0$, appropriate for a many-body operator $O(\vec\theta)$ parameterized on the torus $\mathbb{T}^m$ and which acts on an infinite lattice with bounded local Hilbert space dimension.
To define this norm, let us first write (non-uniquely) a many-body operator $O(\vec\theta)$ in terms of a sum of local `potentials' $O_{Z}(\vec\theta)$ on $\mathbb{T}^m$, where $Z$ is a finite subset of the lattice  and where $O_{Z}(\vec\theta)$ only acts non-trivially on sites $x \in Z$ for all $\vec\theta \in \mathbb{T}^m$.
We can furthermore decompose the local potential into its Fourier modes $O_Z(\vec\theta) = \sum_{\vec n} O_{Z,\vec{n}\in \mathbb{Z}^m} e^{i \vec{n} \cdot \vec{\theta}}$ so that 
\begin{align}
O(\vec\theta) = \sum_{Z} O_{Z}(\vec\theta) = \sum_{Z,\vec{n}} O_{Z,\vec{n}} e^{i \vec{n} \cdot \vec\theta}.
\label{eqn:O}
\end{align}
We then define  $\| O \|_{\kappa}$ to be 
\begin{equation}
\label{eq:colored_potential_norm}
\| O \|_{\kappa} = \sup_x \sum_{\vec{n},{Z} \ni x} e^{\kappa (|Z| + |\vec{n}|)} \| O_{Z,\vec{n}} \|,
\end{equation}
for some constant $\kappa > 0$, where the supremum is over sites $x$ on the lattice, and the norm appearing in the sum is the standard operator norm.

The norm  measures the strength of local terms making up $O(\vec\theta)$, specifically taking into account the decay of the strength of interactions in both spatial extent and Fourier space. Terms corresponding to larger spatial support and higher Fourier modes are weighted more, parameterized by $\kappa$ which can be understood as the decay constant.
Note that the norm is only really useful if it is finite. Thus, the local terms appearing in the sum in \eqnref{eqn:O} have to be decaying at least exponentially fast both in real and Fourier space. The quasilocal and time-smooth nature of the driving $H(\vec\theta)$ we imposed ensures that there is some $\kappa > 0$ for which  $\| H \|_{\kappa} < \infty$. 

\subsection{Theorem and statement of results: long-lived, preheating description of dynamics and emergent symmetries}
\label{subsec:thmstatement}

With this setup, we   turn to the object of interest: the unitary time-evolution operator 
\begin{align}
U(t) = \mathcal{T} \exp\left(-i \int_{0}^{t} H(t') dt' \right)
\end{align}
 generated by a  quasiperiodically-driven   Hamiltonian $H(t) = H(\vec\omega t + \vec\theta_0)$ for some $\vec\theta_0 \in \mathbb{T}^m$. We assume there is some decay constant $\kappa_0 > 0$ such that $H(\vec\theta)$ has a local norm $J := \| H \|_{\kappa_0}$ which is small enough compared to the driving frequency: $J \leq K \omega$. Here $K$ is some numerical constant that does not depend on the Hamiltonian, geometry of the lattice or driving frequencies, which we give explicitly in Appendix \ref{appendix:proof}.

We then have the following statements.\\
 {\bf (A) Existence of a long-lived preheating dynamical description.} It is possible to find a decomposition of the unitary time-evolution operator as:
\begin{align}
U(t) = P(t) \mathcal{T} \exp \left( -i \int_0^t \left[D + V(t') \right] dt' \right) P^\dagger(0), 
\label{eqn:Uproof}
\end{align}
where $P(t) = P(\vec{\omega} t + \vec{\theta}_0)$ is a time-quasiperiodic quasilocal unitary,
$D$ is a time-independent, quasilocal Hamiltonian, and $V(t) = V(\vec{\omega} t + \vec{\theta}_0)$ is a time-quasiperiodic, quasilocal Hamiltonian. The functions $P(\vec{\theta}), V(\vec{\theta})$ are smooth on $\mathbb{T}^m$. 

Define the decay constant $\kappa = \kappa_0/4$. The Hamiltonian $D$ is close to the time-averaged Hamiltonian 
\begin{align}
\langle H \rangle = \int_{\mathbb{T}^m} \frac{d^m\vec\theta}{(2\pi)^m} H(\vec\theta),
\label{eqn:D}
\end{align}
in the sense $\|D - \langle {H} \rangle \|_{\kappa} \leq C J \left(\frac{J}{\omega} \right)$.
The term $V(\vec\theta$) is small, in the sense that
\begin{align}
\|V\|_{\kappa} \leq J 2^{-q_*}.
\label{eqn:V}
\end{align}
where
\begin{align}
{q_*} = \Big\lfloor K' \left(\frac{\omega}{J} \right)^{\frac{1}{\gamma+1} }  \Big\rfloor.
\label{eqn:nstar}
\end{align}
Here $C, K'$  are numerical constants which we compute in Appendix \ref{appendix:proof}, that importantly do not depend on the Hamiltonian or driving frequencies.

We now spell out the dynamical consequences.
\begin{itemize}
\item Slow heating. 
The time-averaged Hamiltonian $\langle H \rangle$ is an almost conserved energy operator, up to perturbative corrections of order $J/\omega$, captured precisely by
\begin{align}
\frac{1}{\text{Vol.}} \| U^\dagger(t) &\langle H \rangle U(t) - \langle H \rangle \| 
  \leq J \left( \tilde{K} 2^{-q_*} t + C \left( J/\omega \right) \right)
\label{eqn:conserved_energy2}
\end{align}
for all $t \geq 0$. In the above, the standard operator norm is used, and we have divided by the volume ``$\text{Vol.}$'' of the system. In other words, the normalized energy density as measured by $\langle H \rangle/J$ grows very slowly, apart from a small quantity of order $J/\omega$. Here $ \tilde{K}, C$ are finite constants that do not depend on the Hamiltonian, geometry of the lattice, or driving frequencies. An analogous statement holds replacing $\langle H \rangle \mapsto D$ as they only differ by small terms on the order of $J/\omega$.

From this we can thus  derive that the heating time $t_*$, defined as the time beyond which  the quantity $\frac{1}{\text{Vol.}} \| U^\dagger(t) \langle H \rangle U(t) - \langle H \rangle \|$ starts growing appreciably, obeys a lower bound 
\begin{align}
t_* \geq \tilde{K}^{-1} 2^{q_*} \equiv \frac{C'}{J} \exp\left[C \left(\frac{\omega}{J}\right)^{1/(m + \epsilon )}\right],
\end{align}
for appropriately defined numerical constants $C',C$.
Note that this is only a lower bound for $t_*$, because \eqnref{eqn:conserved_energy2} is only an upper bound on how fast the energy density can change.

\item Effective   description of dynamics.  If we define
\begin{equation}
\hat{U}(t) = P(t) e^{-iDt} P^{\dagger}(0),
\end{equation}
then for any local operator $O$, we have that
\begin{align}
\label{eq:heisenberg_difference}
& \| \hat{U}^{\dagger}(t) O \hat{U}(t) - U^{\dagger}(t) O U(t) \| \nonumber \\
& \leq K''(O) 2^{-{q_*}} Jt(1 + Jt)^d,
\end{align}
where $K''(O)$ is a numerical constant not depending on the Hamiltonian or driving frequencies.
Thus, the difference \eqnref{eq:heisenberg_difference} is very small for times less than the heating time  $t_* \sim 2^{{q_*}}$.

\end{itemize}

 {\bf (B) Existence of emergent symmetries in effective Hamilton $D$.} 
If in addition the original driving Hamiltonian $H(\vec\theta)$ has   twisted-TTSes  generated by ${g}_{\vec{\tilde\tau}}$ (for some set of $\vec{\tilde\tau}$s), that is,
\begin{align}
H(\vec\theta + \vec{\tilde\tau}) = g_{\vec{\tilde\tau}} H(\vec\theta) g_{\vec{\tilde\tau}}^\dagger,
\end{align}
 as described in Sec.~\ref{sec:twistedTTSes},  then $V(\vec\theta), P(\vec\theta)$ in the above decomposition of the unitary also have the same twisted-TTSes. The effective Hamiltonian $D$ obeys
\begin{align}
[D, {g}_{\vec{\tilde\tau}}] = 0.
\end{align}

The above statements form the rigorous basis on which the results of  this paper rests on. They show that (i) the system does not heat until the long time $t_*$, that (ii) in the preheating regime the decomposition \eqnref{eqn:Uapprox2} holds up to small corrections that can be ignored until $t_*$, in the sense made precise by Eqs.~(\ref{eqn:conserved_energy2}, \ref{eq:heisenberg_difference}), and that (iii) the effective static Hamiltonian $D$ possesses  emergent symmetries as a consequence of   twisted-TTSes of the drive.

\subsection{Main ideas and sketch of proof}
\label{subsec:proofsketch}
Let us provide here the ideas underlying our technical procedure that allows us to derive the previous assertions, as well as sketch   the proof.
To obtain the dynamical statement (A), we employ an iterative procedure `renormalizing' the initial driving Hamiltonian $H(\vec\theta)$ through a series of small rotations that sequentially reduce the norm of time-dependent pieces. This is possible under conditions of high frequencies where the parameter $J/\omega$, the ratio of the local energy scales to the driving frequency $\omega$, sets a natural small parameter.
 Upon stopping at some optimal order, this will eventually give us  the effective static Hamiltonian $D$ as well as a remnant small time-dependent piece $V(\vec\theta)$. Note that it is expected that the procedure generically cannot be carried out \emph{ad infinitum} as this would imply that a driven interacting system has a static local Hamiltonian description. This would go against the unbounded heating we expect to occur at long times.

Such a logic is behind the rigorous prethermalization works of Refs.~\cite{Abanin_1509, Abanin_1510}, and indeed our manipulations largely follow closely that of Ref.~\cite{Abanin_1509} but with a number of technical extensions to handle the quasiperiodically-driven scenario.
Our main contribution, however, and the biggest departure from the earlier works, is that we will employ a renormalization procedure specifically tailored for preserving a twisted-TTS at all stages, which then allows us to obtain our statement (B) on emergent symmetries in $D$.  
 
\emph{Setting up the iteration.} --- The renormalization process is nothing but a sequence of well-chosen rotating frame transformations, effected by the unitaries $P^{(q)}(\vec\theta) = P^{(q)}(\vec\theta + \vec\tau)$ where $\vec\tau \in \mathcal{L} \in 2\pi \mathbb{Z}^m$ and $q = 0,1,\cdots$ up to some cut-off $q_*$ to be determined.
We start the process by defining the original Hamiltonian as the zeroth-level renormalized Hamiltonian $H^{(0)}(\vec\theta) \equiv H(\vec\theta)$. 
The rotation $P^{(q)}(\vec\theta)$ defines the $q$-th level renormalized Hamiltonian at the next level,
\begin{align*}
H^{(q)}(\vec\theta) \mapsto H^{(q+1)}(\vec\theta) := P^{(q), \dagger} (\vec\theta)(H^{(q)}(\vec\theta) - i \vec\omega \cdot \partial_{\vec\theta} ) P^{(q)}(\vec\theta).
\end{align*}
Here $H^{(q)}(\vec{\theta})$ is the resulting Hamiltonian in the new frame of reference.
Let us also define the unitaries
\begin{align}
U^{(q)}(t) := \mathcal{T} \exp\left( -\int_0^t H^{(q)}(\vec\omega t + \vec\theta_0) \right),
\end{align}
in particular, $U^{(0)}(t) = U(t)$, the unitary time-evolution operator of interest. 
In terms of dynamics, the iterative procedure is
\begin{align}
U^{(q)}(t) = P^{(q) }(\vec\omega t + \vec\theta_0) U^{(q+1)}(t) P^{(q),\dagger}(\vec\theta_0).
\end{align}

As mentioned, the aim is to reduce the time-dependent terms in $H^{(q)}(\vec\theta)$ at each level. 
To that end, let us  define the time-averaging operation
\begin{align}
\langle H^{(q)} \rangle := \int_{\mathbb{T}^m} \frac{ d^m \vec\theta}{(2\pi)^m}  H^{(q)}(\vec\theta) = H^{(q)}_{\vec 0},
\end{align}
so that we can write $H^{(q)}(\vec\theta)  = D^{(q)} + V^{(q)}(\vec\theta)$, where
\begin{align}
D^{(q)} &= \langle H^{(q)} \rangle, \nonumber \\
V^{(q)}(\vec\theta) &= H^{(q)}(\vec\theta) - D^{(q)}.
\end{align}
We see that we need to eliminate, or at least reduce, the contributions of $V^{(q)}(\vec\theta)$ in the next level Hamiltonian $H^{(q+1)}(\vec\theta)$.

The high-frequency assumption allows us to choose appropriate rotations $P^{(q)}(\vec\theta)$, that are close to the identify (`small').
To gain some intuition, we write $P^{(q)}(\vec\theta) = e^{A^{(q)}(\vec\theta)}$ for some antihermitian operator $A^{(q)}(\vec\theta)$ assumed to be smaller than $V^{(q)}$ by a factor of $1/\omega$. Then, using the Duhamel formula to expand $H^{(q+1)}(\vec\theta)$, we have
\begin{align}
H^{(q+1)}(\vec\theta) & = D^{(q)} + V^{(q)}(\vec\theta) - i \vec\omega \cdot \partial_{\vec\theta} A^{(q)}(\vec\theta) \nonumber \\
 & - [A^{(q)}(\vec\theta), H^{(q)}(\vec\theta)] - \frac{i}{2} [A^{(q)}(\vec\theta), \vec\omega \cdot \partial_{\vec\theta} A^{(q)}(\vec\theta) ] \nonumber \\
&  + \cdots,
\label{eqn:BCH}
\end{align}
and we see schematically that all terms beyond the first line are smaller than $V^{(q)}$ by a factor of $J/\omega$ or less. 

Let us therefore demand that in our iterative procedure, the generators $A^{(q)}(\vec\theta)$ are chosen to satisfy
\begin{align}
V^{(q)}(\vec\theta) - i (\vec\omega \cdot \partial_{\vec\theta}) A^{(q)}(\vec\theta) = 0. 
\label{eqn:VA}
\end{align}
Note that there is a freedom of choice in the solution of this partial differential equation, as the initial condition has not  been specified.

So far, the manipulations have been formally identical to that of Ref.~\cite{Abanin_1509} upon reduction to the Floquet case.
 However (and here is the crucial technical difference), we  solve the above equation for $A^{(q)}(\vec\theta) = \sum_{\vec n} A^{(q)}_{\vec n} e^{i \vec{n} \cdot \vec\theta}$  in terms of Fourier modes of $V^{(q)}(\vec\theta) = \sum_{\vec n} V^{(q)}_{\vec n} e^{i \vec{n} \cdot \vec\theta}$, with specific choice
\begin{align}
A^{(q)}_{\vec n} = \begin{cases}
-\frac{1}{\vec \omega \cdot \vec n} V^{(q)}_{\vec n} , \qquad & \vec n \neq \vec 0 \\
0, \qquad & \vec n = \vec 0
\end{cases}.
\label{eqn:A}
\end{align}
This fixes the initial condition by setting $\langle A^{(q)}\rangle = 0$. By contrast, the immediate generalization of Ref.~\cite{Abanin_1509} would be to set $A^{(q)}(\vec{\theta}_0) = 0$. Our different choice of initial condition is what allows our iterative procedure to preserve twisted-TTSes, as we will see later.

\eqnref{eqn:A} highlights the fact that the solution only makes sense should the Fourier modes $V^{(q)}_{\vec n}$ decay fast enough so that $A^{(q)}_{\vec n}$ can be written as a convergent Fourier series and $\|A\|_{\kappa}$ is at least well-defined for some values of $\kappa$. This is the technical reason for the imposition of the `smoothness' of drive conditions. 

As a consistency check, one can see from \eqnref{eqn:A} that  $A^{(q)}$,  loosely speaking, differs from $V^{(q)}$ by the `small' factor of $\sim 1/\omega$. More precisely, the relative size of $A^{(q)}$ to that of $V^{(q)}$ should be measured by their local norms as given by Eq.~\eqref{eq:colored_potential_norm}, with slightly different decay constants $\kappa$ (see Appendix \ref{appendix:proof} for   details).
Note that the choice \eqnref{eqn:A} also preserves the quasilocality: if $V^{(q)}(\vec\theta)$ is quasilocal, then so will be $A^{(q)}(\vec\theta)$.

\emph{Estimating the optimal order ${q_*}$}. --- Having set up the iteration, let us now present the logic behind bounding how far this iterative procedure can be carried out to. 
While satisfying the relation \eqnref{eqn:VA} makes $H^{(q)}(\vec\theta)$ ever less dependent on $\vec{\theta}$, there is a price to pay: the extra terms generated at each level in the renormalized Hamiltonians are of ever longer range.
To account for this and in order to meaningfully estimate their local strength, we allow for some sequence of strictly decreasing decay constants $\kappa_0 > \kappa_{1} > \kappa_{2} \cdots > 0$ which we have to pick judiciously, and measure the Hamiltonian $H^{(q)}(\vec\theta)$ at the $q$-level through its norm $\| H^{(q)}(\vec\theta)\|_{\kappa_q}$.
However, at some stage, the smallness of $\kappa_q$ will impede our ability to bound the Hamiltonian $H^{(q+1)}(\vec\theta)$ at the next level; this is when the renormalization procedure stops.
The aim is to choose a suitable set of decay constants $\kappa_q$, which allows for the inductive process to be carried out to as high an order as possible, rendering the resulting $V^{({q_*})}(\vec\theta)$ optimally small and giving the effective Hamiltonian $D := D^{({q_*})}$ at the stopping order $q_*$.

Our proof  indeed follows such a procedure. We skip the heavily technical details in setting up various inductive bounds, as well as choosing the appropriate decay constants $\kappa_q$, but merely state that we end up with the optimal level of truncation ${q_*}$ as given in Eq.~\eqref{eqn:nstar}.
This also yields the claimed bounds on $D,V(\vec\theta)$,  defined as the optimal $D^{({q_*})}, V^{({q_*})}(\vec\theta)$, respectively.
The result \eqref{eqn:Uproof} then follows, with $P(\vec\theta) := \prod_{l=1}^{{q_*}} P^{(l)}(\vec\theta)$.
We refer the reader to the Appendix \ref{appendix:proof} for the full details of our manipulations.

\subsection{Emergent symmetries in effective Hamiltonian $D$}

Thus far, the above discussion was purely dynamical, as related to statement (A). 
Now let us consider the statement (B) on the emergent symmetries in $D$.

Our choice of solutions Eq.~(\ref{eqn:A}) explicitly preserves the twisted-TTSes of the original driving Hamiltonian $H(\vec\theta)$ at every level of the renormalization procedure. 
This in turn allows for them to be manifested as unitary operators that commute with the effective Hamiltonian $D$.
To see this, recall that a  twisted-TTS acts in Fourier space on a Hamiltonian according to  ${g}_{\vec{\tilde\tau}} {H}_{\vec n} {g}_{\vec{\tilde\tau}}^{\dagger} = e^{i \vec{n} \cdot \vec{\tilde\tau}} {H}_{\vec n}$, so in particular $\langle H \rangle = H_{\vec 0} = {g}_{\vec{\tilde\tau}} H_{\vec 0} {g}_{\vec{\tilde\tau}}^\dagger$ is symmetric.
Now if the $q$-th level Hamiltonian $H^{(q)}(\vec\theta)$ has a twisted-TTS, then $V^{(q)}(\vec\theta)$ has the same symmetry, while $D^{(q)} = \langle H^{(q)} \rangle$ commutes with ${g}_{\vec{\tilde\tau}}$.
But from Eq.~\eqref{eqn:A}, this immediately implies that that $A^{(q)}(\vec\theta)$ and hence $P^{(q)}(\vec\theta)$ will also inherit the twisted-TTS of $V^{(q)}(\vec\theta)$, and therefore so does $H^{(q+1)}(\vec\theta)$.
As this is true for every $q$, we end up with the statement that the effective Hamiltonian $D$ obtained at the optimal order ${q_*}$ has emergent symmetries ${g}_{\vec{\tilde\tau}}$. 

\section{Extensions \& Future Directions}  
\label{sec:extensions}

Having demonstrated how one can achieve novel nonequilibrium phases of matter in quasiperiodically-driven systems, we now consider extensions and future directions arising from our work. Many of these are directly inspired by the recent development in nonequilibrium Floquet systems.

\subsection{Long-range interactions}
\label{subsec:longrange}

One of the assumptions which allowed us to bound heating rates and establish the pre-heating regime, was that the many-body systems we considered had quasilocal interactions -- that is, the amplitudes of the interaction terms in the Hamiltonian decay at least exponentially fast with space. 
The study of prethermalization in Floquet systems suggests that this restriction may be lifted to encompass long range interactions. Refs.~\cite{Tran_1908, Machado_1908}  have demonstrated slow heating for periodic driving in the presence of two-body power-law interactions, provided that the interactions decay with distance as $\sim 1/r^{\alpha}$, with $\alpha > d$ where $d$ is the spatial dimension, see also Ref.~\cite{Ho_1706}. The existence of an effective Hamiltonian approximately generating the dynamics in the preheating regime, however, can only be proven for $\alpha > 2d$. We can immediately combine our proof with the approach of \cite{Machado_1908} to derive similar results of slow heating and emergent symmetries in quasiperiodically driven systems with power-law interactions. These extensions are particularly valuable for realizing quasiperiodic phases in trapped ion systems, which naturally have long range interactions.

\subsection{Time-independent systems: continuous time quasi-crystal}

Although we have focused on applications to quasiperiodically-driven systems, our theorem also has an intriguing implication for systems with time-independent Hamiltonians. In particular, let $\Gamma_1, \cdots, \Gamma_m$ be a set of commuting operators, each of which has integer eigenvalues, and let $\vec{\omega} = (\omega_1, \cdots, \omega_m)$ be some rationally-independent numbers. Consider the time-independent Hamiltonian
\begin{equation}
H = H_0 + V,
\end{equation}
where
\begin{equation}
\label{eq:static_H0}
H_0 = \sum_i \omega_i \Gamma_i.
\end{equation}
Observe that the time evolution generated by $H_0$ is quasiperiodic $U_0(t) = \exp(-iH_0 t) = \exp(-it \omega_i \Gamma_i)$, since $U_0(t) = U_0(\vec{\omega} t)$, where $U_0(\vec{\theta}) = \exp(-i\theta_i \Gamma_i)$. If we define the Hamiltonian in the rotating frame generated by $H_0$, i.e.
\begin{align}
H_{\mathrm{int}}(t) = U_0^{\dagger}(t) V U_0(t),
\label{eqn:interactingH2}
\end{align}
then it also has quasi-periodic time dependence $H_{\mathrm{int}}(t) = H_{\mathrm{int}}(\vec{\omega} t)$, with
\begin{equation}
H_{\mathrm{int}}(\vec{\theta}) = U_0^{\dagger}(\vec{\theta}) V U_0(\vec{\theta}).
\end{equation}

We invoke our theorem to construct a rotating frame transformation $P(\vec{\theta})$ that generates a time-independent quasi-local effective Hamiltonian $D$, up to corrections that can be ignored until a time $t_*$ that scales like a stretched exponential in $|\vec{\omega}|$. This might not seem very useful, since the original Hamiltonian $H$ was already a time-independent Hamiltonian. However, crucially,  $H_{\mathrm{int}}$ has a twisted TTS in the sense discussed in Sec.~\ref{sec:twistedTTSes}. Indeed, we have that
\begin{equation}
H_{\mathrm{int}}(\vec{\theta} + \vec{\tau}) = g_{\vec{\tau}} H_{\mathrm{int}}(\vec{\theta}) g_{\vec{\tau}}^\dagger, \quad g_{\vec{\tau}} = \exp(i\tau_i \Gamma_i) = U_0^{\dagger}(\vec{\tau}),
\end{equation}
but now for \emph{any} vector $\vec{\tau} \in \mathbb{R}^m$ (thus, we have a continuous twisted TTS rather than a discrete twisted TTS as considered previously). Therefore, our theorem ensures (see \eqnref{eqn:Uproof}) that the time-evolution operator generated by $H_{\mathrm{int}}(t)$ can be decomposed (ignoring corrections that only become important at times $t \gtrsim t_*$) as
\begin{equation}
\label{eq:Udecompcontinuous}
U_{\mathrm{int}}(t) \approx P(t) \exp(-iDt) P^{\dagger}(0),
\end{equation}
where $D$ has the emergent symmetries $[D,g_{\vec{\tau}}] = 0$ for all vectors $\vec{\tau}$, and hence (by taking the limit $\vec{\tau} \to 0)$ $[D,\Gamma_i] = 0$ for all $i$. Moreover, $P(t) = P(\vec{\omega} t)$, where $P(\vec{\theta})$ obeys the twisted TTS, i.e.~$P(\vec{\theta} + \vec{\tau}) = g_{\vec{\tau}} P(\vec{\theta}) g_{\vec{\tau}}^\dagger$. Hence, using the form of $g_{\vec{\tau}}$, we find that $P(\vec{\theta}) = U_0^{\dagger}(\vec{\theta}) P U_0(\vec{\theta})$, where $P := P(0)$.
 Substituting into \eqnref{eq:Udecompcontinuous}, we find
\begin{align}
U_{\mathrm{int}}(t) &\approx U_0^{\dagger}(t) P U_0(t) \exp(-iDt) P^{\dagger} \\
&= U_0^{\dagger}(t) P \exp(-i[D + H_0]t) P^{\dagger}.
\end{align}

On the other hand, we also know that, by definition, $U_{\mathrm{int}}(t) = U_0^{\dagger}(t) e^{-iHt}$. Therefore, we find that
\begin{equation}
\label{eq:continuousHemergent}
H \approx P(D + H_0)P^{\dagger}.
\end{equation}
The right-hand side of \eqnref{eq:continuousHemergent} commutes with $P \Gamma_i P^{\dagger}$ for $i = 1, \cdots, m$. That is, the Hamiltonian $H$ has an approximate (because it only holds up to time $t_*$) emergent $\mathrm{U}(1)^{\times m}$ symmetry. The $m=1$ case was already proven in Ref.~\cite{Else_1607} (though the connection with twisted TTS was not identified).

Let us discuss three applications of this result. Firstly, we can imagine that the emergent $\mathrm{U}(1)^{\times m}$ symmetry is spontaneously broken. In that case, we obtain a continuous time quasicrystal -- a time-independent Hamiltonian which spontaneously develops a quasiperiodic response with frequencies $\omega_1, \cdots, \omega_m$ until the long time $t_*$. This is a generalization of the prethermal continuous time crystal discussed in Ref.~\cite{Else_1607}.

Another application is the topological protection of quantum information. Ref.~\cite{Else_1704} used the $m=1$ version of this result to argue that the decoherence time for a qubit encoded in a Majorana zero mode on the boundary of a one-dimensional topological superconductor could be made exponentially long (in a parameter of the Hamiltonian) at arbitrary energy density, not just in the ground state, even without disorder.  Ref.~\cite{Else_1704} also considered a two dimensional planar code, but they were not able to show long lifetime in the limit of large system size, because the $m=1$ result does not allow for the separate conservation of the number of $e$-type and $m$-type excitations separately. With our new $m > 1$ result, we can now ensure that these numbers are separately conserved, which does lead to a (stretched) exponentially long lifetime for the encoded qubit, even as the system size goes to infinity.

Finally, let us discuss an application to a lattice model of charged particles in a strong electric field.
Consider charged fermions hopping on a $d$-dimensional lattice, and suppose we set $\Gamma_i = \sum_{\vec{x}} x_i c_{\vec{x}}^{\dagger} c_{\vec{x}}$ for $i = 1, \cdots, d$, where the sum is over the positions $\vec{x}$ of the lattice sites (which we take to be integer vectors), $x_i$ is the $i$-th coordinate of $\vec{x}$, and $c_{\vec{x}}^{\dagger}$ is the fermion creation operator at position $\vec{x}$. Then the Hamiltonian $H_0$ of \eqnref{eq:static_H0} corresponds to applying an electric field $\vec{E} = \vec{\omega}$ to the system. If $|\vec{E}|$ is much larger than the other local scales in the Hamiltonian, then we can apply our theorem, and we find that there are $i$ emergent symmetries that are conserved up to time $t_*$; we can interpret these as dressed versions of the components of electric dipole moment. The fact that these dipole moments are conserved prevents particles from hopping from one site to another, which we can think of as a many-body, higher-dimensional version of Stark localization \cite{vanNieuwenburg_1808,Schulz_1808,Khemani_1910}.
The ability to engineer such conserved quantities in a robust manner can lead to a variety of interesting dynamical phenomena, including the recently discussed idea of ``Hilbert space shattering'' of Refs.~\cite{Sala_1904, Khemani_1904,Khemani_1910,Rakovszky_1910}, where the Hilbert space fractures into exponentially many subsectors.

\subsection{Non-smooth drives}
\label{subsec:nonsmooth}

In this paper, we have focused on drives that are smooth in time. It would be interesting to further develop our discussions for cases where the Fourier modes of the drive $H_{\vec{n}}$ decay slower than exponentially with $\vert\vec{n}\vert$. For example, discrete step-drives decay as some power-law with $|\vec n|$. Within the simple linear response estimates of Sec.~\ref{sec:heating_estimate}, it seems natural to expect a heating time  $t_*$ that scales like a power-law with frequency. However, a more careful calculation of the heating processes along the lines of Sec.~\ref{sec:proof} would be valuable. 

Ref.~\cite{Dumitrescu_1708} considered driving an MBL system with a discrete Fibonacci step-drive and found two slow heating regimes, {one governed by an effective Hamiltonian generated from a high-frequency expansion whose description lasts for power law times, and a subsequent one where there is a slow logarithmic decay of observables that lasts up to an exponentially long timescale.}
Some of the phenomenology considered is very different from the one discussed in Sec.~\ref{subsec:long_time_preheating}; establishing the relationship to the work here remains an open question. 

On a technical level, we note that extending our results of Sec.~\ref{sec:proof} to the case of power-law decay of Fourier modes is more complex than applying them to the case of power law decay of interactions with distance. Considering our results and those of \cite{Machado_1908}, we see that the analog of power-law dependence of Fourier modes are interactions that decay as a power-law in the number of sites contained within the support of the interaction. By contrast, Ref.~\cite{Machado_1908} established results only for interactions which decay as a power-law in the diameter of the support, while still decaying exponentially in the number of sites.

\subsection{Quasiperiocally-driven topological phases beyond eigenstate micromotion}

In Sec.~\ref{sec:TopologicalPhase}, we focused on eigenstate micromotions as a diagnostic of nontrivial topological phases in periodically or quasiperiodically-driven MBL systems. However, it is known that there are topological Floquet phases which cannot be diagnosed in this way, many of which have particularly fascinating phenomenology. For example, the phases discussed in Refs.~\cite{Po_1609,Harper_1609,Po_1701,Fidkowski_1703}  are characterized by a chiral pumping of quantum information along the one-dimensional boundary of a two-dimensional bulk. One expects analogous phenomena in quasiperiodically-driven systems, but we leave such developments for future work.

\subsection{Experimental realizations}
\label{subsec:expt_realization}
Lastly let us briefly mention the possibility of experimentally realizing the quasiperiodically-driven phases of matter we have discussed, in particular, the discrete time quasi-crystal (DTQC) phase described in Sec.~\ref{sec:tqc}.

The driving protocol and conditions required, described in \ref{sec:twisted_high_freq_QP}, can easily be realized in setups such as in synthetic quantum systems of trapped ions, or in solid state systems like nitrogen-vacancy (NV) defects in diamond. Both cases give rise to ensembles of coherently interacting effective spin degrees of freedom which are well isolated from the environment.
Appropriate sequences of laser or microwave pulses can be engineered to realize particular drives in these systems. 
Indeed, these were utilized to realize sequences~\cite{Else_1603, Yao_1608} that led to signatures of discrete time crystal phases  observed in experiments of a 1d chain of trapped ions~\cite{Zhang_1609} and in a dense, disordered 3d ensemble of NV centers in diamond~\cite{Choi_1610}.

Extending these protocols to realize time-quasiperiodic phases is relatively straightforward, although the time-dependence of the drives needs to be smooth in time to avoid fast heating. The different platforms offer different comparative advantages. In trapped ions, the ability to tune the range of long-range interactions between spins implies that it should be possible to realize a DTQC in a prethermal regime even in 1d, provided that the power law exponent satisfies $1 < \alpha < 2$.  On the other hand, one can utilize the fact that NV centers in diamond are grouped by their orientation with respect to the crystallographic axes of diamond. Using interactions within and between multiple groups, one can naturally realize more complicated TQCs, such as the example Hamiltonian in Appendix~\ref{subsec:Z3Z2DTQC} giving rise to a $\mathbb{Z}_2 \times \mathbb{Z}_2$ DTQC, which utilizes two collection of spins. 
Note that if in 3d, the dipolar interacting nature of the spins precludes localization, but thermalization is nonetheless slow in heavily disordered samples (precisely, critically slow, see~\cite{Kucsko_1609, Ho_1703, Choi_1806}). Thus, a long-lived version of a DTQC might still be realizable, protected by critically slow thermalization dynamics.
  
The dynamical signatures that one would look out for, would be the appearance of peaks at fractional harmonics of the input drive frequencies in the power spectrum of a local observable. Additionally, these signatures should be robust to small perturbations to the drive protocol which still preserve its time-quasiperiodicity. This would signal the spontaneous breaking of multiple time-translation symmetries  of the original drive and is the defining characteristic of the DTQC phase (see Sec.~\ref{sec:tqc}).
The exact pattern that is manifested in the Fourier harmonics, however, depends on the exact symmetry breaking pattern, realized using different driving protocols. 
As there are multiple time-translation symmetries in quasiperiodic drives, there are myriad complex symmetry breaking patterns, which could be observed even for a given experimental platform.

\section{Conclusion}
\label{sec:conclusion}

In this paper, we have shown how interacting quantum many-body systems that are quasiperiodically-driven can realize a panoply of long-lived nonequilibrium phases of matter under a general set of driving conditions. These phases are protected by  multiple time-translation symmetries, which arise as the driving Hamiltonian derives from a function $H(\vec\theta)$ defined on a higher dimensional torus. 
They are fundamentally different from phases realizable in static or Floquet systems and add to the richness of the landscape of possible phases of matter realizable in nonequilibrium settings. 
As an exemplar, we described the phase of matter obtained by the spontaneous breaking of some or all of the multiple time-translational symmetries --- a time quasi-crystal. 
We also gave a classification of the symmetry-protected topological phases of matter achieved in the quasiperiodic setting.

Key to our results was our ability to identify a dynamical regime, a so-called preheating regime, in which the deleterious effects of heating in driven systems was controlled for a parametrically long time, as well as our analysis of how multiple-TTSes play out in this regime.
We provided a class of general driving conditions that realize prethermal or quasiperiodic many-body localized phases in this regime. 
We emphasize that their dynamical signatures are universal and robust against small perturbations to the driving protocol, as long as they respect the time-quasiperiodicity of the drive: they are thus genuine nonequilibrium phases of matter in interacting many-body quantum systems.

Our results open up exciting new lines of experimental and theoretical research.
Indeed, the nonequilibrium phases of matter we have discussed, in particular, time quasi-crystals, are immediately directly accessible in experimental platforms of today.
Theoretically, our work establishes new universal structure present in quantum many-body systems in highly out-of-equilibrium settings.  What other possible fundamentally nonequilibrium phases of matter remain to be discovered? 

\begin{acknowledgments}
W.W.H.\ would like to thank S.~Choi, D.~Sels and P.J.D.~Crowley for helpful discussions. D.V.E.\ thanks V.~Khemani and J.~Lin for helpful discussions. We thank M.\ Serbyn, A.C.\ Potter, and R.M.~Nandkishore for helpful comments on our manuscript.
This work was initiated at the Kavli Institute for Theoretical Physics, which is supported by the National Science Foundation under Grant No.\ NSF PHY-1748958. 
The Flatiron Institute is a division of the Simons Foundation.
W.W.H.\ and D.V.E.\ were each supported by the Gordon and Betty Moore Foundation’s EPiQS Initiative, Grant No.\ GBMF4306 (W.W.H.) and GBMF4303 (D.V.E.). W.W.H.~acknowledges support from the NUS Development Grant AY2019/2020.
\end{acknowledgments}
\appendix

\section{Generalized Floquet-Magnus expansion for quasiperiodically-driven systems}
\label{appendix:FloquetMagnus}
Here we provide the derivation of a generalized Floquet-Magnus expansion for quasiperiodically-driven systems driven at high frequencies, introduced and used in the main text in Sec.~\ref{sec:preheating_dynamics} for illustrative purposes on how twisted time-translation symmetries enter into the static Hamiltonian $D$ governing effectively the dynamics of the system.

We note two aspects of the following discussion:
1) This expansion is \emph{not} the naive direct generalization of the Floquet-Magnus expansion to quasiperiodically-driven systems, a point we shall highlight in the derivation.
2) We do not prove any bounds on the validity of the expansion. More precisely, what this means is as follows. As with   the usual Magnus expansion in Floquet systems, we do not expect the series that will write down to be convergent for a many-body system -- instead, it is an asymptotic series. For the Floquet case, it is possible to analyze the optimal order to which the series should be truncated to \cite{Kuwahara_1508,Mori_1509, Abanin_1510}, which gives an optimal effective Hamiltonian $D$ and an estimate on the lifetime that the system can be viewed as having dynamics under $D$. However, we do not give such an estimate, but merely generate the formal series. That said, the analysis of this series  does suggest an interesting alternative and complementary derivation of our results, which we reserve for future work.
 
The logic behind the generalized Floquet-Magnus expansion, or for that matter, the usual Floquet-Magnus expansion in periodically-driven systems, simply involves moving into a suitable rotating frame of reference and deriving the rotating frame Hamiltonian. At high frequencies $\omega = |\vec \omega|$, it is possible to organize this transformation as a perturbative series in $1/\omega$, and use the freedom endowed by the frame transformation to cancel time-dependent pieces in the rotating frame Hamiltonian, order by order. This is also the same logic underlying our technical procedure that yields a different effective Hamiltonian $D$; thus, the following expansion should be viewed as a different derivation of an effective description of the system.

To start, we consider a quasiperiodically-driven Hamiltonian $H(t) = H(\vec \omega t + \vec\theta_0)$, where $H(\vec\theta)$ is a periodic Hamiltonian on the torus. The unitary time-evolution operator obeys the following equation
\begin{align}
i \partial_t U(t) = H(t) U(t).
\end{align}
We move into a new frame of reference effected by the unitary $P(t) = P(\vec \omega t + \vec\theta_0)$, for some periodic $P(\vec\theta)$ on the torus. In other words, we decompose the time evolution operator as $U(t) = P(t) \tilde{U}(t) P(0)^\dagger$, where $\tilde{U}(t)$ obeys
\begin{align}
i \partial_t \tilde{U}(t) & = \left( P^\dagger(t) H(t) P(t) - i P^\dagger(t) \partial_t  P(t) \right) \tilde{U}(t) \nonumber \\
& \equiv {D}(t) \tilde{U}(t).
\end{align}
We see that time-evolution is now generated by the Hamiltonian ${D}(t)$, which   is time-quasiperiodic as ${D}(t) = {D}(\vec\omega t + \vec\theta_0)$ for a periodic ${D}(\vec\theta)$ which is given explicitly as 
\begin{align}
{D}(\vec\theta) =  P^\dagger(\vec\theta) H(\vec\theta) P(\vec\theta) - i P^\dagger(\vec\theta) \vec{\omega} \cdot \partial_{\vec \theta}  P(\vec\theta).
\end{align}
We will show how $P(\vec\theta)$ can be chosen to make $D(t)$ static, i.e.~$D(t) = D$.

To that end we write $P(\vec\theta)$ as the exponential of a sum of antihermitian operators,
\begin{align}
 P(\vec\theta) = \exp\left( \Omega(\vec\theta) \right), \qquad \Omega(\vec\theta) =  \sum_{q = 1}^\infty \Omega_q(\vec\theta),
\end{align}
where we have organized $\Omega$ as a series with terms labeled by $q$. $q$ will turn out to track the order of the high frequency expansion. Now we can write, using Duhamel's formula,
\begin{align}
{D}(\vec\theta) = e^{-\text{ad}_{\Omega(\vec\theta)}} H(\vec\theta) - i \frac{1 - e^{- \text{ad}_{\Omega(\vec\theta)} }}{\text{ad}_{\Omega(\vec\theta)}} \vec\omega \cdot \partial_{\vec \theta}  {\Omega(\vec\theta)},
\end{align}
where $\text{ad}_A B = [A,B] $. Expanding, we get
\begin{align}
& {D}(\vec\theta) = \sum_{q=1}^\infty {D}^{(q)}(\vec\theta), \nonumber \\
&  {D}^{(q)}(\vec\theta) = G^{(q)}(\vec\theta) - i \vec\omega \cdot \partial_{\vec\theta} \Omega_{q+1}(\vec\theta), \nonumber \\
& G^{(q)}(\vec\theta)  = \sum_{k=1}^q \frac{(-1)^k}{k!} \sum_{\stackrel{1 \leq i_1, \cdots, i_k \leq q}{i_1 + \cdots + i_k = q}} \text{ad}_{\Omega_{i_1}} \cdots \text{ad}_{\Omega_{i_k}} H(\vec\theta) \nonumber \\
& + i \sum_{m=1}^q \sum_{k=1}^{q+1-m} \frac{(-1)^{k+1}}{(k+1)!}   \nonumber \\
& \sum_{\stackrel{1 \leq i_1, \cdots, i_k \leq q+1-m}{i_1 + \cdots + i_k = q+1-m}} \text{ad}_{\Omega_{i_1}} \cdots \text{ad}_{\Omega_{i_k}} \vec\omega \cdot \partial_{\vec \theta} \Omega_m(\vec\theta).
\label{eqn:big_expansion}
\end{align}
 
The question now is how best to reduce the $\theta$-dependent pieces of ${D}(\vec\theta)$. One natural way is to assume that $\Omega_q(\vec\theta)$ has norm $\sim 1/\omega^q$, i.e.~treat it as a high frequency expansion. Therefore, we need to pick the generators $\Omega_{q+1}(\vec\theta)$ in such a way that this assumption holds at every step. Indeed, if we write $G^{(q)}(\vec\theta)$ in terms of a Fourier series
\begin{align}
G^{(q)}(\vec\theta) = G^{(q)}_{\vec 0} + \sum_{\vec n \neq \vec 0} G^{(q)}_{\vec n} e^{i \vec{n} \cdot \vec\theta},
\end{align}
then we see that by imposing the condition
\begin{align}
\sum_{\vec n \neq \vec 0} G^{(q)}_{\vec n} e^{i \vec{n} \cdot \vec{\theta}} - i \vec\omega \cdot \partial_{\vec \theta} \Omega_{q+1}(\vec\theta) = 0 
\label{eqn:FM_condition}
\end{align}
that     the norm of $\Omega_{q+1}$ is smaller than $\Omega_q$ by $1/\omega$, since   $\|\Omega_{q+1}\| \sim \|G^{(q)}\|/|\vec\omega|$ and $\|G^{(q)}\| \sim 1/\omega^q$, schematically.

The imposition of  \eqnref{eqn:FM_condition} thus defines a family of expansions, differing by a choice of initial condition.
Indeed, the immediate, direct generalization of the Floquet-Magnus expansion to quasiperiodically-driven systems corresponds to one particular choice:
One solves \eqnref{eqn:FM_condition} with the condition that $\Omega_{q}(\vec\omega t + \vec\theta_0)|_{t=0} = 0$, so that $P(t)|_{t=0} = P(\vec\omega t + \vec\theta_0)|_{t= 0} = \mathbb{I}$.  This gives the solution
\begin{align}
\Omega_{q+1}(\vec\theta) = 
- \sum_{\vec n \neq \vec 0} \frac{G_{\vec n}^{(q)}}{\vec{n} \cdot \vec{\omega} } \left( e^{i \vec{n} \cdot \vec \theta} - e^{i \vec{n} \cdot \vec \theta_0} \right).
\label{eqn:MagnusSoln}
\end{align}
To see why this is the `standard' expansion, for a Floquet system we get  from the above $P(0) = P(T) = \mathbb{I}$ and so at stroboscopic times $t = NT$ for $N \in \mathbb{Z}$ we have the familiar relation $U(NT) = e^{-i D N T}$, ignoring questions of convergence.

Instead, we will choose to solve \eqnref{eqn:FM_condition} as
\begin{align}
\Omega_{q+1}(\vec\theta) = 
- \sum_{\vec n \neq \vec 0} \frac{G_{\vec n}^{(q)}}{\vec{n} \cdot \vec{\omega} } e^{i \vec{n} \cdot \vec \theta}.
\label{eqn:OurSoln}
\end{align}
This   uniquely defines our `generalized Floquet-Magnus' expansion.
\\

To see what terms emerge in our expansion, consider $q = 0$. We get from \eqnref{eqn:big_expansion} and \eqnref{eqn:FM_condition}
\begin{align}
D^{(0)}(\vec\theta) & = H_{\vec 0} + \sum_{\vec n \neq \vec 0} H_{\vec n} e^{i \vec{n} \cdot \vec\theta} - i \vec{\omega} \cdot \partial_{\vec \theta} \Omega_1 (\vec\theta) \nonumber \\
& = H_{\vec 0}
\end{align}
since $\Omega_1(\vec\theta)$ was chosen to eliminate $\theta$-dependent pieces, given explicitly by
$
\Omega_1(\vec\theta) = - \sum_{\vec n \neq \vec 0} \frac{H_{\vec n}^{(q)}}{\vec{n} \cdot \vec{\omega} } e^{i \vec{n} \cdot \vec \theta}.
$
For $q = 1$, we get
\begin{align}
D^{(1)}(\vec\theta) & = - \text{ad}_{\Omega_1} H(\vec\theta) + \frac{i}{2} \text{ad}_{\Omega_1} \vec\omega \cdot \partial_{\vec \theta} \Omega_1(\vec\theta) -  i \vec{\omega} \cdot \partial_{\vec \theta} \Omega_2 (\vec\theta) \nonumber \\
& = \frac{1}{2} \sum_{\vec n} \frac{ [H_{\vec n}, H_{-\vec n}] }{\vec n \cdot \vec \omega} 
\end{align}
since $\Omega_2(\vec\theta)$ was similarly chosen to eliminate the $\theta$-dependent pieces.  In similar fashion, $D^{(q)}(\vec\theta)$ for any $q$ can be derived.
\\

Lastly, let us explain how our choice of solution \eqnref{eqn:OurSoln} giving our generalized Floquet-Magnus expansion manifestly allows for twisted time-translation symmetries to be preserved:  it is such that if the original driving Hamiltonian $H(\vec\theta)$ had a twisted time translational symmetry, that is, if there exists a unitary operator $g$ and integer $N$ satisfying $g^N = \mathbb{I}$, as well as some vector $\vec{\tilde \tau}$ on the torus such that $H(\vec\theta + \vec{\tilde \tau}) = g H(\vec\theta) g^\dagger$,  then $D(\vec\theta)$ also possesses the same twisted time-translation symmetry.
To see this, recall that on the Fourier modes a twisted TTS acts as $gH_{\vec n} g^\dagger = e^{i \vec{n} \cdot \vec{\tilde\tau}} H_{\vec n}$.
Now, if we assume $G^{(q)}(\vec\theta)$ has a twisted-TTS for some $q$, then our choice of $\Omega_{q+1}(\vec\theta)$, \eqnref{eqn:OurSoln}, will also inherit the same twisted-TTS (one just sees that the Fourier modes of $\Omega_{q+1}(\vec\theta)$ obey the required transformation). Consequently, $G^{(q+1)}(\vec\theta)$, given by sums of conjugations of $H(\vec\theta)$ with various $\Omega_{k}(\vec\theta)$ with $k < q$, etc.~(see \eqnref{eqn:big_expansion}) will also have the same twisted-TTS. Note the standard Floquet-Magnus expansion corresponding to a choice \eqnref{eqn:MagnusSoln},  does not have this property.

\section{Periodicity of $U_0(\vec\theta)$ and related details}
\label{appendix:gi_details}

Here we will prove the claim we made in Sec.~\ref{sec:twisted_high_freq_QP}, namely if we write $H_0(t) = H_0(\vec{\theta}_0 + \vec{\omega t})$, $H_0(\vec{\theta}) = f_i(\vec{\theta}) \Gamma_i$, where $\overline{f}_i = Q_{ij} \omega_j$, then it follows that the time evolution operator $U_0(t)$ generated by $H_0(t)$ can be written of the form $U_0(t) = U_0(\vec{\theta}_0 + \vec{\omega} t)$, where

\begin{equation}
U_0(\vec{\theta}) = \exp(-i h_i(\vec{\theta}) \Gamma_i),
\end{equation}
for some functions $h_i(\vec{\theta})$ to be determined, which will satisfy certain symmetry properties. Firstly, we observe that this amounts to the statement that
\begin{equation}
U_0^{\dagger}(\vec{\theta}) (\vec{\omega} \cdot \partial_{\vec{\theta}}) U_0(\vec{\theta}) = -i H_0(\vec{\theta}),
\end{equation}
which in turn is equivalent to
\begin{equation}
\label{eq:gi_eqn}
(\vec{\omega} \cdot \partial_{\vec{\theta}}) h_i(\vec{\theta}) = f_i(\vec{\theta}),
\end{equation}
If we define $\eta_i(\vec{\theta}) = f_i - \overline{f}_i$ and $\Delta_i(\vec{\theta}) = h_{i}(\vec{\theta}) - Q_{ij} \theta_j$, then \eqnref{eq:gi_eqn} reduces to
\begin{equation}
(\vec{\omega} \cdot \partial_{\vec{\theta}}) \Delta_i(\vec{\theta}) = \eta_i(\vec{\theta}),
\end{equation}
The solution to this equation can be obtained in terms of the Fourier transform; if we write $\eta_i(\vec{\theta}) = \sum_{\vec{n} \neq 0} e^{i\vec{n} \cdot \vec{\theta}} \eta_i(\vec{n})$ and similarly for $\Delta_i$, then we can set $\Delta_i(\vec{n}) = (i\vec{n} \cdot \vec{\omega})^{-1} \eta_i(\vec{n})$ for $\vec{n} \neq 0$. In conclusion, we have found that
\begin{equation}
h_i(\vec{\theta}) = Q_{ij} \theta_j + \Delta_i(\vec{\theta}),
\end{equation}
where $\Delta_i(\vec{\theta})$ satisfies $\Delta_i(\vec{\theta} + \vec{\tau}) = \Delta_i(\vec{\theta})$ for all $\vec{\tau} \in \mathcal{L}$. These functions $h_i(\vec{\theta})$ indeed satisfy the symmetry property claimed in the main text.

\section{Properties of rational matrices}
\label{appendix:rational}
Let $Q$ be some $r \times m$ matrix with rational entries.
Let $\mathcal{L}$ be the lattice comprising all vectors of the form $2\pi \vec{n}$ for some integer vector $\vec{n}$. We are interested in determining the sublattice $\mathcal{L}' \leq \mathcal{L}$ comprising those vectors $\vec{\tau}' \in \mathcal{L}$ such that $e^{i Q_{ij} \tau_j'} = 1$.

To solve this problem, we invoke the well-known fact that any integer matrix $Z$ has a Smith normal form decomposition $Z = V D W$ where $V$ and $W$ are unimodular integer matrices (that is, they are invertible integer matrices whose inverses are also integer matrices) and $D$ is a diagonal integer matrix. Since any rational matrix can be converted to an integer matrix simply by multiplying by some integer, we conclude that there is also a Smith normal form for rational matrices: $Q = V D W$ where $V$ and $W$ are still unimodular integer matrices, and $D$ is a diagonal rational matrix.

The Smith decomposition allows us to reduce the problem to the case where $Q$ is diagonal. Suppose that if $Q = D = \mathrm{diag}(p_1/q_1, \cdots, p_{\mathrm{min}(r,m)}/q_{\mathrm{min}(r,m)})$, where each $p_i$,$q_i$ are coprime integers (with $q_i$ positive and $p_i$ non-negative). We can choose to set $q_i = 1$ if $p_i = 0$, and we also define $q_i = 1$ for $r < i \leq m$. Then, we see that $\mathcal{L}'$ comprises integer linear combination of the $m$ vectors spanned by $\vec{\tau}^{(i)} = q_i \vec{e}_i$ (no summation), for $(\vec{e}^{(i)})_j = \delta_{ij}$. More generally, if $Q = V D W$ then $\mathcal{L}'$ comprises integer linear combinations of the vectors $W^{-1} \vec{\tau}^{(i)}$. We also find that $\mathcal{L} / 
\mathcal{L}' \cong \mathbb{Z}_{q_1} \times \cdots \times \mathbb{Z}_{q_m}$.

Note that a particularly simple case is where $r=m$ and
$Q = Z^{-1}$ for some (not necessarily unimodular) integer matrix $Z$. In that case, one finds that all the $p_i$'s are $1$, the $q_i$'s are the diagonal entries of the Smith decomposition of $Z$, and $\mathcal{L}'$ simply comprises integer linear combinations of the columns of $Z$.

\section{More examples of DTQC}
\label{appendix:DTQC}

We provide here the  quasiperiodically-driven Hamiltonians realizing the $\mathbb{Z}_2\times \mathbb{Z}_2$ and $\mathbb{Z}_3 \times \mathbb{Z}_2$ DTQCs shown in Figs.~\ref{fig:prethermalCartoonZ2Z2}, \ref{fig:prethermalCartoonZ3Z2} of Sec.~\ref{sec:tqc}.

 \subsection{$\mathbb{Z}_2 \times \mathbb{Z}_2$ DTQC}
 \label{subsec:Z2Z2DTQC}
 The model in consideration is a direct extension of the one considered in Sec.~\ref{subsec:Z2DTQC}, but now involves two groups of spins.
 Concretely, suppose we have a system comprised of two subsystems $A,B$. Each subsystem consists of spin-1/2 degrees of freedom placed along  either a 1d  chain or  a 2d square lattice, and we assume that the two subsystems $A,B$ are stacked on top of each other (so that site $i$ of subsystem $A$ is directly above site $i$ of subsystem $B$).

  Consider   the following $m=2$  quasiperiodic Hamiltonian
  \begin{align}
  H(t) = H_0(\vec\omega t) + V,
  \end{align}
  where now
\begin{align}
V &= \sum_{ \stackrel{ i< j} { \alpha = A,B}} J_{ij} \sigma^z_{i,\alpha} \sigma^z_{j,\alpha} + \sum_{i\leq j} J'_{ij} \sigma^z_{i,A} \sigma^z_{j,B} \nonumber \\
& + \sum_{i, \alpha = A,B} h  \sigma^z_{i,\alpha}.
\label{eqn:exampleH_DTQC}
\end{align} 
Similarly to the Hamiltonian in \eqnref{eqn:exampleH_Z2}, \eqnref{eqn:exampleH_DTQC} describes pairwise Ising interactions of spins between and within subsystems, and a longitudinal field in the $z$-direction with strength $h \ll  \omega_i/2$.
The interactions have strengths $J_{ij}, J_{ij}'$, 
and we consider the  $2d$ case where $J_{ij} = -J \delta_{\langle i,j \rangle}$ where $\langle i,j\rangle$ represent  nearest-neighbor pairs of sites on one subsystem, and $J_{ij}' = -J \delta_{i,j}$.  We take $0 < J < \omega_i/2$. 

The driving Hamiltonian is taken to be 
\begin{align}
H_0(\vec\theta) = \sum_{i, \alpha = A,B} \frac{1}{2} ( \sigma^x_{i,\alpha} + 1 ) f_{\alpha}(\vec\theta),
\end{align}
and we choose 
\begin{align}
f_A(\vec\theta) &= \pi \omega_1 \Delta_N(\theta_1), \nonumber \\
f_B(\vec\theta) & = \pi \omega_2 \Delta_N(\theta_2),
\end{align}
where $\Delta_N$ is the approximation to the Dirac delta comb introduced in \eqnref{eqn:Fejer}. These functions satisfy $\overline{f}_i = Q_{ij} \omega_j$, with  $
Q = \frac{1}{2} \begin{pmatrix}
1 & 0 \\
0 & 1
\end{pmatrix}$.
One can   understand this driving protocol as two  Floquet drives, performed at frequencies $\omega_1, \omega_2$, acting on different halves of the system.
Indeed, if the  two subsystems were disjoint, this protocol would simply result in two independent prethermal discrete time crystals. 
In the present case,  the system has interactions between the different subsystems -- nevertheless, we will show that a stable DTQC phase will be realized.

To see this, one can easily work out that the interaction Hamiltonian $H_\text{int}(\vec\theta)$ in the frame-twisted high-frequency limit has periodicity on the lattice $\mathcal{L}'$ generated by the translation vectors $\vec\tau_1 = 2\pi(2,0)$ and $\vec\tau_2 = 2\pi(0,2)$. In other words, $\mathcal{L}' = 2 \mathcal{L} = 4 \pi \mathbb{Z}^m$. Thus, the group of symmetries of the effective Hamiltonian $D$ is $\mathcal{G} = \mathbb{Z}_2 \times \mathbb{Z}_2$.
Explicitly, the unitary symmetries realized are
 $g_{\vec{\tilde \tau}} = \prod_{i} \sigma^x_{i,A}$ for  $\vec{\tilde \tau} = 2\pi(1,0)$, and  $g_{\vec{\tilde \tau}} = \prod_{i} \sigma^x_{i,B}$  for $\vec{\tilde \tau} = 2\pi(0,1)$.

Repeating the same analysis as in Sec.~\ref{subsec:Z2DTQC}, the leading order effective Hamiltonian is
\begin{align}
D^{(0)} = \sum_{ \stackrel{ i< j}{ \alpha = A,B}} \!\!\!\! J_{ij} \left( a_{\alpha}(N)  \sigma^z_{i,\alpha} \sigma^z_{j,\alpha}  + b_\alpha(N) \sigma^y_{i,\alpha} \sigma^y_{j,\alpha} \right),
\end{align}
where $a_{\alpha}(N), b_{\alpha}(N)$ are numerical factors which result from the averaging of the interaction Hamiltonian over the larger unit cell.
For $N=20$, $a_{\alpha}(20) = 0.928$ and $b_{\alpha}(20) = 0.072$ for $\alpha = A,B$.
We observe the following salient features: First, $D^{(0)}$   is manifestly $\mathbb{Z}_2 \times \mathbb{Z}_2$ symmetric. 
In particular, the   $\sigma^z_{i,A} \sigma^z_{i,B}$ interactions between the two subsystems, which are odd under both $\mathbb{Z}_2$ groups, have been eliminated. In fact, any interactions that remain between the subsystems, even at higher orders, \emph{must} be $\mathbb{Z}_2 \times \mathbb{Z}_2$ symmetric.
Second, the Ising interactions in the $z$-direction dominate, in both subsystems. 

Thus, an initial state that has sufficiently low energy density with respect to $D^{(0)}$ will  equilibrate to a thermal state $\rho_\beta$ which spontaneously breaks both  $\mathbb{Z}_2 \times \mathbb{Z}_2$ Ising symmetries. 
$\rho(\vec\theta)$ will then have periodicity on the lattice $\mathcal{L}_{SSB} = \mathcal{L}'$, whose reciprocal lattice $\mathcal{L}^*_{SSB}$ is generated by the vectors $\vec\alpha_1 = (1/2,0)$ and $\vec\alpha_2 = (0, 1/2)$.
The power spectrum of the expectation value in time of local operators measured in this state, will then exhibit peaks at frequencies
\begin{align}
\Omega = \frac{1}{2} \vec{n}\cdot \vec\omega, \qquad \vec{n} \in \mathbb{Z}^2,
\end{align}
which is the manifestation of the spontaneous breaking of the TTSes of the driving Hamiltonian.

 In the high frequency limit, we can treat $\mathcal{V}(t) = \mathbb{I}$ and derive analytic expressions for   the measurement of an observable in time $s(\vec\theta) = \Tr[\hat{s} \rho(\vec\theta)]$. For example, for the observable $\hat{s} = \frac{1}{\text{Vol.}} \sum_{i} \sigma^z_{i, A} + \frac{1}{\text{Vol.}} \sum_i \sigma^z_{i,B}$, this works out to be
 \begin{align}
 s(\vec\theta) &   \approx   \frac{1}{\text{Vol.}} \sum_{i}  \Tr[ \sigma^z_{i,A} \rho_\beta ] \cos\left[ \pi \Theta_N(\theta_1) \right] \nonumber \\
& +   \frac{1}{\text{Vol.}} \sum_{i}\Tr[ \sigma^y_{i,A} \rho_\beta ] \sin\left[  \pi \Theta_N(\theta_1) \right]  \nonumber \\
& +   \frac{1}{\text{Vol.}} \sum_{i}  \Tr[ \sigma^z_{i,B} \rho_\beta ] \cos\left[ \pi \Theta_N(\theta_2) \right]  \nonumber \\
& +   \frac{1}{\text{Vol.}} \sum_{i}\Tr[ \sigma^y_{i,B} \rho_\beta ] \sin\left[ \pi \Theta_N(\theta_2) \right],
 \end{align}
 where the function $\Theta_N(\theta)$ is given by \eqnref{eqn:defn_of_Theta}.
 We plot this observable in Fig.~\ref{fig:prethermalCartoonZ2Z2} using $\Tr[\sigma^z_{i,\alpha} \rho_\beta] = 0.9$ and $\Tr[\sigma^y_{i,\alpha} \rho_\beta] = 0$.
 
\subsection{$\mathbb{Z}_3 \times \mathbb{Z}_2$ DTQC} 
\label{subsec:Z3Z2DTQC} 

Next we present the Hamiltonian for the $\mathbb{Z}_3 \times \mathbb{Z}_2$ DTQC.
Consider, like in the previous example, a system comprised of two subsystems  $A,B$ of square lattices in $d=2$,   stacked so that sites lie on top of each other. 
Assume now however the local degrees of freedom of subsystem $A$ (which reside on the sites) to be comprised of three levels (i.e.~a `spin-1') and that of subsystem $B$ to be comprised of two levels (i.e.~a `spin-1/2').
Take the $m=2$ quasiperiodic Hamiltonian to be 
\begin{align}
H = H_0(\vec\omega t ) + V,
\end{align}
with
\begin{align*}
V & = -\sum_{\langle i j\rangle} J^A \left( \mu^\dagger_i \mu_j + \text{h.c.}  \right) -  \sum_{\langle i j\rangle} J^B \sigma^z_i \sigma^z_j  \nonumber \\
& + \sum_i J^{AB} \left( \mu_i \sigma^z_i + \text{h.c.} \right) + \sum_i h \sigma^z_i + h' (\mu_i + \mu_i^\dagger),
\end{align*}
where the operators $\mu, \eta$ which act locally on $A$ are given explicitly by 
\begin{align}
\mu = \begin{pmatrix}
1 & 0 & 0 \\ 
0 & e^{i 2\pi/3} & 0 \\ 
0 & 0 & e^{i 4\pi/3}
\end{pmatrix},
\qquad
\eta = \begin{pmatrix}
0 & 1 & 0 \\
0 & 0 & 1 \\
1 & 0 & 0 
\end{pmatrix}.
\end{align}
These are the so-called `clock operators' that enter into quantum   clock models: $\eta$   increments the state of a `clock' (the levels which the matrix $\mu$ is diagonal in) one step in a cyclic fashion.
Thus, $V$ describes ferromagnetic Ising-like interactions between degrees of freedom of $A$, and also ferromagnetic Ising interactions between degrees of freedom of $B$.
We take the driving Hamiltonian to be 
\begin{align}
H_0 & = \sum_i   \frac{1}{3} \left(  [1- e^{i \frac{2\pi}{3}}] \eta + \text{h.c.}    \right) f_A(\vec{\theta}) \nonumber \\
& + \sum_i   \frac{1}{2} \left(\sigma_i^x + 1 \right)f_B(\vec{\theta}),
\end{align}
with
\begin{align}
f_A(\vec\theta) &=  \frac{2\pi \omega_1}{3} 
\Delta_N(\theta_1), \nonumber \\
f_B(\vec\theta) & = \pi \omega_1 \Delta_N(\theta_1) + \pi \omega_2 \Delta_N(\theta_2),
\end{align}
which satisfy $\overline{f}_i = Q_{ij} \omega_j$, with
 $Q =  \begin{pmatrix}
\frac{1}{3} &  0 \\
\frac{1}{2} &  \frac{1}{2},
\end{pmatrix}
$.
We assume the driving frequencies are sufficiently large compared to all local couplings.

We now take the frame-twisted high-frequency limit.
The interaction Hamiltonian $H_{\text{int}}(\vec\theta)$  has periodicity on the lattice $\mathcal{L}'$ generated by the translation vectors $\vec\tau_1 = 2\pi(3,1)$ and $\vec\tau_2 = 2\pi(0,2)$. Therefore, the symmetries that the effective Hamiltonian $D$ has,  belong to the group $\mathcal{G} = \mathbb{Z}_6 \cong \mathbb{Z}_3 \times \mathbb{Z}_2$. Viewing $\mathcal{G}$ as $\mathbb{Z}_3 \times \mathbb{Z}_2$, it can be checked that group is explicitly generated by the unitary symmetries $g_{\vec{\tilde\tau}} = \prod_i \eta_i$ for $\vec{\tilde \tau} = 2\pi(1,1)$ satisfying $(\prod_i \eta_i)^3 = \mathbb{I}$ and $g_{\vec{\tilde\tau}} = \prod_i \sigma^x_i$ for $\vec{\tilde\tau} = 2\pi(0,1)$ satisfying $(\prod_i \sigma^x_i)^2 = \mathbb{I}$.

Repeating the same analysis as before (with $N=20$), the leading order effective Hamiltonian $D^{(0)}$ can be computed to be
\begin{align}
D^{(0)} = & -\sum_{\langle ij \rangle} 0.9280 J^A (\mu^\dagger_i \mu_j + \text{h.c.}) \nonumber \\
& - \sum_{\langle ij \rangle} J^B ( 0.866 \sigma^z_i \sigma^z_j + 0.134 \sigma^y_i \sigma^y_j ) + \delta
\end{align}
where $\delta$ is a   small term (compared to the dominant interactions above), given explicitly by
\begin{align}
& \delta =J^A \big[-0.018 (\mu_i  (\mu^\dagger_j \eta_j) + (\mu^\dagger_i \eta_i) \mu_j + \text{h.c.} ) \nonumber \\
& + 0.036 ( (\mu_i \eta_i) (\eta^\dagger_j \mu^\dagger_j) + (\eta^\dagger_i \mu_i) (\mu^\dagger_j \eta_j) + \text{h.c.}  ) \nonumber \\
&  - (0.018 -0.03i) ( (\mu_i \eta_i) (\mu^\dagger_j \eta_j) + (\mu^\dagger_i \eta_i) (\mu_j \eta_j)) + \text{h.c.} \nonumber \\
 & + (0.009 - 0.016i) ( \mu^\dagger_i (\mu_j \eta_j) + (\mu_i \eta_i) \mu^\dagger_j ) + \text{h.c.}   \big].
\end{align}
It can be explicitly checked that $D^{(0)}$ is $\mathbb{Z}_3 \times \mathbb{Z}_2$ symmetric, and that the dominant terms are the interactions $\mu_i^\dagger \mu_j$ on subsystem $A$ and Ising terms $\sigma^z_i \sigma^z_j$ on subsystem $B$. Therefore, we expect that the steady state $\rho_\beta$ of this system, in the preheating regime, should exhibit spontaneous breaking of both the $\mathbb{Z}_3 \times \mathbb{Z}_2$ symmetries. $\rho(\vec\theta)$ then has periodicity on the lattice $\mathcal{L}_{SSB} = \mathcal{L}'$, whose reciprocal lattice $\mathcal{L}^*_{SSB}$ is generated by the vectors $\vec\alpha_1 = (1/3,0)$ and $\vec\alpha_2 = (-1/6,1/2)$. The power spectrum of the measurement in time of local operators at long and late times, will then exhibit peaks at frequencies
\begin{align}
\Omega = n_1 \frac{\omega_1}{3} + n_2 \left( - \frac{\omega_1}{6} + \frac{\omega_2}{2} \right)
\end{align}
with $n_1, n_2 \in \mathbb{Z}$.

 In the high frequency limit, we can treat $\mathcal{V}(t) = \mathbb{I}$ and derive analytic expressions for   the measurement of an observable in time $s(\vec\theta) = \Tr[\hat{s} \rho(\vec\theta)]$. For example, for the local observable $\hat{s} = \mu_i + \mu^\dagger_i + \sigma^z_i$, 
 \begin{align}
 s(\vec\theta) &   \approx     \Tr[ \sigma^z_{i} \rho_\beta ] \cos\left[ \pi \Theta_N(\theta_1) + \pi \Theta_N(\theta_2) \right]  \nonumber \\
 & +  \Tr[ \sigma^y_{i} \rho_\beta ] \sin\left[ \pi \Theta_N(\theta_1) + \pi \Theta_N(\theta_2) \right] \nonumber \\
& +   \Big( \Tr[\mu_i^\dagger \rho_\beta] \left[ \frac{2}{3} e^{i (2\pi/3) \Theta_N(\theta_1) } + \frac{1}{3} e^{-i (4\pi/3) \Theta_N(\theta_1)}\right] \nonumber \\
& +  \text{h.c.} \Big).
 \end{align}
 We plot this observable in Fig.~\ref{fig:prethermalCartoonZ3Z2} using $\Tr[\sigma^z_{i} \rho_\beta] = 0.9$ and $\Tr[\mu_i \rho_\beta] = \Tr[\mu_i^\dagger \rho_\beta] = 0.8$.

\section{Topological phases}
\label{appendix:topological}
As we mentioned in Sec.~\ref{sec:TopologicalPhase}, the idea behind classifying quasiperiodic topological phases is to look at the non-trivial micromotion of the eigenstates. First, we make the following observation \cite{Turaev_9910, Turaev_0005, Kitaev_0506, Lurie_0905, KitaevIPAM, Thorngren_1612, Xiong_1701, Else_1810}: all the existing classifications of topological phases in $d$ spatial dimensions  with unitary symmetry $G$ (for example, Refs.~\cite{Chen_1106, Kapustin_1403,Kapustin_1406,Barkeshli_1410,Freed_1604})
can be expressed in terms of classifying the homotopy classes of maps $BG \to \Theta_d$ for some space $\Theta_d$. Here $BG$ is the classifying space for the group $G$, which is defined as $BG = EG/G$, where $EG$ is any contractible space on which $G$ acts freely ($BG$ is unique up to homotopy equivalence). In the case that $\Theta_\bullet$ forms a so-called ``$\Omega$-spectrum'', this amounts to the assumption that the classification is given by some generalized cohomology theory evaluated on $BG$ \cite{KitaevIPAM, Xiong_1701},
which is the case for all known classifications of invertible topological phases. However, our observation is more general than that, and applies also to classifications of non-invertible phases, for example the $G$-crossed modular tensor category classification of symmetry-enriched topological phases in two spatial dimensions \cite{Turaev_0005, Kirillov_0401,Etingof_0909,Barkeshli_1410}.

Now in all these classifications, $\Theta_d$ is some abstract space that can be introduced in a formal mathematical way. However, what we want to posit is that this space actually has physical meaning. Specifically, we want to say that $\Theta_d$ is actually homotopy-equivalent to $\Omega_d$, the space of all gapped ground states, introduced in Sec.~\ref{sec:TopologicalPhase}. Indeed, Kitaev (Appendix F of \cite{Kitaev_0506}) has shown at the microscopic level that any $G$-symmetric gapped ground state in $d$ spatial dimensions gives rise to a map $BG \to \Omega_d$, which would explain the origins of the maps $BG \to \Theta_d$ described above. 

If these conjectures hold, then it is clear how to classify maps $\mathcal{X} \to \Omega_d$ in the presence of symmetry $G$. First of all, we use the approach of Kitaev to turn this into a map $\mathcal{X} \times BG \to \Omega_d$, which is equivalent to a map $\mathcal{X} \times BG \to \Theta_d$. In other words, we simply replace $BG \to \mathcal{X} \times BG$ in the formulation of the classification.

We are now in a position to show how homotopy classes of maps $\mathbb{T}^m \to \Omega_d$ are related to SPT/SET phases with symmetry $\mathbb{Z}^{\times m}$. Indeed, this follows from the mathematical fact that
\begin{equation}
B(\mathbb{Z}^m) \cong \mathbb{T}^m,
\end{equation}
as can be seen by setting $E(\mathbb{Z}^m) = \mathbb{R}^m$, with $\mathbb{Z}^m$ acting freely by discrete translations. More generally, in the case of systems with symmetry (other than time-translation symmetry) $G$, then we want to classify maps $\mathbb{T}^m \times BG \to \Omega_d$, and then these are in one-to-one correspondence with topological phases with symmetry $G \times \mathbb{Z}^m$, as can be seen from the equation
\begin{equation}
B(\mathbb{Z}^m \times G) \cong \mathbb{T}^m \times BG.
\end{equation}

Now let us consider cases where the states under consideration are invertible states; that is, they do not admit fractionalized excitations such as anyons. In that case, as we have already mentioned, it is believed that the spaces $\Theta_d$ for different dimensions are related in a particular way. Specifically, the idea is that, if we define for any space $\mathcal{X}$, $h^q(\mathcal{X})$ to be the homotopy classes of maps $\mathcal{X} \to \Theta_q$, then $h^{\bullet}(-)$ should define a generalized cohomology theory, which obeys some set of axioms \cite{HatcherBook}. For any generalized cohomology theory, one can show, using the suspension and wedge axioms, and the fact that \cite{HatcherBook} $\Sigma \mathbb{T}^d = \Sigma (S^1 \vee \mathbb{T}^{m-1} \vee \Sigma \mathbb{T}^{m-1})$ (here $\Sigma$ denotes suspension and $\vee$ denotes wedge sum),  
\begin{align}
h^q(\mathbb{T}^m) &= h^{q}(S^1) \times h^q(\mathbb{T}^{m-1}) \times h^{q-1}(\mathbb{T}^{m-1}).
\end{align}
By applying this formula recursively, we derive \eqnref{eq:torus_decomposition}.

As we have described, the emergent $\mathbb{Z}_2 \times \mathbb{Z}_2$ symmetry arises as a symmetry of the effective static Hamiltonian in the twisted high-frequency limit. Nevertheless, we ultimately want to interpret the topological phases that we construct as protected by multiple time-translation symmetries, i.e.\ as being protected by $\mathbb{Z} \times \mathbb{Z}$. Naturally, any SPT protected via $\mathbb{Z}_2 \times \mathbb{Z}_2$ can also be interpreted as an SPT protected by $\mathbb{Z} \times \mathbb{Z}$, since there is a homomorphism from $\mathbb{Z} \times \mathbb{Z} \to \mathbb{Z}_2 \times \mathbb{Z}_2$. On the other hand, if we go \emph{beyond} the twisted high-frequency limit (for example, if we assume that quasiperiodically driven systems can be MBL without needing to appeal to a twisted-high frequency limit), we can still apply the general analysis of Section \ref{subsec:eigenstate_topological}, but it is not clear that the emergent $\mathbb{Z}_2 \times \mathbb{Z}_2$ is expected to survive even if we remain within the same quasiperiodic topological phase. Therefore, the signatures of the phase that we discuss might need to be modified.

\section{Signatures of quasiperiodically driven topological phases beyond the twisted high-frequency limit}
\label{appendix:relaxing}
As we mentioned in Section \ref{subsec:minimal}, it is not automatically clear that any signatures of a quasiperiodically driven topological phases that are based on the presence of the emergent finite internal symmetry group $\mathcal{G}$ will necessarily be robust in some stabilization of the phase that is beyond the twisted high-frequency limit. Let us discuss this in more detail in the phase that we were disucssing in Section \ref{subsec:minimal}, which had emergent symmetry $\mathcal{G} = \mathbb{Z}_2 \times \mathbb{Z}_2$ and microscopic symmetry $G = \mathbb{Z}_2$.

The homomorphism $\mathbb{Z} \times \mathbb{Z} \times G \to \mathbb{Z}_2 \times \mathbb{Z}_2 \times \mathbb{Z}_2$ induces a corresponding pull-back map $\mathcal{H}^4(\mathbb{Z}_2 \times \mathbb{Z}_2 \times \mathbb{Z}_2, \mathbb{Z}) \to \mathcal{H}^4(\mathbb{Z} \times \mathbb{Z} \times \mathbb{Z}_2, \mathbb{Z}) \cong \mathcal{H}^2(\mathbb{Z}_2, \mathbb{Z}) \cong \mathcal{H}^1(\mathbb{Z}_2, \mathrm{U}(1)) \cong \mathbb{Z}_2$. The phase discussed in Section \ref{subsec:minimal} remains nontrivial under this pull-back map. However, the signature of the projective representation of the emergent $\mathbb{Z}_2 \times \mathbb{Z}_2$ on a symmetry flux of the microscopic $\mathbb{Z}_2$ is \emph{not} robust. This is because the projective representation of the emergent symmetry corresponds to the non-trivial class in $\mathcal{H}^3(\mathbb{Z}_2 \times \mathbb{Z}_2, \mathbb{Z}) \cong \mathcal{H}^2(\mathbb{Z}_2 \times \mathbb{Z}_2, \mathrm{U}(1))$, and it maps to the \emph{trivial} class under the pull-back map $\mathcal{H}^3(
\mathbb{Z}_2 \times \mathbb{Z}_2, \mathbb{Z}) \to \mathcal{H}^3(\mathbb{Z} \times \mathbb{Z}, \mathbb{Z})$.

We can understand this better if we imagine computing the $\mathcal{H}^4$ classifications using the K\"unneth formula. Viewed as a $\mathbb{Z}_2^{\times 3}$ SPT, our SPT lives in the factor
\begin{equation}
\mathbb{Z}_2 \cong \mathcal{H}^1(\mathbb{Z}_2, \mathcal{H}^3(\mathbb{Z}_2 \times \mathbb{Z}_2, \mathbb{Z})) \leq \mathcal{H}^4(\mathbb{Z}_2^{\times 3}, \mathbb{Z})\end{equation}
which precisely corresponds to the statement that a $\mathbb{Z}_2$ symmetry flux corresponds to a projective representation of $\mathbb{Z}_2 \times \mathbb{Z}_2$. On the other hand, viewed as a $\mathbb{Z} \times \mathbb{Z} \times \mathbb{Z}_2$ SPT, our SPT lives in the factor
\begin{equation}
\label{eq:newkunneth}
\mathbb{Z}_2 \cong \mathcal{H}^2(\mathbb{Z}_2, \mathcal{H}^2(\mathbb{Z}\times \mathbb{Z}, \mathbb{Z})) \leq \mathcal{H}^4(\mathbb{Z} \times \mathbb{Z} \times \mathbb{Z}_2, \mathbb{Z})\end{equation}
In other words, relaxing the $\mathbb{Z}_2 \times \mathbb{Z}_2$ symmetry to a $\mathbb{Z} \times \mathbb{Z}$ symmetry causes our phase to jump into a different factor of the K\"unneth formula, which in turns suggests that we should expect different physical signatures. Indeed,
according to the usual physical interpretation of the K\"unneth formula, \eqnref{eq:newkunneth} seems to be telling us that if we annihilate two symmetry fluxes of the microscopic $\mathbb{Z}_2$ symmetry, then we leave behind a nontrivial 0-dimensional SPT classified by $\mathcal{H}^2(\mathbb{Z} \times \mathbb{Z}, \mathbb{Z}) \cong \mathbb{Z}$. The latter corresponds to quantized energy pumping between the two incommensurate frequencies (see Section \ref{subsec:relation}), so we can interpret this as the statement that a symmetry flux of the microscopic $\mathbb{Z}_2$ symmetry carries a \emph{half-quantized} frequency pump.
We leave a more systematic understanding of this effect to future work.

\section{Rarity of resonances in quasiperiodically driven systems}
\label{sec:rarity_resonances}
The crucial mathematical property of quasiperiodic driving that we employ in our paper is that, although for any given frequency vector $\vec{\omega} = (\omega_1, \cdots, \omega_m)$, the spectrum of possible multi-photon emission and absorption processes, i.e.~$\vec{\omega} \cdot \vec{n}$ for integer vectors $\vec{n}$, is dense everywhere (and in particular at zero), if we restrict to low-order processes (i.e.~$|\vec{n}|$ small), the possible values of $|\vec{\omega} \cdot \vec{n}|$ are still somewhat sparse. To quantify this, let us introduce the following definition.

\begin{defn}
Let $\vec{\omega} \in \mathbb{R}^n$ be a nonzero vector. Then the exponent $\sigma(\vec{\omega})$ is the smallest number $\sigma$ such that for all $\epsilon > 0$, there exists a constant $c > 0$ (depending on $\vec{\omega}$ and $\epsilon$) such that
\begin{equation}
|\vec{\omega} \cdot \vec{n}| \geq \frac{c|\vec{\omega}|}{|\vec{n}|^{\sigma + \epsilon}}
\end{equation}
for all nonzero integer vectors $|\vec{n}|$.
\end{defn}
Note that this exponent is invariant under rescaling the frequencies, i.e.~$\sigma(a \vec{\omega}) = \sigma(\vec{\omega})$ for any real $a \neq 0$.

Clearly, if the frequencies are rationally related, i.e.\ there exists a nonzero integer vector $\vec{n}$ such that $\vec{n} \cdot \vec{\omega} = 0$, then $\sigma(\vec{\omega}) = \infty$. On the other hand, we have the lower bound
\begin{proposition}
For any $\vec{\omega} \in \mathbb{R}^m$, $\sigma(\vec{\omega}) \geq m-1$.
\begin{proof}
This is a higher-dimensional version of the famous Dirichlet approximation theorem~\cite{SchmidtBook}. Like that theorem, it can be proven using Minkowski's theorem~\cite{SchmidtBook}, which states that if $C$ is some bounded convex subset of $\mathbb{R}^m$, whose volume satisfies $\mu(C) > 2^m$, then $C \cap \mathbb{Z}^m$ contains at least one point other than $\vec{0}$.

 Indeed, using the rescaling invariance of $\sigma$ to set $|\vec{\omega}| = 1$, let us define $C_{R,c} = \{ \vec{x} \in \mathbb{R}^m : |\vec{x} \cdot \vec{\omega}| \leq c, |\vec{x}| \leq R \}$, then $\mu(C_{R,c}) \geq c v_m R^{m-1}$ for some constant $v_m$ which only depends on $m$. Hence, we can satisfy the condition to apply Minkowski's theorem to $C_{R,c}$ provided that $c v_m R^{m-1} > 2^m$. This means that for any $R > 0$, there exists $\vec{n} \in \mathbb{Z}^d$ nonzero with $|\vec{n}| \leq R$ such that $|\vec{\omega} \cdot \vec{n}| \leq  2^{m+1} v_m^{-1} R^{-(m-1)}$. This indeed implies that we must have $\sigma(\vec{\omega}) \geq m-1$.
\end{proof}
\end{proposition}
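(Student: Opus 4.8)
The plan is to establish the lower bound $\sigma(\vec\omega) \geq m-1$ by a direct volume-counting argument using Minkowski's convex body theorem, exactly paralleling the classical Dirichlet simultaneous approximation theorem. First I would use the rescaling invariance $\sigma(a\vec\omega) = \sigma(\vec\omega)$ noted just above to normalize $|\vec\omega| = 1$, which simplifies all the estimates. The key object to introduce is the family of convex, centrally symmetric, bounded sets
\begin{equation}
C_{R,c} = \{\vec x \in \mathbb{R}^m : |\vec x \cdot \vec\omega| \leq c,\ |\vec x| \leq R\},
\end{equation}
which is a slab of thickness $\sim c$ in the $\vec\omega$ direction intersected with a ball of radius $R$; these are visibly convex and symmetric about the origin, so Minkowski's theorem applies to them.

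The second step is to lower-bound the volume $\mu(C_{R,c})$. Geometrically $C_{R,c}$ contains a cylinder of height $2c$ along $\vec\omega$ whose cross-section is an $(m-1)$-ball of radius $\sim R$ (for $c$ small compared to $R$), so $\mu(C_{R,c}) \geq c\, v_m R^{m-1}$ for a dimensional constant $v_m$ depending only on $m$. Third, invoking Minkowski's theorem: whenever $c\, v_m R^{m-1} > 2^m$, the set $C_{R,c}$ contains a nonzero integer point $\vec n$. Choosing $c = 2^{m+1} v_m^{-1} R^{-(m-1)}$ makes the hypothesis hold (strictly), so for every $R > 0$ there is a nonzero $\vec n \in \mathbb{Z}^m$ with $|\vec n| \leq R$ and $|\vec\omega \cdot \vec n| \leq 2^{m+1} v_m^{-1} R^{-(m-1)}$.

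The final step is to convert this into a statement about the exponent $\sigma(\vec\omega)$. Given the sequence of $\vec n$'s produced by letting $R \to \infty$, we have $|\vec\omega\cdot\vec n| \lesssim |\vec n|^{-(m-1)}$ along this sequence (after checking that the $|\vec n|$ are unbounded, which holds since otherwise a fixed $\vec n$ would force $\vec\omega\cdot\vec n = 0$, handled by the rationally-related case). This means no Diophantine bound of the form $|\vec\omega\cdot\vec n| \geq c|\vec n|^{-\sigma-\epsilon}$ can hold for $\sigma + \epsilon < m-1$, i.e.\ $\sigma(\vec\omega) \geq m-1$. The main obstacle, such as it is, is purely bookkeeping: making sure the cylinder-inscription volume estimate is clean (the slab must be thin relative to $R$, which is automatic in the asymptotic regime $R\to\infty$), and being careful that the $\vec n$ extracted for distinct $R$ actually give arbitrarily large $|\vec n|$ so that the asymptotic conclusion about the exponent is legitimate. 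There is no deep difficulty here — it is the standard Minkowski-theorem proof of the pigeonhole/Dirichlet bound transported to the slab geometry.
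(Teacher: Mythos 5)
Your proposal is correct and follows the same approach as the paper: normalize $|\vec\omega|=1$, apply Minkowski's convex body theorem to the slab-times-ball region $C_{R,c}$ to extract nonzero integer points $\vec n$ with $|\vec n| \leq R$ and $|\vec\omega\cdot\vec n| \leq 2^{m+1}v_m^{-1}R^{-(m-1)}$, and then conclude the exponent bound. You are slightly more careful than the paper in spelling out the final step — explicitly handling the (degenerate) possibility that the extracted $|\vec n|$ stay bounded, which corresponds to the rationally-related case where $\sigma(\vec\omega)=\infty$; the paper simply asserts "This indeed implies $\sigma(\vec\omega)\geq m-1$" without that remark.
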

In fact, this lower bound is generically saturated, in the following precise sense.
\begin{proposition}
Let $W$ be the set of all $\vec{\omega} \in \mathbb{R}^m$ such that $\sigma(\vec{\omega}) > m-1$. Then its Lebesgue measure $\mu(W) = 0$. That is, almost all $\vec{\omega}$'s have $\sigma(\vec{\omega}) = m-1$.
\begin{proof}
We extend the proof given for the case $m=2$ in Ref.~\cite{BugeaudBook}.
By the rescaling invariance of $\sigma(\vec{\omega})$, it is sufficient to prove that $\mu(S) = 0$, where $S$ is the intersection of $W$ with $B$, the unit ball in $\mathbb{R}^m$.

Define $S_{\vec{n},c,\gamma}$ to be the set of all vectors $\vec{\omega} \in B$ such that
\begin{equation}
|\vec{\omega} \cdot \vec{n}| < c|\vec{n}|^{-\gamma}
\end{equation}
Observe that $\mu(S_{\vec{n},c,\gamma}) \leq cu_m |\vec{n}|^{-(\gamma + 1)}$, for some constant $u_m$ that only depends on $m$.
Now define $S_{\vec{c,\gamma}} = \cup_{\vec{n} \in \mathbb{Z}^m} S_{\vec{n}, c,\gamma}$. Then we see that
\begin{align}
\mu(S_{\vec{c,\gamma}}) \leq cu_m \sum_{\vec{n} \in \mathbb{Z}^m} |\vec{n}|^{-(\gamma+1)} := c \Sigma_\gamma < \infty,
\end{align}
provided that $\gamma > m-1$.
Now define $S_\gamma = \cap_{c > 0} S_{c,\gamma}$. We see that $\mu(S_\gamma) \leq c \Sigma_\gamma$ for all $c > 0$, from which we conclude that $\mu(S_\gamma) = 0$. Finally, since we can write $S = \cup_{\gamma > m-1} S_\gamma$, we conclude that $\mu(S) = 0$.
\end{proof}
\end{proposition}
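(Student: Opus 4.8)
The plan is to leverage the scale invariance $\sigma(a\vec{\omega}) = \sigma(\vec{\omega})$ noted above to reduce the statement to the unit ball $B\subset\mathbb{R}^m$: since $W$ is then a cone, $\mu(W) = \lim_{R\to\infty} \mu(W\cap RB) = \lim_{R\to\infty} R^m\,\mu(W\cap B)$, so it suffices to show $\mu(W\cap B)=0$. Inside $B$ the bad set decomposes naturally according to how large the relevant Diophantine exponent is, and for each fixed exponent $\gamma>m-1$ one controls the corresponding bad set by summing the volumes of the ``resonant slabs'' $\{\,|\vec{\omega}\cdot\vec{n}| \text{ small}\,\}$ and then letting the width of the resonance shrink to zero. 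This is the higher-dimensional transcription of the classical one-parameter argument (the case $m=2$ appears in Ref.~\cite{BugeaudBook}).

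First I would unwind the definition of $\sigma(\vec{\omega})$: if $\sigma(\vec{\omega})>m-1$, then choosing a rational $\gamma$ with $m-1<\gamma<\sigma(\vec{\omega})$, the exponent $\gamma$ fails the defining property, which on negating the nested quantifiers means there is some $\gamma'>\gamma$ such that for every $c>0$ there exists a nonzero $\vec{n}\in\mathbb{Z}^m$ with $|\vec{\omega}\cdot\vec{n}|<c|\vec{\omega}|\,|\vec{n}|^{-\gamma'}\le c|\vec{n}|^{-\gamma'}$ (using $|\vec{\omega}|\le 1$ on $B$). Accordingly, set
\[
S_{\vec{n},c,\gamma} = \{\,\vec{\omega}\in B:\ |\vec{\omega}\cdot\vec{n}|<c|\vec{n}|^{-\gamma}\,\},
\]
and $S_{c,\gamma}=\bigcup_{\vec{n}\neq 0}S_{\vec{n},c,\gamma}$, $S_\gamma=\bigcap_{c>0}S_{c,\gamma}$. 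Then $W\cap B\subseteq\bigcup_{\gamma>m-1}S_\gamma$; since $S_\gamma$ is monotone decreasing in $\gamma$, this union equals the countable union over $\gamma_k=m-1+1/k$, so it is enough to prove $\mu(S_\gamma)=0$ for each fixed $\gamma>m-1$.

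The quantitative core is then a one-line volume estimate: $S_{\vec{n},c,\gamma}$ lies in a slab of perpendicular thickness $2c|\vec{n}|^{-\gamma}/|\vec{n}|$ intersected with $B$, so $\mu(S_{\vec{n},c,\gamma})\le u_m\,c\,|\vec{n}|^{-(\gamma+1)}$ with $u_m$ twice the volume of the unit $(m-1)$-ball. Summing over the lattice,
\[
\mu(S_{c,\gamma})\ \le\ u_m\,c\sum_{\vec{n}\neq 0}|\vec{n}|^{-(\gamma+1)}\ =:\ c\,\Sigma_\gamma ,
\]
and the lattice sum $\Sigma_\gamma$ is finite exactly because $\gamma+1>m$ (grouping by $|\vec{n}|\sim R$ gives $\sum_R R^{m-1}R^{-(\gamma+1)}=\sum_R R^{m-2-\gamma}<\infty$). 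Since $S_\gamma\subseteq S_{c,\gamma}$ for every $c>0$, we conclude $\mu(S_\gamma)\le\inf_{c>0}c\,\Sigma_\gamma=0$. Taking the countable union over $\gamma_k$ shows $\mu(W\cap B)=0$, and the scaling reduction finishes the proof. (No Borel--Cantelli lemma is strictly needed, though one could equivalently phrase the last step that way.)

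I expect the only real friction to be bookkeeping rather than analysis: one must be careful when negating the triply-nested quantifiers (``smallest $\sigma$ such that for all $\epsilon$ there exists $c$\,\dots'') to land cleanly on ``$\sigma(\vec{\omega})>m-1$ implies $\vec{\omega}\in S_\gamma$ for some $\gamma>m-1$'', and one must note that the a priori uncountable union $\bigcup_{\gamma>m-1}S_\gamma$ collapses to a countable one by monotonicity before invoking countable subadditivity. The geometric slab bound and the convergence of $\Sigma_\gamma$ for $\gamma>m-1$ are elementary. One also tacitly discards the measure-zero set of rationally related $\vec{\omega}$ (where $\sigma=\infty$), which is harmless for the measure statement.
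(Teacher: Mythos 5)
Your proposal is correct and follows essentially the same route as the paper's proof: the same resonant-slab sets $S_{\vec{n},c,\gamma}$ with the volume bound $\mu(S_{\vec{n},c,\gamma})\le u_m\,c\,|\vec{n}|^{-(\gamma+1)}$, the same summation over the lattice (convergent precisely for $\gamma>m-1$), the intersection over $c$ to kill the measure of each $S_\gamma$, and a union over $\gamma\downarrow m-1$. The only differences are presentational — you spell out the quantifier negation, the countability of the union via $\gamma_k=m-1+1/k$, and the cone/scaling reduction, all of which the paper leaves implicit.
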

Finally, let us give a concrete example of a frequency vector $\vec{\omega}$ with $\sigma(\vec{\omega}) = m-1$. Indeed, we have
\begin{proposition}
Let $\vec{\omega}$ be a vector of frequencies that are not rationally related and suppose that all ratios $\omega_i/\omega_j$ are algebraic numbers, i.e.~each one is a root of some polynomial equation with integer coefficients. Then $\sigma(\vec{\omega}) = m-1$.
\begin{proof}
This is known as the \emph{subspace theorem}~\cite{SchmidtBook,Schmidt__72}.
\end{proof}
\end{proposition}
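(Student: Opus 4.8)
The plan is to prove the equality by combining the lower bound $\sigma(\vec{\omega}) \geq m-1$ established above with a matching upper bound $\sigma(\vec{\omega}) \leq m-1$, the latter being the content that genuinely uses the algebraicity hypothesis. Using the rescaling invariance of $\sigma$, I first normalize $\omega_1 = 1$, so that every $\omega_i = \omega_i/\omega_1$ is a real algebraic number; rational independence then forces $\omega_i \neq 0$ for all $i$ and $\vec{\omega}\cdot\vec{n}\neq 0$ for all $\vec{n}\in\mathbb{Z}^m\setminus\{\vec{0}\}$. I would argue by induction on $m$. The base case $m=1$ is immediate: $|\omega_1 n| = |n| \geq 1$ for nonzero integers $n$, which gives $\sigma(\omega_1)=0=m-1$.

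For the inductive step, fix $\epsilon>0$ and invoke Schmidt's subspace theorem in the form: for linearly independent linear forms $L_1,\dots,L_m$ on $\mathbb{R}^m$ with real algebraic coefficients, there are finitely many proper rational subspaces $T_1,\dots,T_h \subsetneq \mathbb{Q}^m$ such that every $\vec{n}\in\mathbb{Z}^m\setminus\{\vec{0}\}$ with $\prod_{i=1}^m |L_i(\vec{n})| < |\vec{n}|^{-\epsilon}$ lies in some $T_j$. I apply this with $L_1(\vec{x}) = \vec{\omega}\cdot\vec{x}$ and $L_i(\vec{x}) = x_i$ for $i\geq 2$; these are linearly independent because $\omega_1\neq 0$, and have algebraic coefficients. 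Any $\vec{n}$ lying in none of the $T_j$ then satisfies $|\vec{\omega}\cdot\vec{n}|\,|n_2\cdots n_m| \geq |\vec{n}|^{-\epsilon}$, and since $|n_2\cdots n_m|\leq |\vec{n}|^{m-1}$ this yields the desired estimate $|\vec{\omega}\cdot\vec{n}| \geq |\vec{n}|^{-(m-1+\epsilon)}$ with constant $1$.

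It remains to treat the finitely many exceptional subspaces. Fix $V=T_j$ with $k:=\dim V < m$, and pick a $\mathbb{Z}$-basis $\vec{b}_1,\dots,\vec{b}_k$ of the lattice $V\cap\mathbb{Z}^m$; writing $\vec{n}=\sum_\ell y_\ell\vec{b}_\ell$ with $\vec{y}\in\mathbb{Z}^k$, one has $\vec{\omega}\cdot\vec{n} = \vec{\omega}_V\cdot\vec{y}$ with $(\vec{\omega}_V)_\ell = \vec{\omega}\cdot\vec{b}_\ell$. The entries of $\vec{\omega}_V$ are integer combinations of the $\omega_i$, hence algebraic, and they are rationally independent: a relation $\sum_\ell c_\ell(\vec{\omega}\cdot\vec{b}_\ell)=0$ with integers $c_\ell$ not all zero would give $\vec{\omega}\cdot(\sum_\ell c_\ell\vec{b}_\ell)=0$ with $\sum_\ell c_\ell\vec{b}_\ell$ a nonzero integer vector, contradicting rational independence of $\vec{\omega}$. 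By the induction hypothesis $\sigma(\vec{\omega}_V)=k-1\leq m-2$, so there is $c_V>0$ with $|\vec{\omega}_V\cdot\vec{y}|\geq c_V|\vec{\omega}_V|\,|\vec{y}|^{-(k-1+\epsilon)}$. Using $|\vec{y}|\leq C_V|\vec{n}|$ for a constant $C_V$ depending only on the basis (equivalence of norms on $V$), together with $k-1+\epsilon\leq m-1+\epsilon$ and $|\vec{n}|\geq 1$, this gives $|\vec{\omega}\cdot\vec{n}|\geq c_V'\,|\vec{n}|^{-(m-1+\epsilon)}$ on $V$ for a suitable $c_V'>0$. Setting $c:=\min\{1,\ \min_j c_{T_j}'\}$ proves $|\vec{\omega}\cdot\vec{n}|\geq c\,|\vec{n}|^{-(m-1+\epsilon)}$ for all $\vec{n}\neq\vec{0}$, completing the induction and hence showing $\sigma(\vec{\omega})\leq m-1$.

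I expect the main obstacle to be the bookkeeping of the exceptional subspaces in the last step: one must check that the restriction of $\vec{\omega}\cdot\vec{x}$ to each rational subspace $V$, written in an integral basis of $V\cap\mathbb{Z}^m$, again has rationally independent algebraic coefficients (so that the induction genuinely applies in the lower dimension), and one must keep the finitely many $\epsilon$-dependent constants controlled when transporting the lower-dimensional bound back to $\mathbb{Z}^m$. The truly deep ingredient, Schmidt's subspace theorem, is used here as a black box; a self-contained proof of that input is well beyond the present scope, which is why no further detail of it is attempted.
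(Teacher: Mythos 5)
Your proof is correct, and it takes the same route the paper's one-line citation points to: Schmidt's subspace theorem is the deep input, combined with the unconditional lower bound $\sigma(\vec{\omega}) \geq m-1$ from the preceding proposition. Since the paper's ``proof'' is just a reference, your write-up simply supplies the standard derivation that the citation leaves implicit --- applying the subspace theorem to the forms $L_1(\vec x)=\vec{\omega}\cdot\vec x$, $L_i(\vec x)=x_i$ ($i\ge 2$), using $|n_2\cdots n_m|\le|\vec n|^{m-1}$ off the exceptional subspaces, and disposing of the finitely many exceptional rational subspaces by induction on dimension (after checking that the restricted frequency vector $\vec{\omega}_V$ again has rationally independent algebraic ratios); the small slips in bookkeeping (norm conventions in the subspace theorem, ``with constant $1$'', absorbing $|\vec{\omega}|$) are harmless because $\sigma$ is scale-invariant and the definition permits $c$ to depend on $\vec{\omega}$ and $\epsilon$.
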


\section{Proof of long-lived preheating regime and emergent symmetries from twisted time-translation symmetries}
\label{appendix:proof}

In this section, we provide the full, rigorous formulation of our statements on
a long-lived, preheating dynamical description of quasiperiodically-driven
systems. Our starting point is the iterative procedure described in Sec.~\ref{sec:proof}, which we will carry out up to some optimal iteration order ${q_*}$, to be computed. We will develop here the technical machinery needed to prove rigorous bounds on this iteration, and thereby prove our theorem.

\subsection{Some definitions}
We adopt the notion of a potential from Refs.~\cite{Abanin_1509}, which is a formal sum
\begin{equation}
\Phi = \sum_Z \Phi_Z,
\end{equation}
where the sum is over subsets $Z$ of the lattice. In order to analyze quasi-periodic driving, we want to consider potentials that are parameterized by the torus $\mathbb{T}^m$, that is, $\Phi(\vec{\theta})$ where $\vec{\theta} \in \mathbb{T}^m$.

It will be convenient to analyze the $\vec{\theta}$ dependence in Fourier space. Accordingly, we will define a ``colored potential'' to be a formal sum
\begin{equation}
\Phi(\vec{\theta}) = \sum_{Z,\vec{n}} \Phi_{Z,\vec{n}} e^{i\vec{\theta} \cdot \vec{n}}.
\end{equation}
where $\vec{n} \in \mathbb{Z}^m$ represents a Fourier mode.

We can define the formal commutator of colored potentials according to
\begin{equation}
[\Phi(\vec{\theta}), \Theta(\vec{\theta})] = \sum_{Z,\vec{n}, Z', \vec{n}'} [\Phi_{Z,\vec{n}}, \Theta_{Z',\vec{n}'}] e^{i\vec{\theta} \cdot (\vec{n} + \vec{n'})},
\end{equation}
where we take the commutator on the right-hand side to be supported on $Z \cup Z'$ whenever $Z$ and $Z'$ are non-disjoint (otherwise the commutator is zero). In other words,
\begin{equation}
\label{eq:colored_commutator_long}
[\Phi,\Theta]_{Z,\vec{n}} = \!\!\!\!\!\!\!\!\! \!\!\! \!\!\!   \sum_{\substack{Z_1, Z_2 : Z_1 \cup Z_2 = Z, \\\text{$Z_1$ and $Z_2$ non-disjoint}}} \sum_{\vec{n}_1, \vec{n}_2 : \vec{n}_1 + \vec{n}_2 = \vec{n}} [\Phi_{Z_1, \vec{n}_1}, \Theta_{Z_2, \vec{n}_2}].
\end{equation}
We can  simplify our notation by defining a \emph{colored set} to be a pair $\vec{Z} = (Z,\vec{n})$. We define a ``$\cup$'' operator on colored sets according to $(Z_1,\vec{n}_1) \cup (Z_2, \vec{n}_2) = (Z_1 \cup Z_2, \vec{n}_1 + \vec{n}_2)$, and we declare that two colored sets $(Z_1,\vec{n}_1)$ and $(Z_2, \vec{n}_2)$ are ``disjoint'' if and only if $Z_1$ and $Z_2$ are disjoint. Then,
 \eqnref{eq:colored_commutator_long} simply becomes
\begin{equation}
[\Phi,\Theta]_{\vec{Z}} = \sum_{\substack{\vec{Z}_1, \vec{Z}_2 : \vec{Z}_1 \cup \vec{Z}_2 = \vec{Z} \\ \text{$\vec{Z}_1$ and $\vec{Z}_2$ non-disjoint}}} [\Phi_{\vec{Z}_1}, \Theta_{\vec{Z}_2}],
\end{equation}
which is formally identical to the uncolored case. As in the uncolored case, we define the exponentiated action of one potential $\Theta$ on another $\Phi$ according to
\begin{equation}
e^\Theta \Phi e^{-\Theta} := \sum_{n=0}^{\infty} \frac{1}{n!} \mathrm{ad}_\Theta^n \Phi,
\end{equation}
where $\mathrm{ad}_\Theta \Phi := [\Theta,\Phi]$.

We also can similarly write the norm \eqnref{eq:colored_potential_norm} in this succinct notation as
\begin{equation}
\| \Phi \|_{\kappa} = \sup_x \sum_{\mathbf{Z} \ni x} e^{\kappa |\mathbf{Z}|} \| \Phi_{\mathbf{Z}} \|,
\label{eqn:appendix_norm}
\end{equation}
where the supremum is over sites $x$ on the lattice, and where we made the following (purely formal) definitions: $|(Z,\vec{n})| = |Z| + |\vec{n}|$, and $x \in (Z, \vec{n})$ if and only if $x \in Z$.

\subsection{Statement of the theorem}
Our starting point is the iterative procedure described in Sec.~\ref{subsec:proofsketch}. We can perform this iterative procedure formally at the level of colored potentials. The goal is to prove bounds on the iteration, which are encapsulated in the following theorem. We start the iteration from some colored potential $H^{(0)}(\vec\theta) = D^{(0)} + V^{(0)}(\vec\theta)$, where $D^{(0)}$ is constant (that is, $D^{(0)}_{Z,\vec{n}} = 0$ for $\vec{n} \neq \vec{0}$) and $V^{(0)}(\vec\theta)$ has zero constant component (that is, $V^{(0)}_{Z,\vec{0}} = 0$). We perform the iteration described in Sec.~\ref{subsec:proofsketch}, giving at the $q$-th iteration a colored potential $H^{(q)}(\vec\theta) = D^{(q)} + V^{(q)}(\vec\theta)$, where $D^{(q)}$ is constant and $V^{(q)}(\vec\theta)$ has zero average on $\mathbb{T}^m$. Then the theorem is as follows.
\begin{theorem}
\label{thm:mainthm}
Suppose that the driving frequency vector $\vec{\omega}$ obeys a Diophantine condition $|\vec{n} \cdot \vec{\omega}| \leq c |\vec{n}|^{-\gamma} |\vec{\omega}|$ for all integer vectors $\vec{n}$  where $\gamma > m - 1$,  and  $c$ is a constant depending on the ratios $\omega_i/\omega_j$ but not on the overall scale of $\omega = |\vec\omega|$.

We assume also that the norm of the driving frequency ${\omega}$ is large enough compared to some local energy scale $\omega_0$,  namely there exists a decay rate $\kappa_0 > 0$, such that 
\begin{align}
\omega \geq K \omega_0, \qquad {q_*} \geq 1,
\end{align}
where
\begin{align}
{q_*} := \left\lfloor K' \left(\frac{\omega}{\omega_0} \right)^{\mu} \right\rfloor
\end{align}
is the maximum iteration order. Here 
\begin{align}
\omega_0 := \max \{ \| D^{(0)} \|_{\kappa_0}, \| V^{(0)} \|_{\kappa_0} \}, \qquad \mu = \frac{1}{\gamma+1}, 
\end{align}
where the definition of the norm $\| \cdot \|_{\kappa_0} $ is given in \eqnref{eqn:appendix_norm} and $K, K'$ numerical constants that do not depend on $D^{(0)}$, $V^{(0)}$,   the geometry of the lattice, or the driving frequencies, and are given by
\begin{align}
\label{eq:muandcprime_defns}
K = \frac{2^{2+\gamma}}{\kappa_0^{2+\gamma}} a, \qquad K' =2^{-4 -2\mu + \frac{1}{1+\mu}} (\mu+1)^{-1} \kappa_0^{\mu+1} a^{-\mu},
\end{align}
with $a = c^{-1} \gamma^\gamma e^{-\gamma} \times \max\{9 \times 2^{6+\gamma}, 3 \times 2^{1+\gamma} \kappa_0 \}$.

At the ${q_*}$-th iteration define $D := D^{(q_*)}$ and $V:= V^{(q_*)}$. Then 
 we have
\begin{align}
\| V^{{(q_*)}} \|_{\kappa_{{q_*}}} &\leq \omega_0 \left(\frac{1}{2}
\right)^{{q_*}}, \\
\| D^{{(q_*)}} \|_{\kappa_{{q_*}}} &\leq 2\omega_0, \\
\| D^{{(q_*)}} - D^{(0)} \|_{\kappa_{{q_*}}} & \leq C \omega_0 \left( \frac{\omega_0}{\omega }\right),
\end{align}
for some numerical constant $C$.
Here the decay constant $\kappa_{q_*}$ is given by $\kappa_{q_*}=(\kappa_0/2)^{\mu+1} - (\mu+1) (a 2^{1/(\mu+1)})^{\mu} \lambda^{\mu} (q_*-1) \geq \kappa_0/4$.
Additionally, for any potential $Z$ we   have
\begin{align}
 \| e^{\text{ad}_{A^{(q_*-1)} }}\cdots e^{\text{ad}_{A^{(0)}   } }  Z - Z \|_{\kappa_{{q_*}}} \leq C \|Z\|_{\kappa_0} (\omega_0/\omega).
\end{align}
\end{theorem}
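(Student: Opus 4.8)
The plan is to prove the final estimate by induction on the iteration order, tracking simultaneously the norms of the renormalized Hamiltonians $H^{(q)}(\vec\theta) = D^{(q)} + V^{(q)}(\vec\theta)$ and the accumulated change of frame $W^{(q)} := e^{\mathrm{ad}_{A^{(q-1)}}} \cdots e^{\mathrm{ad}_{A^{(0)}}}$ acting on a fixed potential $Z$. The starting observation is that the Fourier-mode solution \eqnref{eqn:A} together with the Diophantine condition \eqnref{eqn:Diophantine} yields a bound of the schematic form $\|A^{(q)}\|_{\kappa'} \lesssim (\text{const}) \, \omega^{-1} \|V^{(q)}\|_{\kappa}$, where the loss of decay rate $\kappa \to \kappa' < \kappa$ is needed to absorb the factor $|\vec n|^{\gamma}$ coming from $1/(\vec\omega\cdot\vec n)$: indeed $|\vec n|^\gamma e^{-(\kappa-\kappa')|\vec n|}$ is bounded by $(\gamma/(e(\kappa-\kappa')))^\gamma$. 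This is exactly the place where the constant $a$ with its $\gamma^\gamma e^{-\gamma}$ prefactor in \eqnref{eq:muandcprime_defns} enters. First I would quote (from Appendix \ref{appendix:proof}, i.e.\ the parts of Theorem \ref{thm:mainthm} already established in this section) the inductive bounds $\|V^{(q)}\|_{\kappa_q} \leq \omega_0 2^{-q}$ and $\|D^{(q)}\|_{\kappa_q} \leq 2\omega_0$, and the associated sequence of decay rates $\kappa_q$ with $\kappa_{q_*} \geq \kappa_0/4$.

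Next I would estimate each factor $e^{\mathrm{ad}_{A^{(q)}}} - \mathrm{Id}$ in operator-potential norm. Using the standard commutator bound for potentials (if $\|A\|_{\kappa} \leq \delta$ then $\|e^{\mathrm{ad}_A} Y - Y\|_{\kappa'} \leq$ some explicit function of $\delta$ times $\|Y\|_{\kappa}$, with a controlled loss $\kappa \to \kappa'$; this is Lemma-type machinery inherited from Ref.~\cite{Abanin_1509}), one gets $\|e^{\mathrm{ad}_{A^{(q)}}} Y - Y\|_{\kappa_{q+1}} \leq c_1 \|A^{(q)}\|_{\kappa_q'} \|Y\|_{\kappa_q}$ as long as $\|A^{(q)}\|$ is below the radius-of-convergence threshold, which is guaranteed by the high-frequency assumption $\omega \geq K\omega_0$. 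Since $\|A^{(q)}\|_{\kappa_q'} \lesssim \omega^{-1}\|V^{(q)}\|_{\kappa_q} \lesssim \omega^{-1}\omega_0 2^{-q}$, the per-step multiplicative distortion of $Z$ is $1 + \epsilon_q$ with $\epsilon_q \lesssim (\omega_0/\omega) 2^{-q}$. I would then telescope: $W^{(q_*)} Z - Z = \sum_{q=0}^{q_*-1} \big( W^{(q+1)} Z - W^{(q)} Z\big)$, and bound each difference by $\|(e^{\mathrm{ad}_{A^{(q)}}} - \mathrm{Id})\, (W^{(q)}Z)\|$, controlling $\|W^{(q)} Z\|$ along the way by $\|Z\|_{\kappa_0}\prod_{j<q}(1+\epsilon_j) \leq 2\|Z\|_{\kappa_0}$ since $\sum_j \epsilon_j$ converges (geometrically, due to the $2^{-q}$). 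Summing the geometric series $\sum_q (\omega_0/\omega)2^{-q} = O(\omega_0/\omega)$ gives precisely $\|W^{(q_*)} Z - Z\|_{\kappa_{q_*}} \leq C \|Z\|_{\kappa_0} (\omega_0/\omega)$, as claimed.

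The main obstacle, and the step that needs the most care, is the bookkeeping of decay rates: each application of $e^{\mathrm{ad}_{A^{(q)}}}$ requires shrinking $\kappa$, and we must verify that the \emph{entire} product $W^{(q_*)}$, applied to $Z$ which starts at decay rate $\kappa_0$, still lands at a strictly positive decay rate — indeed at $\kappa_{q_*} \geq \kappa_0/4$. This is the same constraint that controls how large $q_*$ can be taken, and it is why $q_*$ scales as $(\omega/\omega_0)^{1/(\gamma+1)}$: the cumulative decay-rate loss is $\sum_{q} (\kappa_q - \kappa_{q+1})$, and one must check that it is dominated by the budget $\kappa_0 - \kappa_0/4 = 3\kappa_0/4$. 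I would handle this by reusing the exact choice of $\{\kappa_q\}$ already fixed in the proof of Theorem \ref{thm:mainthm} — namely $\kappa_q = (\kappa_0/2)^{\mu+1} - (\mu+1)(a 2^{1/(\mu+1)})^{\mu}\lambda^{\mu}(q-1)$ with the appropriate auxiliary scale $\lambda$ — and simply note that the loss incurred in distorting $Z$ at step $q$ can be taken to be the \emph{same} $\kappa_q \to \kappa_{q+1}$ as used for the Hamiltonian flow, so no additional decay-rate budget is consumed beyond what Theorem \ref{thm:mainthm} already accounts for. One subtlety worth flagging explicitly: $Z$ is an arbitrary potential with finite $\|Z\|_{\kappa_0}$, not one of the $H^{(q)}$, so its norm does not shrink with $q$; the convergence of $\sum_q \epsilon_q$ must come entirely from the smallness of the generators $A^{(q)}$ (i.e.\ from $\|V^{(q)}\|_{\kappa_q} \leq \omega_0 2^{-q}$), which is exactly the content we are allowed to assume. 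With that in hand the telescoping estimate closes and the bound follows with an explicit, Hamiltonian- and geometry-independent constant $C$.
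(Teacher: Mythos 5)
Your proposal reproduces the paper's overall strategy: bound $A^{(q)}$ via the Diophantine condition, invoke a Lemma~4.1--type commutator estimate at loss of decay rate, telescope the frame transformation $W^{(q)}Z = e^{\mathrm{ad}_{A^{(q-1)}}}\cdots e^{\mathrm{ad}_{A^{(0)}}}Z$, and sum a geometric series. The decay-rate bookkeeping, the role of $|\vec n|^\gamma e^{-(\kappa-\kappa')|\vec n|}\leq (\gamma/(e(\kappa-\kappa')))^\gamma$, and the remark that the same sequence $\{\kappa_q\}$ serves both the Hamiltonian flow and the distortion of $Z$ are all correct and match the paper. However, there is a genuine quantitative gap in the central estimate.

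You write ``$\|A^{(q)}\|_{\kappa_q'} \lesssim \omega^{-1}\|V^{(q)}\|_{\kappa_q} \lesssim \omega^{-1}\omega_0 2^{-q}$'' and then conclude ``$\epsilon_q \lesssim (\omega_0/\omega)2^{-q}$,'' treating the hidden constants as uniform in $q$. They are not. The $A^{(q)}$ bound carries a factor $(\kappa_q-\kappa_q')^{-\gamma}$ and the commutator lemma a further factor $((\kappa_q'-\kappa_{q+1})\kappa_{q+1})^{-1}$. With the choice of $\kappa_q$ that the paper makes, $\kappa_q-\kappa_{q+1} \sim (a\lambda/\kappa_q)^{\mu}$ where $\lambda=\omega_0/\omega$, so $(\kappa_q-\kappa_{q+1})^{\gamma+1}\sim a\lambda/\kappa_q$ and the combined prefactor in $\epsilon_q$ is $\sim 1/(a\lambda)$. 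Plugging in merely $\|V^{(q)}\|_{\kappa_q}\leq\omega_0 2^{-q}$ gives
\begin{equation}
\epsilon_q \;\sim\; \frac{1}{a\lambda}\cdot\frac{\omega_0 2^{-q}}{\omega} \;\sim\; \frac{2^{-q}}{a},
\end{equation}
with no factor of $\omega_0/\omega$. Your telescope would then yield $\|W^{(q_*)}Z-Z\|_{\kappa_{q_*}} \lesssim \|Z\|_{\kappa_0}$, which is not the claimed $O(\omega_0/\omega)$. The missing ingredient is the \emph{improved} decay of $V^{(q)}$: after the first iteration step one has $\|V^{(q)}\|_{\kappa_q} \leq w_0\,2^{-(q-1)}$ for $q\geq 1$, where $w_0 = 2\|W^{(0)}\|_{\kappa_1} = O(\omega_0^2/\omega)$ because the very first loss of decay rate, $\kappa_0-\kappa_1=\kappa_0/2$, is $O(1)$ rather than $O(\lambda^{\mu})$. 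It is this extra factor of $\omega_0/\omega$ picked up at $q=0$ — and inherited by all later $V^{(q)}$ — that produces $\epsilon_q\lesssim(\omega_0/\omega)2^{-q}$ and closes the estimate. Without separating the $q=0$ step (which has an $O(1)$ decay budget) from the $q\geq1$ steps (which have an $O(\lambda^{\mu})$ budget but a sharper bound on $V^{(q)}$), the argument as written does not deliver the $\omega_0/\omega$ factor. In addition, your proposal does not prove the first three conclusions of the theorem (the inductive bounds on $V^{(q_*)}$, $D^{(q_*)}$, and the estimate $\|D^{(q_*)}-D^{(0)}\|_{\kappa_{q_*}}\leq C\omega_0^2/\omega$); it quotes them, which is fine only if they were established before.
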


Note that in all bounds involving the decay constant $\kappa_{q_*}$, we can simply replace it by $\kappa_0/4$. Also, we can identify the local energy scale $J$ in the main text  to be $\omega_0$. Then Theorem \ref{thm:mainthm} immediately yields \eqnref{eqn:D} and \eqnref{eqn:V} in Sec.~\ref{subsec:thmstatement} of the main text.
To prove \eqnref{eq:heisenberg_difference}, one invokes Lieb-Robinson bounds, following an identical argument to Appendix D of \cite{Machado_1908} (but using the Lieb-Robinson bound for short-range interactions).
 \eqnref{eqn:conserved_energy2} follows immediately from the arguments in~\cite{Abanin_1509} and our results above.

\subsection{The proof}

Our key technical tool is the following:
\begin{lemma}
\label{lem:keylemma}
Let $\Theta$, $\Phi$ be colored potentials and assume that $3\|\Theta\|_\kappa \leq \kappa - \kappa'$, with $0 < \kappa' < \kappa$. Then
\begin{equation}
\| e^{\Theta} \Phi e^{-\Theta} - \Phi \|_{\kappa'} \leq \frac{18}{(\kappa - \kappa')\kappa'} \| \Theta \|_\kappa \| \Phi \|_\kappa
\end{equation}
and
\begin{equation}
\| e^\Theta \Phi e^{-\Theta} \|_{\kappa'} \leq \left(1 + \frac{18}{(\kappa - \kappa')\kappa'} \| \Theta \|_{\kappa} \right) \| \Phi \|_{\kappa}
\end{equation}
\begin{proof}
This is the colored potential version of Lemma 4.1 from~\cite{Abanin_1509}, and the proof follows line by line identically to the proof of that Lemma.
\end{proof}
\end{lemma}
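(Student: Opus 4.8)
The statement to prove is Lemma \ref{lem:keylemma}, the conjugation estimate for colored potentials. The plan is to transcribe the proof of Lemma~4.1 of Ref.~\cite{Abanin_1509} to the colored setting; the only real content is checking that nothing in that argument sees the Fourier labels in an essential way, and then re-running the constant bookkeeping. The reason the transcription is automatic is structural. A colored set $\mathbf Z=(Z,\vec n)$ has size $|\mathbf Z|=|Z|+|\vec n|$, and under the ``non-disjoint union'' appearing in \eqnref{eq:colored_commutator_long} one has $|\mathbf Z|\le|\mathbf Z_1|+|\mathbf Z_2|$ whenever $\mathbf Z_1\cup\mathbf Z_2=\mathbf Z$ with $\mathbf Z_1,\mathbf Z_2$ non-disjoint, since $Z=Z_1\cup Z_2$ non-disjoint gives $|Z|\le|Z_1|+|Z_2|$ and $\vec n=\vec n_1+\vec n_2$ gives $|\vec n|\le|\vec n_1|+|\vec n_2|$. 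Moreover ``non-disjoint'' refers to the \emph{spatial} parts only, so a colored union always shares an honest lattice site, and the anchoring argument of Ref.~\cite{Abanin_1509} — fix a site $x\in Z$, sum the remaining cluster through a site shared with $Z_1$ — goes through verbatim. The Fourier index is therefore an inert additive label; the exponential weight $e^{\kappa|\vec n|}$ in \eqnref{eqn:appendix_norm} makes the (now infinite) sums over $\vec n$ converge exactly as the sums over arbitrarily large clusters $Z$ do, so finiteness of $\|\Theta\|_\kappa$ and $\|\Phi\|_\kappa$ is the only input required.

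The first step is the colored single-commutator estimate: for $0<\kappa'<\kappa$,
\begin{equation}
\|[\Theta,\Phi]\|_{\kappa'}\;\le\;\frac{C_1}{(\kappa-\kappa')\,\kappa'}\,\|\Theta\|_{\kappa}\,\|\Phi\|_{\kappa},
\end{equation}
with $C_1$ universal. Starting from \eqnref{eq:colored_commutator_long}, bound each term by $\|[\Theta_{\mathbf Z_1},\Phi_{\mathbf Z_2}]\|\le 2\|\Theta_{\mathbf Z_1}\|\,\|\Phi_{\mathbf Z_2}\|$, use $e^{\kappa'|\mathbf Z|}\le e^{\kappa'|\mathbf Z_1|}e^{\kappa'|\mathbf Z_2|}$, anchor at a site $x\in Z$, and split by whether $x\in Z_1$ or $x\in Z_2$ (the cases are symmetric). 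Summing the free colored set over those sharing a site with $Z_1$ costs a combinatorial factor $|Z_1|$ and returns $\|\Phi\|_{\kappa}$; that $|Z_1|$ is then absorbed with the elementary inequality $k\le(e\alpha)^{-1}e^{\alpha k}$, with $\alpha$ a fixed fraction of the available gap, leaving a residual weight against which $\sum_{\mathbf Z_1\ni x}$ returns $\|\Theta\|_{\kappa}$. Tracking exactly how the gap $\kappa-\kappa'$ is apportioned — which is where the extra $\kappa'$ in the denominator originates — is the one genuinely fiddly computation, and it is identical to the one in Ref.~\cite{Abanin_1509}; the same argument applied to $e^{\mathrm{ad}_\Theta}$ is exactly the conjugation estimate.

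Next I would iterate. Writing $\delta=\kappa-\kappa'$ and interpolating with equal gaps, $\kappa_j=\kappa-j\delta/n$, applying the single-commutator bound $n$ times and using monotonicity $\|\Theta\|_{\kappa_j}\le\|\Theta\|_{\kappa}$, one gets $\|\mathrm{ad}_\Theta^n\Phi\|_{\kappa'}\le\bigl(C_2 n/(\delta\kappa')\bigr)^{n}\|\Theta\|_{\kappa}^{\,n}\|\Phi\|_{\kappa}$ (equal gaps being optimal up to constants by AM--GM, as in Ref.~\cite{Abanin_1509}). Then
\begin{equation}
\|e^{\Theta}\Phi e^{-\Theta}-\Phi\|_{\kappa'}\;\le\;\sum_{n\ge1}\frac{1}{n!}\,\|\mathrm{ad}_\Theta^n\Phi\|_{\kappa'}\;\le\;\|\Phi\|_{\kappa}\sum_{n\ge1}\Bigl(\frac{eC_2\|\Theta\|_{\kappa}}{\delta\kappa'}\Bigr)^{\!n},
\end{equation}
using $n^n/n!\le e^n$. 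The hypothesis $3\|\Theta\|_{\kappa}\le\kappa-\kappa'=\delta$ pushes the geometric ratio below a fixed fraction, so the series sums to at most $\tfrac{18}{(\kappa-\kappa')\kappa'}\|\Theta\|_{\kappa}\|\Phi\|_{\kappa}$ once $C_2$ has been pinned down, the final optimization onto the clean constant $18$ being purely mechanical. The second inequality is then immediate from the triangle inequality and $\|\Phi\|_{\kappa'}\le\|\Phi\|_{\kappa}$: $\|e^{\Theta}\Phi e^{-\Theta}\|_{\kappa'}\le\|\Phi\|_{\kappa}+\tfrac{18}{(\kappa-\kappa')\kappa'}\|\Theta\|_{\kappa}\|\Phi\|_{\kappa}$.

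The main obstacle is not conceptual — the argument is verbatim Ref.~\cite{Abanin_1509} — but lies in two pieces of bookkeeping: (i) carrying constants through the commutator estimate, the apportionment of the decay budget, the nested-commutator iteration, and the geometric sum, so that the final constant is exactly $18$ and the denominator is exactly $(\kappa-\kappa')\kappa'$; and (ii) verifying, step by step, that every combinatorial estimate used in Ref.~\cite{Abanin_1509} relies only on subadditivity of $|\cdot|$ under non-disjoint union and on the existence of a shared spatial anchor site — properties that, as explained above, hold unchanged for colored sets — so that introducing the Fourier labels cannot break anything.
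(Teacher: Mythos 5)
Your proposal is correct and follows essentially the same route as the paper, whose entire proof consists of the remark that the argument of Lemma~4.1 of Ref.~\cite{Abanin_1509} carries over line by line to colored potentials. Your additional verification — that $|\mathbf Z|$ is subadditive under the non-disjoint colored union, that non-disjointness refers only to the spatial parts so the site-anchoring argument is unchanged, and that the weight $e^{\kappa|\vec n|}$ makes the Fourier sums converge — is exactly the (unstated) content behind the paper's ``follows identically'' claim.
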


Armed with this Lemma, we can now proceed to analyze the iteration in a similar manner to what Ref.~\cite{Abanin_1509} did for the periodic case.
The key new aspect compared to the periodic case is that the formula for $A^{(q)}$ in terms of $V^{(q)}$, \eqnref{eqn:A}, involves a denominator which could potentially become very small. However, the Diophantine condition assumption in the theorem allows us to bound this denominator. This gives the following lemma.

\begin{lemma}
For $0 < \kappa' < \kappa$, and $\vec\omega$ obeying the Diophantine condition $|\vec{n} \cdot \vec{\omega}| \leq c |\vec{n}|^{-\gamma} |\vec{\omega}|$ with $\gamma > m-1$ for all integer vectors $\vec{n} \in \mathbb{Z}^m$ and a constant $c$, we have
\begin{align}
\| A^{(q)} \|_{\kappa'} &\leq \frac{c'}{(\kappa - \kappa')^{\gamma}} \frac{\| V^{(q)} \|_\kappa}{\omega},
\end{align}
where $c' = c^{-1} \gamma^{\gamma} e^{-\gamma}$.
\end{lemma}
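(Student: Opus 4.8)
The plan is to bound the colored-potential norm of $A^{(q)}$ by bounding, mode by mode, the extra factor $1/(\vec{\omega}\cdot\vec{n})$ that appears in \eqnref{eqn:A} relative to $V^{(q)}$, and then absorbing the polynomial growth in $|\vec{n}|$ into a loss of decay rate from $\kappa$ to $\kappa'$. Concretely, from \eqnref{eqn:A} we have $A^{(q)}_{Z,\vec{n}} = -(\vec{\omega}\cdot\vec{n})^{-1} V^{(q)}_{Z,\vec{n}}$ for $\vec{n}\neq\vec{0}$ and $A^{(q)}_{Z,\vec{0}} = 0$. Using the Diophantine condition in the form $|\vec{\omega}\cdot\vec{n}| \geq c|\vec{\omega}|\,|\vec{n}|^{-\gamma}$ we get the pointwise estimate $\|A^{(q)}_{Z,\vec{n}}\| \leq (c\omega)^{-1}|\vec{n}|^{\gamma}\|V^{(q)}_{Z,\vec{n}}\|$, where $\omega = |\vec{\omega}|$.

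The next step is to feed this into the definition \eqnref{eqn:appendix_norm}:
\begin{align*}
\| A^{(q)} \|_{\kappa'} &= \sup_x \sum_{\mathbf{Z}\ni x} e^{\kappa'|\mathbf{Z}|}\|A^{(q)}_{\mathbf{Z}}\| \\
&\leq \frac{1}{c\omega}\sup_x \sum_{Z\ni x,\,\vec{n}\neq\vec{0}} |\vec{n}|^{\gamma} e^{\kappa'(|Z|+|\vec{n}|)}\|V^{(q)}_{Z,\vec{n}}\|.
\end{align*}
Now write $e^{\kappa'(|Z|+|\vec{n}|)} = e^{\kappa(|Z|+|\vec{n}|)} e^{-(\kappa-\kappa')(|Z|+|\vec{n}|)}$ and note $e^{-(\kappa-\kappa')(|Z|+|\vec{n}|)} \leq e^{-(\kappa-\kappa')|\vec{n}|}$. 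The key elementary inequality is then $|\vec{n}|^{\gamma} e^{-(\kappa-\kappa')|\vec{n}|} \leq \sup_{t\geq 0} t^{\gamma} e^{-(\kappa-\kappa')t} = \gamma^{\gamma} e^{-\gamma}/(\kappa-\kappa')^{\gamma}$, which is obtained by calculus (the maximum of $t^\gamma e^{-st}$ is at $t=\gamma/s$ and equals $(\gamma/s)^\gamma e^{-\gamma}$). Substituting $s = \kappa-\kappa'$ gives exactly the constant $c' = c^{-1}\gamma^{\gamma}e^{-\gamma}$ once combined with the $1/c$ prefactor, and what remains in the sum is $\sup_x\sum_{\mathbf{Z}\ni x} e^{\kappa|\mathbf{Z}|}\|V^{(q)}_{\mathbf{Z}}\| = \|V^{(q)}\|_{\kappa}$. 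This yields the claimed bound $\|A^{(q)}\|_{\kappa'} \leq \frac{c'}{(\kappa-\kappa')^{\gamma}}\frac{\|V^{(q)}\|_{\kappa}}{\omega}$.

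I do not expect any serious obstacle here: the only subtlety is the bookkeeping to make sure the decay-rate loss is charged entirely to the Fourier index $|\vec{n}|$ (not to $|Z|$), so that the spatial structure of $V^{(q)}$ passes through untouched and reassembles into $\|V^{(q)}\|_\kappa$ at the original rate $\kappa$ in that variable. One should double-check that $A^{(q)}$ inherits quasilocality — which it does, since each mode $A^{(q)}_{Z,\vec{n}}$ is supported on the same set $Z$ as $V^{(q)}_{Z,\vec{n}}$ — and that the $\vec{n}=\vec{0}$ term, which is set to zero, causes no trouble (it only helps). The mild point worth stating carefully is the optimization $\max_{t\geq0} t^\gamma e^{-st} = (\gamma/(se))^\gamma$, valid for $\gamma \geq 0$; for $\gamma$ not an integer this is unchanged. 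With that in hand the lemma follows, and it plugs directly into the inductive scheme (via Lemma \ref{lem:keylemma}) that controls the growth of $\|H^{(q)}\|_{\kappa_q}$ across iterations and ultimately fixes the optimal order $q_*$.
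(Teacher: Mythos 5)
Your proof is correct and follows essentially the same route as the paper's: substitute $\|A^{(q)}_{Z,\vec n}\| = |\vec\omega\cdot\vec n|^{-1}\|V^{(q)}_{Z,\vec n}\|$, apply the Diophantine condition, and absorb the polynomial factor $|\vec n|^\gamma$ into the loss of decay rate $\kappa\to\kappa'$ via the elementary inequality $y^\gamma e^{-\epsilon y}\le(\gamma/\epsilon)^\gamma e^{-\gamma}$, yielding exactly $c'=c^{-1}\gamma^\gamma e^{-\gamma}$. The only cosmetic difference is that you factor out $e^{\kappa(|Z|+|\vec n|)}$ first and bound the leftover, whereas the paper bounds $|\vec n|^\gamma$ by an exponential in $|\vec n|$ and then recombines the exponents; these are the same computation.
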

\begin{proof}
\begin{align}
\| A^{(q)} \|_{\kappa'} &= \sup_x \sum_{Z \ni x,\va} e^{\kappa'(|Z| + |\va|)} \| A^{(q)}_{Z,\va} \|  \nonumber \\
&= \sup_x \sum_{Z \ni x, \va} e^{\kappa(|Z| + |\va|)} \frac{1}{|\vw \cdot \va|} \| V^{(q)}_{Z,{\va}} \|  \nonumber \\  
&\leq \frac{\omega^{-1} }{c } \sup_x \sum_{Z \ni x, \va} e^{\kappa'(|Z| + |\va|)} |\va|^{\gamma} \| V^{(q)}_{Z,{\va}} \|  \nonumber \\
&\leq \frac{c'\omega^{-1}}{(\kappa - \kappa')^{\gamma}} \sup_x \sum_{Z \ni x, \va} e^{\kappa'(|Z| + |\va|)} e^{(\kappa - \kappa')|\va|} \| V^{(q)}_{Z,\va} \|   \nonumber \\
&\leq \frac{c'}{(\kappa - \kappa')^{\gamma}} \frac{\| V^{(q)} \|_\kappa}{\omega},  
\label{eq:finalAbound}
\end{align}
 and we used the inequality
\begin{equation}
y^{\gamma} e^{-\epsilon y} \leq (\gamma/\epsilon)^{\gamma} e^{-\gamma}
\end{equation}
for $y, \epsilon > 0$.
\end{proof}

Now we proceed as in Ref.~\cite{Abanin_1509}. Specifically, we introduce
\begin{equation}
\gamma_q(O) = e^{-A^{(q)}} O e^{A^{(q)}},
\end{equation}
and
\begin{equation}
\alpha_q(O) = \int_0^1 ds e^{-sA^{(q)}} O e^{sA^{(q)}},
\end{equation}
the latter of which satisfies $e^{-A^{(q)}} (\vw \cdot \partial_\vt) e^{A^{(q)}} = \alpha_q( \vw \cdot \partial_\vt A^{(q)})$.
Then we see that
\begin{equation}
H^{(q+1)} = \gamma_q(D^{(q)}) + (\gamma_q(V^{(q)}) - V^{(q)}) + (\alpha_q(V^{(q)}) - V^{(q)}),
\end{equation}
from which we obtain
\begin{align}
D^{(q+1)}&= D^{(q)} + \langle W^{(q)}\rangle, \\
V^{(q+1)} &= W^{(q)} - \langle W^{(q)} \rangle,
\end{align}
(recalling $\langle \cdot \rangle$ is the time-averaging operation 
$\langle O \rangle := \frac{ \int d^m \vec\theta}{(2\pi)^m} O(\vec\theta) = O_{\vec 0}$), where
 \begin{align}
W^{(q)} & = (\gamma_q(D^{(q)}) - D^{(q)}) + (\gamma_q(V^{(q)}) - V^{(q)}) \nonumber \\
& + (\alpha_q(V^{(q)}) - V^{(q)}).
\end{align}
Let us suppose there is a sequence of strictly decreasing decay constants $\kappa_0 > \kappa_1 > \kappa_2 > \cdots >0$ (we will define this later), we then have
\begin{align}
\| D^{(q+1)}\|_{\kappa_{q+1}} &\leq \| D^{(q)} \|_{\kappa_q} + w_q/2, \\
\| V^{(q+1)} \|_{\kappa_{q+1}} &\leq w_q, \\
\|D^{(q+1)}- D^{(q)} \|_{\kappa_{q+1}} & \leq w_q/2,
\end{align}
where $w_q = 2\| W^{(q)} \|_{\kappa_{q+1}}$.
Using  Lemma \ref{lem:keylemma} and Lemma 2, we see that, provided that
\begin{equation}
\label{eq:anothercondn}
\frac{3c'}{(\kappa_q - \kappa_{q}' )^{\gamma}} \frac{\| V^{(q)} \|_{\kappa_q}}{\omega} \leq \kappa_q' - \kappa_{q+1},
\end{equation}
then we obtain

\begin{equation}
w_q \leq  \frac{36 c'(\| D^{(q)} \|_{\kappa^{(q)}} + 2 \| V^{(q)} \|_{\kappa_q})}{\kappa_{q+1} (\kappa_q' - \kappa_{q+1})(\kappa_q - \kappa_q')^{\gamma}}  \frac{\| V^{(q)} \|_{\kappa_q}}{\omega}
\end{equation}
where we have chosen $0 < \kappa_{q+1} < \kappa_q' < \kappa_q$. In particular, we   choose to let $\kappa_q' = (\kappa_q + \kappa_{q+1})/2$  and so we obtain
\begin{equation}
w_q \leq \frac{(36 \times  2^{1+\gamma)} c' (\| D^{(q)} \|_{\kappa_q} + 2 \| V^{(q)} \|_{\kappa_q}) }{\kappa_{q+1} (\kappa_q - \kappa_{q+1})^{\gamma+1}} \frac{\| V^{(q)} \|_{\kappa_q}}{\omega},
\end{equation}
while the requirement \eqnref{eq:anothercondn} becomes
\begin{equation}
\label{eq:anothercondn_simpler}
(3 \times 2^{1+\gamma}) c' \frac{\|  V^{(q)} \|_{\kappa_q}}{\omega} \leq (\kappa_q - \kappa_{q+1})^{\gamma+1}.
\end{equation}

Now we turn to the question of how we should choose the decay constants $\kappa_q$. A good way to do this is as follows (generalizing the approach of Ref.~\cite{Machado_1908}, which was an improvement on the original analysis of Ref.~\cite{Abanin_1509}). 
For notational brevity in what follows, we define $\lambda = \omega_0/\omega$, the small parameter. 

 First, let us make the induction hypothesis that for some $q$, $ \| D^{(q)} \|_{\kappa_q} \leq 2\omega_0 $, $\| V^{(q)} \|_{\kappa_q} \leq \omega_0 (1/2)^{q}$. What is the choice of $\kappa_{q+1}$ that ensures we also have $\| V_{q+1} \|_{\kappa_{q+1}} \leq \omega_0 (1/2)^{q+1}$ along with \eqnref{eq:anothercondn_simpler}. Indeed, this is so provided that
\begin{equation}
\label{eq:kappa_inequality}
\frac{a \lambda}{\kappa_{q+1}(\kappa_q - \kappa_{q+1})^{\gamma+1}} \leq 1,
\end{equation}
with $a = c' \max\{C_1, C_2 \kappa_0\}$, 
where $C_1 = (144 \times 2^{2 + \gamma})$ and $C_2 = (3 \times 2^{1+ \gamma})$. 
We would also like for $w_0$ to be `small' in the factor $1/\omega$; this suggests that we ought to set the  difference $\kappa_0 - \kappa_1$ to be independent of $\omega$. We achieve this by choosing to let 
\begin{align}
\kappa_0/\kappa_1 = 2.
\end{align}
The condition \eqnref{eq:kappa_inequality} that needs to be satisfied for $q = 0$ then reads
\begin{align}
\lambda \leq \frac{\kappa_0^{2+\gamma}}{2^{2+\gamma}} \frac{1}{a}.
\end{align}


To choose the subsequent constants $\kappa_{q \geq 2}$, intuitively, we can imagine that for small $\lambda$, $\kappa_{q \geq 1}$ varies very slowly as a function of $n$, so that we can treat $q$ as a continuous variable  so that $\kappa_{q} \mapsto k(q)$ and \eqnref{eq:kappa_inequality} becomes\\
\begin{equation}
\label{eq:kappa_inequality_diff}
\frac{a \lambda}{k(q) (-dk(q)/dq)^{\gamma+1}} \leq 1.
\end{equation}
If we choose \eqnref{eq:kappa_inequality_diff} to be saturated, this gives the differential equation
\begin{equation}
\frac{dk(q)}{dq} = -a^{\mu} \left(\frac{\lambda}{k(q)}\right)^\mu, \quad \mu = \frac{1}{\gamma + 1}
\end{equation}
with solution
\begin{equation}
\label{kappan}
k(q)^{\mu+1} = k(1)^{\mu+1} - (\mu+1) a^{\mu} \lambda^\mu (q-1).
\end{equation}

Armed with this intuition, we \emph{define} the decay rates $\kappa_q$  we shall employ, as the values obtained from evaluating the following generalized function $\kappa(q)$   at integer values $q \geq 1$
\begin{align}
& \kappa_q := \kappa(q) \text{ for } q= 1,2,3, \cdots \text{ where }\nonumber \\
& \kappa(q)^{\mu+1}:= \kappa_1^{\mu+1} - (\mu+1) b^{\mu} \lambda^{\mu} (q-1) \text{ for } q \in \mathbb{R}.
\end{align}
We remark that the decay rate is only physically meaningful and mathematically useful when $\kappa_q > 0$, and we will find conditions later on that restrict us to this range. 
Note also the  different constant $b$ from $a$ of \eqnref{kappan}, which will allow us to correct for the fact that we actually want to satisfy the finite difference equation \eqnref{eq:kappa_inequality} rather than the differential equation \eqnref{eq:kappa_inequality_diff}. We then find that $\kappa(q)$ satisfies the differential equation with $a$ replaced with $b$, i.e.
\begin{equation}
\frac{d\kappa}{dq} = -b^{\mu} \left(\frac{\lambda}{\kappa}\right)^\mu, \quad \mu = \frac{1}{\gamma + 1}.
\end{equation}
Since $\kappa(q)$ is a concave down function, we have that
\begin{equation}
\label{eq:diffineq}
\kappa_q - \kappa_{q+1} \geq -\kappa'(q) = b^{\mu} (\lambda/\kappa_q)^{\mu},
\end{equation}
and moreover
\begin{equation}
\left(\kappa_{q+1}/\kappa_q\right)^{\mu+1} = 1 - \frac{(\mu+1) b^{\mu} \lambda^{\mu}}{\kappa_q^{\mu + 1}} \geq 1 - \frac{2^{\mu+1} (\mu+1) b^{\mu} \lambda^{\mu}}{\kappa_1^{\mu + 1}},
\end{equation}
assuming that we continue the iteration only while $\kappa_q \geq \kappa_1/2 = \kappa_0/4$.
Therefore, so long as we impose the condition
\begin{equation}
\label{eq:somecondn}
\frac{ 2^{\mu+1}  (\mu+1) b^\mu \lambda^{\mu}}{\kappa_1^{\mu+1}} \leq 1/2,
\end{equation}
then we have
\begin{equation}
\label{eq:ratioineq}
(\kappa_{q+1}/\kappa_q)^{\mu + 1} \geq 1/2.
\end{equation}
Combining \eqnref{eq:diffineq} and \eqnref{eq:ratioineq}, we obtain
\begin{equation}
\frac{b \lambda}{\kappa_{q+1}(\kappa_q - \kappa_{q+1})^{\gamma+1}} \leq 2^{1/(\mu+1)}
\end{equation}
and hence ensure that \eqnref{eq:kappa_inequality} is satisfied, if we choose $b = a 2^{1/(\mu+1)}$.

Thus, if we define the maximum iteration order  $q^*$ as 
\begin{align}
{q_*} := \lfloor 2^{-4 - 2\mu + \frac{1}{1 + \mu} } (\mu+1)^{-1}  \kappa_0^{\mu+1} a^{-\mu}  \lambda^{-\mu}  \rfloor
\end{align}
and we demand that $\lambda$ (the  parameter relating  the local energy scale to the driving frequency) is small enough so that
\begin{align}
{q_*} \geq 1,
\end{align}
(otherwise the theorem is trivial anyway), 
then \eqnref{eq:somecondn} is automatically satisfied.

Note also that the definition of the maximum iteration order ${q_*}$   indeed ensures that the iteration continues only while $\kappa_q > 0$ (in fact, $ \kappa_q^{\mu+1} \geq (3/4)\kappa_1^{\mu+1}$ for $q \leq {q_*}$ so in particular $\kappa_q \geq \kappa_1/2 = \kappa_0/4$ in agreement with our assumption).

To ensure that the induction hypothesis on $\| D^{(q)} \|_{\kappa_q}$ remains satisfied, let us note that \eqnref{eq:kappa_inequality} also ensures that 
\begin{align}
& \| D^{(q+1)}\|_{\kappa_{q+1}} \leq \| D^{(q)} \|_{\kappa_q} + \frac{1}{2} \omega_0 \left(\frac{1}{2}\right)^q 
\end{align}
and summing over $q$ ensures $\| D^{(q)} \|_{\kappa_q} \leq 2 \omega_0$.
\\

Let us now prove the bounds on $\|D^{(q_*)}- D^{(0)}\|_{\kappa_{{q_*}}}$ and on the potential.
We first use that
\begin{align}
 \| D^{(q+1)}- D^{(q)} \|_{\kappa_{{q_*}}} & \leq \| D^{(q+1)} - D^{(q)} \|_{\kappa_{q}} \leq \frac{1}{2} \left(\frac{1}{2} \right)^{q} w_0  \nonumber \\  
& \leq C  e^{-c q} \omega_0 \left(\frac{\omega_0}{\omega}\right).
\end{align}
Summing the LHS up to ${q_*}$ then gives the desired result, redefining the constant $C$.

Next, 
\begin{align}
\| e^{\text{ad}_{A^{(q)}}} \cdots e^{\text{ad}_{A^{(0)}}} Z \|_{\kappa_{q+1}} & \leq \| Z \|_{\kappa_0} \prod_{j=0}^q \left(1 + C \left(\frac{\omega_0}{\omega} \right) e^{-c j} \right) \nonumber \\
& \leq C \| Z \|_{\kappa_0},
\end{align}
with redefined $C$,
using  Lemma 1 and Lemma 2  and that 
\begin{align}
& \left( 1 + \frac{ 2^{1 + \gamma} \times 18 c' }{ \kappa_{q+1} (\kappa_{q-1} - \kappa_q)^{1+\gamma} } \frac{\| V^{(q)} \|_{\kappa_q}}{\omega} \right) \nonumber \\
&  \leq \left( 1 + C' \frac{\kappa_q}{\kappa_{q+1} } \frac{\|V^{(q)}\|_{\kappa_q} }{b \lambda  \omega } \right) \nonumber \\ 
& \leq \left( 1 + C ({\omega_0}/{\omega} ) e^{-cq} \right),
\end{align}
with $(\kappa_q/\kappa_{q+1}) <  2^{1/(\mu+1)}$  and $\|V^{(q)}\|_{\kappa_q} \leq w_0 (1/2)^{q-1} \leq d e^{-c q} \omega_0 ({\omega_0}/{\omega} ) $.

Then, repeating the proof of Theorem 2.1 in~\cite{Abanin_1509}, defining 
\begin{align}
E_{\kappa_{q+1}} := e^{\text{ad}_{A^{(q)}}} \cdots e^{\text{ad}_{A^{(0)}}} Z - Z,
\end{align}
we have
\begin{align}
\| E_{\kappa_{q+1}} - E_{\kappa_q} \| \leq C (\omega_0/\omega )e^{-c q} 
\end{align}
which after summing yields
\begin{align}
& \| e^{\text{ad}_{A^{({q_*}-1)} }} \cdots e^{\text{ad}_{A^{(0)}}} Z - Z \|_{\kappa_{{q_*}}} = \| E_{\kappa_{{q_*}}} \|_{\kappa_{q*}} \nonumber \\
& \leq C \|Z\|_{\kappa_0} (\omega_0/\omega).
\end{align}
This concludes the proof of the Theorem. \qed

\bibliography{ref-manual,ref-autobib}

\end{document}